\let\oldr@@t\r@@t
\def\r@@t#1#2{%
	\setbox0=\hbox{$\oldr@@t#1{#2\,}$}\dimen0=\ht0
	\advance\dimen0-0.2\ht0
	\setbox2=\hbox{\vrule height\ht0 depth -\dimen0}%
	{\box0\lower0.4pt\box2}}
\LetLtxMacro{\oldsqrt}{\sqrt}
\renewcommand*{\sqrt}[2][\ ]{\oldsqrt[#1]{#2}}
\theoremstyle{definition}
\newtheorem{theorem}{Theorem}[section]
\newtheorem{corollary}[theorem]{Corollary}
\newtheorem{definition}{Definition}[section]
\newtheorem{lemma}[theorem]{Lemma}
\newtheorem{proposition}[theorem]{Proposition}
\newtheorem{remark}[theorem]{Remark}
\numberwithin{equation}{section} % requires amsmath
\def\@seccntformat#1{\@ifundefined{#1@cntformat}%
	{\csname the#1\endcsname\quad}%      default
	{\csname #1@cntformat\endcsname}%    enable individual control
}
\newif\ifShowComments
\def\strutdepth{\dp\strutbox}
\def\druk#1{\strut\vadjust{\kern-\strutdepth
        {\vtop to \strutdepth{%
                \baselineskip\strutdepth\vss
                        \llap{\hbox{#1}\quad}\null}}}}
\title{\bf
%Bivariate Distributions on the Unit Square: Theoretical Properties and Parameter Estimation
Bivariate distributions on the unit square: Theoretical properties and applications
}
\author[1, 2]{  Roberto Vila \thanks{rovig161@gmail.com}}
\author[2]{Narayanaswamy Balakrishnan  \thanks{bala@mcmaster.ca
} }
\author[1]{Helton Saulo  \thanks{heltonsaulo@gmail.com} }
\author[1]{ \\ Peter Zörnig \thanks{peter@unb.br}}
\affil[1]{Department of Statistics, University of
	Bras\'ilia, Bras\'ilia, Brazil}
\affil[2]{
	Department of Mathematics and Statistics, McMaster University, Hamilton, Ontario, Canada}
\begin{document}
	\maketitle 	
	\begin{abstract}
		{
We introduce the bivariate unit-log-symmetric model based on the bivariate log-symmetric distribution (BLS) defined in \cite{Vila2022} as a flexible family of bivariate distributions over the unit square. We then study its mathematical properties such as stochastic representations, quantiles, conditional distributions, independence of the marginal distributions and moments. Maximum likelihood estimation method is discussed and examined through Monte Carlo simulation. Finally, the proposed model is used to analyze soccer data.
		}

	\end{abstract}
	\smallskip
	\noindent
	{\small {\bfseries Keywords.} {Bivariate unit-log-symmetric distribution $\cdot$ Bivariate log-symmetric distribution $\cdot$ Bivariate model $\cdot$  MCMC $\cdot$ Proportion data $\cdot$ Soccer data $\cdot$ Maximum likelihood estimation.}}
	\\
	{\small{\bfseries Mathematics Subject Classification (2010).} {MSC 60E05 $\cdot$ MSC 62Exx $\cdot$ MSC 62Fxx.}}
	
%	\tableofcontents
	
	\section{Introduction}
	%\label{sec:1}
	\noindent
	
Bivariate distributions over the unit-square have been discussed in detail in the literature. Many of them are based on beta distribution and its generalizations; see \cite{ArnoldNg2011} and \cite{Nadarajahetal2017}. Models of this type have been studied since the 1980s. Some other distributions on the unit square are based on generalized arcsine and inverse Gaussian distributions. A recent model, the bivariate unit-sinh-normal distribution, is based on the bivariate Birnbaum-Saunders distribution; see \cite{Martinez2022}. Bivariate distributions over the unit square arise naturally in comparing indices, rates or proportions in the interval $(0, 1)$.
	
In this paper, we study the bivariate unit-log-symmetric (BULS) distribution defined over the unit-square, obtained as a modification of the bivariate log-symmetric (BLS) distribution introduced by \cite{Vila2022}. The definitions of BLS and BULS distributions are given in Section \ref{Sec:2}, along with some special cases of BULS. In Section \ref{Sec:3}, we discuss some properties of the new model, including a stochastic representation, marginal quantiles, and the conditional distributions of BULS. We derive more compact formulas for the conditional densities, using the distribution functions of normal, Student-$t$, hyperbolic, Laplace and slash distributions.
One of the uses of having  closed formulas for the conditional densities (of the BULS model), for example, is in studying Heckman-type selection models \citep{Heckman1979} when the selection variables have bounded support. In addition, we derive the distribution of the squared Mahalanobis distance of a random vector $\boldsymbol{W}=(W_1,W_2)^\top$ with BULS distribution, and present a necessary condition for the independence of the components of $\boldsymbol{W}$ and formulas for the moments of $W_1$ and $W_2$. In Section \ref{Sec:5}, the log-likelihood function and the likelihood equations for the BULS distribution are presented. In Section \ref{Sec:4}, we carry out a Monte Carlo simulation study to evaluate the performance of the ML estimators by means of their bias, root mean square error and coverage probability. In Section \ref{Sec:6}, we present two applications to soccer data. Specifically, in Section \ref{Sec:6.1}, we model the vector $\boldsymbol{W}=(W_1,W_2)^\top$, where $W_1$ represents the time elapsed until a first kick goal (of any team) and the time elapsed until a goal of any type of the home team, and show that specific BULS distributions are suitable for modelling $\boldsymbol{W}$. In Section \ref{Sec:6.2}, we consider the data of 2022 FIFA World Cup wherein the components of the vector $\boldsymbol{W}$ represent the pass completion proportions of medium passes (14 to 18 meters) and long passes (longer than 37 meters). We then demonstrate that these data can also be fitted well by BULS distributions.

%	This paper is organized as follows...

	\section{Bivariate unit-log-symmetric model}\label{Sec:2}
	\noindent
	In this section, we describe the bivariate unit-log-symmetric model (BULS). To define this model, we first need to describe the bivariate log-symmetric distribution (BLS) defined in \cite{Vila2022}.
	
	\subsection{BLS distribution}
	Following \cite{Vila2022}, a continuous random vector $\boldsymbol{T}=(T_1,T_2)^\top$ is said to have a bivariate log-symmetric (BLS) distribution if its joint probability density function (PDF) is given by
	\begin{eqnarray}\label{PDF}
	f_{T_1,T_2}(t_1,t_2;\boldsymbol{\theta})
	=
	{1\over t_1t_2\sigma_1\sigma_2\sqrt{1-\rho^2}Z_{g_c}}\,
	g_c\Biggl(
	{\widetilde{t_1}^2-2\rho\widetilde{t_1}\widetilde{t_2}+\widetilde{t_2}^2
		\over 
		1-\rho^2}
	\Biggr),
	\quad 
	t_1, t_2>0,
	%\\[0,3cm]
\end{eqnarray}
where
$\widetilde{t_i}
	=
	\log[({t_i/\eta_i})^{1/\sigma_i}], \ \eta_i=\exp(\mu_i), \ i=1,2$,
	with $\boldsymbol{\theta}=(\eta_1,\eta_2,\sigma_1,\sigma_2,\rho)^{\top}$ being the parameter vector, $\mu_i\in\mathbb{R}$, $\sigma_i>0$, $i=1,2$ and $\rho\in(-1,1)$. %is Pearson’s correlation coefficient
	Furthermore, $Z_{g_c}>0$ is the partition function, that is,
	\begin{align}\label{partition function}
	Z_{g_c}
	&=
	\int_{0}^{\infty}\int_{0}^{\infty}
	{1\over t_1t_2\sigma_1\sigma_2\sqrt{1-\rho^2}}\,
	g_c\Biggl(
	{\widetilde{t_1}^2-2\rho\widetilde{t_1}\widetilde{t_2}+\widetilde{t_2}^2
		\over 
		1-\rho^2}
	\Biggr)\, {\rm d}t_1{\rm d}t_2
	=
	\pi \int_{0}^{\infty}  g_c(u)\,{\rm d}u,
	\end{align}
	% Para obter a forma final da partição, tome a seguinte transformação:
	% $Z_1=$\widetilde{T}_1$ e
	% $Z_2=${1\over\sqrt{1-\rho^2}}(\widetilde{T}_2-\rho\widetilde{T}_1)$. 
	and $g_c$ is a scalar function referred to as the density generator \cite[see][]{Fang1990}. The second integral in \eqref{partition function} is consequence of a change of variables; for more details, see Proposition 3.1 of \cite{Vila2022}.
	When a random vector $\boldsymbol{T}$ is BLS distributed, with parameter vector $\boldsymbol{\theta}$, we denote it by $\boldsymbol{T}\sim {\rm BLS}(\boldsymbol{\theta},g_c)$.

	\subsection{BULS distribution}\label{The BULS distribution}
	We say that a continuous random vector $\boldsymbol{W}=(W_1,W_2)^\top$ has a bivariate unit-log-symmetric (BULS) distribution with parameter vector $\boldsymbol{\theta}=(\eta_1,\eta_2,\sigma_1,\sigma_2,\rho)^{\top}$, denoted by $\boldsymbol{W}\sim{\rm BULS}(\boldsymbol{\theta},g_c)$, if its PDF is, for $0<w_1, w_2<1$, given by
	\begin{eqnarray}\label{PDF-BULS}
	\!\!f_{W_1,W_2}(w_1,w_2;\boldsymbol{\theta})
	=
	{1\over 
		(1-w_1)t_1\sigma_1 (1-w_2)t_2\sigma_2 \sqrt{1-\rho^2}Z_{g_c}}\,
	{\displaystyle 
		g_c\Biggl(
		{\widetilde{w}_1^2-2\rho\widetilde{w}_1\widetilde{w}_2+\widetilde{w}_2^2
			\over 
			1-\rho^2}
		\Biggr)
	},
\end{eqnarray}
where $\widetilde{w}_i
	=
	\log[({t_i/ \eta_i})^{1/\sigma_i}],  
	\ t_i=-\log(1-w_i), \ \eta_i=\exp(\mu_i), \ i=1,2$,
	with 
%	$\bigotimes_{i=1}^n a_i=a_1\cdots a_n$ denoting the usual product operation of $n$ real numbers $a_1, \ldots,a_n$; $\mu_i\in\mathbb{R}$, 
	$\sigma_i>0$, $i=1,2$, $\rho\in(-1,1)$, and $Z_{g_c}$ and ${g_c}$ are as given in \eqref{partition function}. We shall prove later that the BULS PDF in \eqref{PDF-BULS} is obtained by taking $W_i=1-\exp(-T_i)$, $i=1,2$, with $=(T_1,T_2)^{\top}\sim {\rm BLS}(\boldsymbol{\theta},g_c)$.

	%In this paper we prove that, when it exists, the variance-covariance matrix of a random vector $\boldsymbol{T}\sim {\rm BLS}(\boldsymbol{\theta},g_c)$, denoted by $K_{\boldsymbol{T}}$, is a matrix function of the following dispersion matrix 
	%	%(or scale matrix)
	%% $\Sigma=[({\rm Cov}(\widetilde{T_i}, \widetilde{T_j}))]$ 
	%	(see Subsections \ref{moments} and \ref{Correlation function}):
	%%	In other words,
	%\begin{align*}
	%\boldsymbol{\Sigma}=
	%	\begin{pmatrix}
	%	\sigma_1^2 & \rho\sigma_1\sigma_2
	%	\\
	%	\rho\sigma_1\sigma_2 & \sigma_2^2
	%	\end{pmatrix}.
	%\end{align*}
	%In other words, $K_{\boldsymbol{T}}=\psi(\boldsymbol{\Sigma})$ for some matrix function $\psi: \mathcal{M}_{2,2}\longmapsto \mathcal{M}_{2,2}$, where $\mathcal{M}_{2,2}$ denotes the set of all 2-by-2 real matrices.
	
	%Based on the references Saulo et al. (2017) \cite{Saulo2017} and Vanegas and Paula (2016) \cite{Vanegas2016}), 
	Table \ref{table:1} presents some examples of bivariate unit-log-symmetric distributions.
	\begin{table}[H]
		\caption{Partition functions $(Z_{g_c})$ and density generators $(g_c)$ for some BULS distributions.}
		\vspace*{0.15cm}
		\centering 
%		\resizebox{\linewidth}{!}{
			\begin{tabular}{llll} 
				\hline
				Distribution 
				& $Z_{g_c}$ & $g_c$ & Parameter 
				\\ [0.5ex] 
				\noalign{\hrule }
				Bivariate unit-log-normal%{\color{red}*}
				& $2\pi$ & $\exp(-x/2)$ & $-$ 
				\\ [1ex] 
%				Bivariate Unit-log-Kotz type 
%				& ${\pi\Gamma(\zeta/\delta)\over\delta\lambda^{\zeta/\delta}}$  & $x^{\zeta-1}\exp(-\lambda x^\delta)$  &   $\delta>0,\lambda>0,\zeta>0$
%				\\ [1ex] 
%				Bivariate Unit-log-contaminated normal
%				& $2\pi({1\over\sqrt{\vartheta_2}}+{1\over \vartheta_1}-1)$  & $\sqrt{\vartheta_2}\exp(-\frac{1}{2}\vartheta_2 x)+\frac{(1-\vartheta_1)}{\vartheta_1}
%				\exp(-\frac{1}{2} x)$  &  $0<\vartheta_1,\vartheta_2<1$ 
%				\\ [1ex]
				Bivariate unit-log-Student-$t$%{\color{red}*}
				& ${{\Gamma({\nu/ 2})}\nu\pi\over{\Gamma({(\nu+2)/ 2})}}$  
				& $(1+{x\over\nu})^{-(\nu+2)/ 2}$  &  $\nu>0$
				\\ [1ex]
%				Bivariate Unit-log-Pearson Type VII 
%				& ${\Gamma(\xi-1)\theta\pi\over\Gamma(\xi)}$  & $(1+{x\over\theta})^{-\xi}$  & $\xi>1$, $\theta>0$
%				\\ [1ex]
				Bivariate unit-log-hyperbolic%{\color{red}*}
				& ${2\pi (\nu+1)\exp(-\nu)\over \nu^2}$ & $\exp(-\nu\sqrt{1+x})$ &  $\nu>0$
				\\ [1ex]   
				Bivariate unit-log-Laplace%{\color{red}*} 
				& $\pi$  & $K_0(\sqrt{2x})$ & $-$
				\\ [1ex]   
				Bivariate unit-log-slash%{\color{red}*}
				& ${\pi\over q}\, 2^{2-q\over 2}$ & $ x^{-{q+2\over 2}} \gamma({q+2\over 2},{x\over 2})$ & $q>0$
				\\ [1ex]   
%				Bivariate Unit-log-power-exponential
%				& ${2^{\xi+1}(1+\xi)\Gamma(1+\xi)}\pi$  & ${\exp\bigl(-{1\over 2}\, x^{1/(1+\xi)}\bigr)}$ & $-1<\xi\leqslant 1$
%				\\ [1ex]   
%				{Bivariate Unit-log-Logistic}
%				& $\pi/2$  & {${\exp(-x)\over (1+\exp(-x))^2}$} &  {$-$}
%				\\ [1ex] 
				%	Bivariate Log-sinh-normal* 
				%	& {\color{red}${\sqrt{2\pi}\, \nu\over 2}$}   & $\cosh(\sqrt{x})\exp\left(-\frac{2}{\nu^2}\sinh^2(\sqrt{x}) \right)$ &  $\nu>0$
				%		\\ [1ex] 
				%		Bivariate Log-sinh-$t$*
				%	& {\color{red}${\alpha\Gamma(\nu/2)\sqrt{\nu\pi}\over 2\Gamma((\nu+1)/2)}$}  & $\cosh(\sqrt{x})\big(1+{4\sinh^2(\sqrt{x})\over \nu\alpha^2 }\big)^{-\frac{\nu+1}{2}} $ & $\alpha,\nu>0$
				%		\\ [1ex] 	
				\hline	
		\end{tabular}
	%}
		\label{table:1} % is used to refer this table in the text
	\end{table}
	%normal, Student-t, contaminated normal, power exponential, symmetric hyperbolic, slash, sinh-normal and sinh-t
	\noindent
	In Table \ref{table:1}, $\Gamma(t)=\int_0^\infty x^{t-1} \exp(-x) \,{\rm d}x$, $t>0$, is the complete gamma function,
	$K_\lambda(u)=(1/2)(u/2)^\lambda\int_0^\infty t^{-\lambda-1} \exp(-t-{u^2/ 4t}) \,{\rm d}t$, $u>0$, is the 
%	Bessel function of the third kind
	%modified Bessel function of the second kind
	modified Bessel function
	of the third kind with index $\lambda$
	%(also called Hankel, or Weber function). 
	\cite[see Appendix of ][]{Kotz2001},
	%%; see Krantz (1999) \cite{Krantz1999}. 
	and $\gamma(s,x)=\int_{0}^{x}t^{s-1}\exp(-t)\,{\rm d}t$ is the lower incomplete gamma function. 
	%Moreover, $\textrm{Ei}(x)$ is the exponential function and $_pF_q(a_1,\ldots,a_p;b_1,\ldots,b_q;z)$ is the generalized hypergeometric function.
	%
	
	Let $\boldsymbol{T}=(T_1,T_2)^{\top}\sim {\rm BLS}(\boldsymbol{\theta},g_c)$.
	From \eqref{PDF-BULS}, it is clear that the random vector $\boldsymbol{X}=(X_1,X_2)^{\top}$, with
	\begin{align}\label{id-X-W}
	X_i=\log(T_i)=\log\big[-\log(1-W_i)\big], 
	%\quad \text{and} \quad T_i=-\log(1-W_i), 
	\quad i=1,2,
	\end{align}
	has a bivariate elliptically symmetric (BSY) distribution  \cite[see p. 592 in][]{Bala2009}; that is, the PDF of $\boldsymbol{X}$ is
	\begin{eqnarray}\label{PDF-symmetric}
	f_{X_1,X_2}(x_1,x_2;\boldsymbol{\theta}_*)
	=
	{1\over \sigma_1\sigma_2\sqrt{1-\rho^2}Z_{g_c}}\,
	g_c\Biggl(
	{\widetilde{x_1}^2-2\rho\widetilde{x_1}\widetilde{x_2}+\widetilde{x_2}^2
		\over 
		1-\rho^2}
	\Biggr),
	\quad 
	-\infty<x_1,x_2<\infty,
\end{eqnarray}
where $\widetilde{x_i}={(x_i-\mu_i)/\sigma_i}, \ i=1,2$,
	with $\boldsymbol{\theta}_*=(\mu_1,\mu_2,\sigma_1,\sigma_2,\rho)$ being the parameter vector and $Z_{g_c}$ is the partition function defined in \eqref{partition function}. In this case, we shall use the notation $\boldsymbol{X}\sim {\rm BSY}(\boldsymbol{\theta}_*, g_c)$.
	
	It is a simple task to observe that the joint cumulative distribution function (CDF) of $\boldsymbol{W}\sim {\rm BULS}(\boldsymbol{\theta},g_c)$, denoted by $F_{W_1,W_2}(w_1,w_2;\boldsymbol{\theta})$, is given by
	\begin{align*}
	F_{W_1,W_2}(w_1,w_2;\boldsymbol{\theta})
	&=
	F_{T_1,T_2}\big(-\log(1-w_1), -\log(1-w_2);\boldsymbol{\theta}\big)
	\\[0,2cm]
	&=
	F_{X_1,X_2}\big(\log[-\log(1-w_1)],\log[-\log(1-w_2)];\boldsymbol{\theta}_*\big),
	\end{align*}
	wherein $F_{T_1,T_2}(t_1,t_2;\boldsymbol{\theta})$ and $F_{X_1,X_2}(x_1,x_2;\boldsymbol{\theta}_*)$ denote the CDFs of $\boldsymbol{T}\sim {\rm BLS}(\boldsymbol{\theta},g_c)$ and $\boldsymbol{X}\sim {\rm BES}(\boldsymbol{\theta}_*, g_c)$, respectively. Note that there is no closed form expression for the CDF of $\boldsymbol{X}$ with the exception of bivariate normal.

	%\begin{figure}[H]
	%\small
	%\centering
	%\subfigure[]{\includegraphics[scale=0.55]{figures/fig1.eps}}
	%\subfigure[]{\includegraphics[scale=0.55]{figures/fig1.eps}}
	%\caption{Plots ...}
	%\label{fig:densityBSrep}
	%\end{figure}

	\section{Some basic properties of the model} \label{Sec:3}
	\noindent
	
	In this section, some mathematical properties of the bivariate unit-log-symmetric distribution are established.
	
	\subsection{Stochastic representation}\label{st-rep}
	
	\begin{proposition}\label{Stochastic Representation}
		The random vector $\boldsymbol{W}=(W_1,W_2)^\top$ has a BULS distribution if
		\begin{align*}
		\begin{array}{lllll}
		&W_1=1-\exp\big[-\eta_1 \exp(\sigma_1 Z_1)\big],
		\\[0,4cm]
		&W_2=
		1-\exp\big[-\eta_2 
		\exp\big(\sigma_2 {\rho} Z_1+\sigma_2\sqrt{1-\rho^2} Z_2\big)\big],
		\end{array}
		\end{align*} 
		where $Z_1=RDU_1$ and $Z_2=R\sqrt{1-D^2}U_2$ with
		$U_1$, $U_2$, $R$, and $D$ being mutually independent random variables, $\rho\in(-1,1)$, $\eta_i=\exp(\mu_i)$, and $\mathbb{P}(U_i = -1) = \mathbb{P}(U_i = 1) = 1/2$, $i=1,2$. The random variable $D$ is positive  and has PDF
		$
		f_D(d)={2/(\pi\sqrt{1-d^2})}, \ d\in(0,1).
		$
		Further, the positive random variable $R$ 
		%is called the generator of the elliptical random vector $\boldsymbol{X}=(X_1,X_2)$. 
		%{In other words, $R$ 
		has its PDF as
		$
		f_R(r)={2r g_c(r^2)/\int_{0}^{\infty}
			g_c(u)
			\, {\rm d}{u}}, \ r>0.
		$
		%}
	\end{proposition}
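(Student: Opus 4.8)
The plan is to reduce the statement to the correspondence recorded just before the proposition: since $w\mapsto\log[-\log(1-w)]$ is a smooth increasing bijection of $(0,1)$ onto $\mathbb{R}$, the change of variables relating \eqref{PDF-BULS} and \eqref{PDF-symmetric} is a smooth bijection, so $\boldsymbol{W}\sim{\rm BULS}(\boldsymbol{\theta},g_c)$ if and only if $\boldsymbol{X}=(X_1,X_2)^\top$ with $X_i=\log[-\log(1-W_i)]$ satisfies $\boldsymbol{X}\sim{\rm BSY}(\boldsymbol{\theta}_*,g_c)$. Inverting the formulas in the proposition, $T_i=-\log(1-W_i)$ gives $T_1=\eta_1\exp(\sigma_1 Z_1)$ and $T_2=\eta_2\exp(\sigma_2\rho Z_1+\sigma_2\sqrt{1-\rho^2}Z_2)$, whence, using $\eta_i=\exp(\mu_i)$,
\[
X_1=\mu_1+\sigma_1 Z_1,\qquad X_2=\mu_2+\sigma_2\big(\rho Z_1+\sqrt{1-\rho^2}\,Z_2\big).
\]
Thus it suffices to show that this affine image of $\boldsymbol{Z}=(Z_1,Z_2)^\top$ has the elliptically symmetric PDF in \eqref{PDF-symmetric}.

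The first main step is to identify the law of $\boldsymbol{Z}$. I would change variables from $(R,D,U_1,U_2)$ to $(Z_1,Z_2)$, using mutual independence to write the joint density of $(R,D)$ as $f_R(r)f_D(d)$. Since the signs $(U_1,U_2)$ place $(Z_1,Z_2)$ in each of the four quadrants with probability $1/4$, I would compute on the positive quadrant (where $U_1=U_2=1$) and invoke symmetry for the other three. There the map $(r,d)\mapsto(rd,\,r\sqrt{1-d^2})$ is a bijection with $r^2=z_1^2+z_2^2$ and $d=z_1/r$, and Jacobian of modulus $r/\sqrt{1-d^2}$. Substituting $f_R(r)=2rg_c(r^2)/\int_0^\infty g_c(u)\,{\rm d}u$ and $f_D(d)=2/(\pi\sqrt{1-d^2})$, the factors $\sqrt{1-d^2}$, $r$ and the constants cancel against the Jacobian and the sign probability $1/4$, leaving
\[
f_{Z_1,Z_2}(z_1,z_2)=\frac{g_c(z_1^2+z_2^2)}{\pi\int_0^\infty g_c(u)\,{\rm d}u}=\frac{1}{Z_{g_c}}\,g_c(z_1^2+z_2^2),
\]
the last equality by the second form of $Z_{g_c}$ in \eqref{partition function}. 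Equivalently, $(DU_1,\sqrt{1-D^2}\,U_2)$ is uniform on the unit circle and $\boldsymbol{Z}=R\,\boldsymbol{U}$ is the standard radial representation of a spherical vector with generator $g_c$.

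The second step is routine. Writing $\boldsymbol{X}=\boldsymbol{\mu}+\mathbf{A}\boldsymbol{Z}$ with the lower-triangular matrix
\[
\mathbf{A}=\begin{pmatrix}\sigma_1 & 0\\ \sigma_2\rho & \sigma_2\sqrt{1-\rho^2}\end{pmatrix},\qquad |\det\mathbf{A}|=\sigma_1\sigma_2\sqrt{1-\rho^2},\qquad \mathbf{A}\mathbf{A}^\top=\begin{pmatrix}\sigma_1^2 & \sigma_1\sigma_2\rho\\ \sigma_1\sigma_2\rho & \sigma_2^2\end{pmatrix},
\]
a change of variables turns the spherical density of $\boldsymbol{Z}$ into $\frac{1}{|\det\mathbf{A}|\,Z_{g_c}}\,g_c\big((\boldsymbol{x}-\boldsymbol{\mu})^\top(\mathbf{A}\mathbf{A}^\top)^{-1}(\boldsymbol{x}-\boldsymbol{\mu})\big)$. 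A direct computation shows the quadratic form equals $(\widetilde{x_1}^2-2\rho\widetilde{x_1}\widetilde{x_2}+\widetilde{x_2}^2)/(1-\rho^2)$ with $\widetilde{x_i}=(x_i-\mu_i)/\sigma_i$, so this is precisely \eqref{PDF-symmetric}; hence $\boldsymbol{X}\sim{\rm BSY}(\boldsymbol{\theta}_*,g_c)$ and therefore $\boldsymbol{W}\sim{\rm BULS}(\boldsymbol{\theta},g_c)$.

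I expect the main obstacle to be the bookkeeping in the first step: correctly handling the four sign configurations and confirming that the arcsine-type density $f_D$ is exactly what makes the angular part uniform on the circle (equivalently, what cancels the Jacobian factor $r/\sqrt{1-d^2}$) so that only $g_c$ and the constant $\pi\int_0^\infty g_c(u)\,{\rm d}u$ survive. Once the spherical law of $\boldsymbol{Z}$ is secured, both the affine-transformation step and the reduction back to $\boldsymbol{W}$ are standard.
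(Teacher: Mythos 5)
Your proof is correct, but it takes a more self-contained route than the paper. The paper's proof is essentially two lines: it cites Proposition 3.2 of \cite{Vila2022} for the stochastic representation \eqref{rep-stoch-biv-gaussian} of the BLS vector $\boldsymbol{T}$, and then applies the monotone transformation $W_i=1-\exp(-T_i)$, $i=1,2$, from \eqref{id-X-W}. You instead unpack what that citation contains: you verify directly that the polar-type representation $Z_1=RDU_1$, $Z_2=R\sqrt{1-D^2}\,U_2$ yields the spherical joint density $f_{Z_1,Z_2}(z_1,z_2)=g_c(z_1^2+z_2^2)/Z_{g_c}$ (your Jacobian computation, the cancellation of the arcsine density against $r/\sqrt{1-d^2}$, and the factor $1/4$ from the four sign configurations are all correct, and the constant matches the second form of $Z_{g_c}$ in \eqref{partition function}); you then push this through the affine map with $\mathbf{A}\mathbf{A}^\top$ as stated to land on \eqref{PDF-symmetric}, and finally through the increasing bijection $w\mapsto\log[-\log(1-w)]$ to land on \eqref{PDF-BULS}. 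What your approach buys is independence from the external reference: in particular, your first step is exactly the identity \eqref{pdf-Zs} that the paper itself only cites (from Eq.\ (13) of \cite{Saulo2022}) when proving Lemma \ref{Prop-dual-1}, so your argument proves that ingredient rather than assuming it. What the paper's approach buys is brevity, delegating the polar and affine computations to previously published results. One cosmetic remark: you establish an if-and-only-if correspondence between BULS for $\boldsymbol{W}$ and BSY for $\boldsymbol{X}$, though only the forward implication is needed for the proposition as stated; this is harmless since the transformation is a smooth bijection.
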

	\begin{proof}
		It is well-known that \cite[see Proposition 3.2 of][]{Vila2022} the random vector $\boldsymbol{T}=(T_1,T_2)^\top$ has a BLS distribution if
		\begin{align}\label{rep-stoch-biv-gaussian}
		\begin{array}{lllll}
		&T_1=\eta_1 \exp(\sigma_1 Z_1),
		\\[0,4cm]
		&T_2=\eta_2 
		\exp\big(\sigma_2 {\rho} Z_1+\sigma_2\sqrt{1-\rho^2} Z_2\big).
		\end{array}
		\end{align} 
		Moreover, from \eqref{id-X-W}, $W_i=1-\exp(-T_i)$, $i=1,2$. Hence, the result.
	\end{proof}
	
%	As an immediate consequence of the above proposition, we obtain the following characterization of the BULS random vector $\boldsymbol{W}$.
%	\begin{proposition}
%		Under the hypotheses of Proposition \ref{Stochastic Representation}, 
%		the random vector $\boldsymbol{W}=(W_1,W_2)$ follows a BULS distribution if
%		\begin{align*}
%		&W_1
%		=
%		%1-\exp\big[-\eta_1 \exp(\sigma_1 Z_1)\big]
%		%=
%		1-F_{\rm Gumbel}(-\mu_1\sigma_1Z_1),
%		\\[0,2cm]
%		&W_2
%		=
%		%1-\exp\big[-\eta_2 
%		%\exp\big(\sigma_2 {\rho} Z_1+\sigma_2\sqrt{1-\rho^2} Z_2\big)\big]
%		%=
%		1-F_{\rm Gumbel}(-\mu_2\sigma_2[{\rho} Z_1+\sqrt{1-\rho^2} Z_2]),
%		\end{align*} 
%		where $F_{\rm Gumbel}(x)=\exp(-\exp(-x))$ is the CDF of the standard Gumbel distribution.
%	\end{proposition}

The following lemma provides a slight simplification in the representation of Proposition \ref{Stochastic Representation}. This result plays a fundamental role in the next subsections, since all the probabilistic characteristics that depend on the distribution of $\rho Z_1+\sqrt{1-\rho^2}Z_2$ will be simplified since it has the same distribution as $Z_2$.

\begin{lemma}\label{Prop-dual-1}
	For a Borelian subset $B$ of $(0,1)$, we have
	\begin{align*}
	\mathbb{P}\left(\rho Z_1+\sqrt{1-\rho^2}Z_2\in {B}\right)
	=
	\mathbb{P}(Z_2\in B).
	\end{align*}
	In other words, $\rho Z_1+\sqrt{1-\rho^2}Z_2$ and $Z_2$  have the same distribution.
	%	where $\mathbb{P}(Z_2\in B_0)=	\int_{B_0} 
	%	f_{Z_2}(s_2) {\rm d}s_2$ and $f_{Z_2}$ denotes the marginal PDF of $Z_2$.
\end{lemma}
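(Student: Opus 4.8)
The plan is to prove the stronger fact that the vector $(Z_1,Z_2)^\top$ is \emph{rotationally invariant}, and then to read off the lemma by choosing one convenient rotation. Because $\rho^2 + \big(\sqrt{1-\rho^2}\big)^2 = 1$, the matrix
\[
Q = \begin{pmatrix} \sqrt{1-\rho^2} & -\rho \\[0.15cm] \rho & \sqrt{1-\rho^2} \end{pmatrix}
\]
is orthogonal with $\det Q = 1$, and the second coordinate of $Q(Z_1,Z_2)^\top$ is exactly $\rho Z_1 + \sqrt{1-\rho^2}Z_2$. Once rotational invariance gives $Q(Z_1,Z_2)^\top \stackrel{d}{=} (Z_1,Z_2)^\top$, comparing the second marginals yields $\rho Z_1 + \sqrt{1-\rho^2}Z_2 \stackrel{d}{=} Z_2$, which is the assertion. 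So the whole statement reduces to rotational invariance of $(Z_1,Z_2)^\top$.

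To establish that, I would first show $(Z_1,Z_2)^\top = R\,(\cos\Phi,\sin\Phi)^\top$ with $\Phi$ uniform on $(0,2\pi)$ and independent of $R$. Setting $A = U_1 D$ and $B = U_2\sqrt{1-D^2}$, one has $A^2+B^2 = D^2 + (1-D^2) = 1$, so $(Z_1,Z_2)^\top = R(A,B)^\top$ with $(A,B)$ supported on the unit circle and independent of $R$. The decisive step is to check that $(A,B)$ is uniform on the circle. The symmetric signs $U_1,U_2$ put $(A,B)$ into each quadrant with probability $1/4$, so it suffices to treat the first quadrant, where the polar angle equals $\arccos D$; under the substitution $d = \cos\phi$ the density $f_D(d) = 2/\big(\pi\sqrt{1-d^2}\big)$ transforms into
\[
f_D(\cos\phi)\,\sin\phi = \frac{2}{\pi\sqrt{1-\cos^2\phi}}\,\sin\phi = \frac{2}{\pi}, \qquad \phi \in \Big(0,\tfrac{\pi}{2}\Big),
\]
a constant, so the angle is uniform on the first-quadrant arc, and the four-fold sign symmetry spreads it uniformly over $(0,2\pi)$. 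Since $R$ is independent of $(U_1,U_2,D)$, hence of $\Phi$, the representation $(Z_1,Z_2)^\top = R(\cos\Phi,\sin\Phi)^\top$ with $R\perp\Phi$ follows, and rotating by any angle $\alpha$ merely sends $\Phi \mapsto \Phi+\alpha \pmod{2\pi}$, leaving the uniform law of $\Phi$ unchanged; this is precisely rotational invariance.

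I expect the only genuinely substantive step to be the verification that $(A,B)$ is uniform on the circle: this is exactly where the arcsine-type density of $D$ is indispensable, since its factor $1/\sqrt{1-d^2}$ is the one that cancels the Jacobian $\sin\phi$ of $d=\cos\phi$ and leaves a constant angular density. Everything after that is bookkeeping. If one prefers to avoid polar coordinates, the conclusion can also be reached with characteristic functions: rotational invariance is equivalent to $\mathbb{E}\,e^{i(sZ_1+tZ_2)}$ depending on $(s,t)$ only through $s^2+t^2$, and evaluating at $(s,t)=(t\rho,\,t\sqrt{1-\rho^2})$ versus $(0,t)$ gives the common value $s^2+t^2 = t^2$, so the characteristic functions of $\rho Z_1+\sqrt{1-\rho^2}Z_2$ and of $Z_2$ coincide.
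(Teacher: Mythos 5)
Your proof is correct, and it takes a genuinely different route from the paper. The paper argues analytically: it writes the density of $\rho Z_1+\sqrt{1-\rho^2}\,Z_2$ as the convolution-type integral \eqref{int-density-sum}, substitutes the joint density $f_{Z_1,Z_2}(x,y)=g_c(x^2+y^2)/Z_{g_c}$ from \eqref{pdf-Zs}, completes the square in the argument of $g_c$, and makes the substitution $s_1=(z-\rho s_2)/\sqrt{1-\rho^2}$ to recognize the marginal density of $Z_2$. You instead prove the stronger structural fact that $(Z_1,Z_2)^\top$ is spherically symmetric, working directly from the stochastic representation $Z_1=RDU_1$, $Z_2=R\sqrt{1-D^2}\,U_2$: your computation $f_D(\cos\phi)\sin\phi=2/\pi$ is correct and is indeed the crux, since the arcsine factor $1/\sqrt{1-d^2}$ exactly cancels the Jacobian of $d=\cos\phi$, and the four-fold sign symmetry of $(U_1,U_2)$ then makes the angle uniform on $(0,2\pi)$, independent of $R$. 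Each approach buys something: the paper's integral is shorter and needs only the joint PDF (quoted from the literature), whereas yours bypasses that formula entirely, yields the more general conclusion $aZ_1+bZ_2\stackrel{d}{=}Z_2$ for every $(a,b)$ with $a^2+b^2=1$, and explains \emph{why} the arcsine law of $D$ appears in Proposition \ref{Stochastic Representation}. It is also worth noting that the two proofs are secretly the same rotation: the paper's completing-the-square identity and change of variables is precisely the analytic shadow of applying your orthogonal matrix $Q$. One small caution: your closing characteristic-function remark is a reformulation rather than an independent shortcut, since the fact that $\mathbb{E}\,e^{i(sZ_1+tZ_2)}$ depends on $(s,t)$ only through $s^2+t^2$ must itself be derived either from the radial joint density or from the rotational invariance you have just proved; as the main argument is already complete, this does not affect correctness.
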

\begin{proof}
	It is clear that the density of $\rho Z_{1}+\sqrt{1-\rho^2}{Z_{2}}$ is related to the joint density $f_{Z_{1}, Z_{2}}$ by
	\begin{align}\label{int-density-sum}
	f_{\rho Z_{1}+\sqrt{1-\rho^2}{Z_{2}}}(s_2)
	=
	{1\over \sqrt{1-\rho^2}}
	\int_{-\infty}^\infty
	f_{Z_{1}, Z_{2}}\left(z, \dfrac{s_2-\rho z}{\sqrt{1-\rho^2}}\right) {\rm d}z.
	\end{align}
	From Eq. (13) of \cite{Saulo2022}, the joint PDF of $Z_1$ and $Z_2$ is given by
	\begin{align}\label{pdf-Zs}
	f_{Z_1,Z_2}(x,y)={1\over Z_{g_c}}\, g_c(x^2+y^2), \quad -\infty<x,y<\infty,
	\end{align}
	and so the integral in \eqref{int-density-sum} is
	\begin{align}\label{int-H}
	=
	{1\over \sqrt{1-\rho^2}Z_{g_c}}
	\int_{-\infty}^\infty
	g_c\left(z^2 +\bigg(\dfrac{s_2-\rho z}{\sqrt{1-\rho^2}}\bigg)^2\right)  
	{\rm d}z.
	\end{align}
	Using the identity
	\begin{align*}
	z^2 +\bigg(\dfrac{s_2-\rho z}{\sqrt{1-\rho^2}}\bigg)^2
	=
	\dfrac{z^2-2\rho zs_2+s_2^2}{1-\rho^2}
	=
	\bigg(\dfrac{z-\rho s_2}{\sqrt{1-\rho^2}}\bigg)^2+s_2^2
	\end{align*}
	the integral in \eqref{int-H} is written as 
	\begin{align*}
	=
	{1\over \sqrt{1-\rho^2}Z_{g_c}}
	\int_{-\infty}^\infty
	g_c\left(\bigg(\dfrac{z-\rho s_2}{\sqrt{1-\rho^2}}\bigg)^2+s_2^2\right) 
	{\rm d}z.
	\end{align*}
	Making the change of variables
	$s_1={(z-\rho s_2)}/{\sqrt{1-\rho^2}}$,
	the above integral is
	\begin{align*}
	=
	{1\over Z_{g_c}}\,
	\int_{-\infty}^\infty
	g_c (s_1^2+s_2^2)
	\, {\rm d}s_1
	=
	\int_{-\infty}^\infty
	f_{Z_1,Z_2}(s_1,s_2)\, {\rm d}s_1,
	\end{align*}
	where, in the last line, we have used \eqref{pdf-Zs}. Hence,
	\begin{align}\label{identity-sum}
	f_{\rho Z_{1}+\sqrt{1-\rho^2}{Z_{2}}}(s_2)
	=
	\int_{-\infty}^\infty
	f_{Z_1,Z_2}(s_1,s_2)\, {\rm d}s_1
	=
	f_{Z_2}(s_2).
	\end{align}
	
	Now, from \eqref{identity-sum}, it is clear that  $\rho Z_1+\sqrt{1-\rho^2}Z_2$ and $Z_2$ are equal in distribution.
	%	\begin{align*}
	%	\mathbb{P}\big(\rho Z_1+\sqrt{1-\rho^2}Z_2\in {B}\big)
	%	=
	%	\displaystyle	
	%	\int_{B}
	%	f_{\rho Z_{1}+\sqrt{1-\rho^2}{Z_{2}}}(s_2) \,  {\rm d}s_2
	%	{=}
	%	\displaystyle	
	%	\int_{B} 
	%	f_{Z_2}(s_2) \,  {\rm d}s_2
	%	=
	%	\mathbb{P}(Z_2\in B).
	%	\end{align*}
	%	This proves the required identity. 	
\end{proof}

	\subsection{Marginal Quantiles}\label{marg-quant}
	
	Given $p\in(0,1)$, let $Q_{W_i}(p)$ be the $p$-quantile  of $W_i$, for $i=1,2$.
	By using the stochastic representation in Proposition \ref{Stochastic Representation}, for $\boldsymbol{W}=(W_1,W_2)^\top\sim {\rm BULS}(\boldsymbol{\theta},g_c)$, we have
	\begin{align*}
	p=
	\mathbb{P}(W_1\leqslant Q_{W_1}(p))
	&=
	\mathbb{P}\big(1-\exp\big[-\eta_1 \exp(\sigma_1 Z_1)\big]\leqslant Q_{W_1}(p)\big)
	\\[0,2cm]
	&=
	\mathbb{P}\left(Z_1\leqslant \log\biggl[\left(-{ \log(1-Q_{W_1}(p))\over \eta_1}\right)^{1/\sigma_1}\biggl]\right)
	\end{align*}
	and 
	\begin{align*}
	p=\mathbb{P}(W_2\leqslant Q_{W_2}(p))
	&=
	\mathbb{P}\left(
	1-\exp\left[-\eta_2 
	\exp\left(\sigma_2 {\rho} Z_1+\sigma_2\sqrt{1-\rho^2} Z_2\right)\right]
	\leqslant Q_{W_2}(p)
	\right) 
	\\[0,2cm]
	&=
	\mathbb{P}\left(
	{\rho} Z_1+\sqrt{1-\rho^2} Z_2
	\leqslant 
	\log\biggl[\left(-{\log(1-Q_{W_2}(p))\over \eta_2}\right)^{1/\sigma_2}\biggl]\right).
	\end{align*}
	Hence, the $p$-quantiles $Q_{Z_1}(p)$ and $Q_{Z_2}(p)$ of $Z_1$ and $Z_2$, respectively, are such that
	\begin{align*}
	\log\biggl[\left(-{ \log(1-Q_{W_1}(p))\over \eta_1}\right)^{1/\sigma_1}\biggl]
	=
	Q_{Z_1}(p) 
	\end{align*}
	{and}
	\begin{align*}
	\log\biggl[\left(-{\log(1-Q_{W_2}(p))\over \eta_2}\right)^{1/\sigma_2}\biggl]
	=
	Q_{{\rho} Z_1+\sqrt{1-\rho^2} Z_2}(p)
	=
	Q_{Z_2}(p),
	\end{align*}
	where in the last equality we used that $\rho Z_1+\sqrt{1-\rho^2}Z_2$ and $Z_2$ have the same distribution (see Lemma \ref{Prop-dual-1}).
	Hence, the $p$-quantiles $Q_{W_1}(p)$ and $Q_{W_2}(p)$ are given by
	\begin{align*}
	Q_{W_1}(p)&=1-\exp\big[-\eta_1 \exp(\sigma_1 Q_{Z_1}(p))\big],
	\\[0,2cm]
	Q_{W_2}(p)&=
	1-\exp\big[-\eta_2 
	\exp\big(\sigma_2 Q_{Z_2}(p)\big)\big],
	\end{align*}
	respectively.

	\subsection{Conditional distributions}\label{cond-dist}
	
	%\begin{proposition}\label{joint and marginal pdfs}
	%	The joint PDF of $Z_1$ and $Z_2$, given  in Proposition \ref{Stochastic Representation}, is given by
	%	\begin{align*}
	%	f_{Z_1,Z_2}(z_1,z_2)={g_c(z_1^2+z_2^2)\over \pi \int_{0}^{\infty}
	%		g_c(u)
	%		\, {\rm d}{u}}, \quad -\infty<z_1,z_2<\infty.
	%	\end{align*}
	%	Moreover, the marginal PDFs of $Z_1$ and $Z_2$, denoted by $f_{Z{_1}}$ and $f_{Z{_2}}$, respectively, are given by
	%	\begin{align*}
	%	f_{Z{_1}}(z_1)=
	%	{{\displaystyle\int_{\vert z_1\vert}^{\infty}} {2 g_c(w^2)\over \sqrt{1-{z_1^2\over w^2}}}\, {\rm d}{w} 
	%		\over \pi \int_{0}^{\infty}
	%		g_c(u)
	%		\, {\rm d}{u}}
	%	\quad \text{and} \quad
	%		f_{Z{_2}}(z_2)=
	%	{{\displaystyle
	%		\int_{\vert z_2\vert }^{\infty}} {2 g_c(w^2)\over \sqrt{1-{z_2^2\over w^2}}}\, {\rm d}{w}
	%		\over \pi \int_{0}^{\infty}
	%		g_c(u)
	%		\, {\rm d}{u}}, \quad -\infty<z_1,z_2<\infty.
	%	\end{align*}
	%	
	%In particular, $f_{Z_i}(z_i\vert Z_j=z_j)$ ($i\neq j$) and $f_{Z_i}(z_i)$, $i,j=1,2$, are even functions.	
	%\end{proposition}
	%\begin{proof}
	%For more details on the proof, see Propositions 3.1 and 3.2 of reference Saulo et al. (2022)  \cite{Saulo2022}.
	%\end{proof}

Before enunciating and proving the main result (Theorem \ref{theo-pdf-cond}) of this subsection, we prove the following technical lemma which will be indispensable in the sequel.
	
	\begin{lemma}\label{conditional PDF}
		If $\boldsymbol{W}=(W_1,W_2)^\top\sim {\rm BULS}(\boldsymbol{\theta},g_c)$, then
		the PDF of $W_2 \,\vert\, (W_1=w_1)$ is given by
		\begin{align}\label{cond-pdf}
		f_{W_2}(w_2\,\vert\, W_1=w_1)
		=
		{1\over (1-w_2)t_2\sigma_2 \sqrt{1-\rho^2}}\,
		{
			f_{Z_2}\biggl(
			{1\over\sqrt{1-\rho^2}}\, (\widetilde{w}_2
			-
			\rho \widetilde{w}_1) \, \bigg\vert\, Z_1=\widetilde{w}_1 
			\biggr)
		},
		\end{align}
		where $\widetilde{w}_i$, $i=1,2$, and $t_2$ are as defined in \eqref{PDF-BULS}, and  ${Z_1}$ and ${Z_2}$  are as given in Proposition \ref{Stochastic Representation}.
	\end{lemma}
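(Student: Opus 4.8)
The plan is to reduce the computation to a one-dimensional change of variables by exploiting the stochastic representation of Proposition \ref{Stochastic Representation}. The crucial structural observation is that $W_1=1-\exp[-\eta_1\exp(\sigma_1 Z_1)]$ depends on $Z_1$ alone, and that this map is a strictly increasing smooth bijection from $\mathbb{R}$ onto $(0,1)$. Consequently the event $\{W_1=w_1\}$ corresponds to the event $\{Z_1=\widetilde{w}_1\}$: inverting the relation gives $-\log(1-w_1)=\eta_1\exp(\sigma_1 Z_1)$, hence $Z_1=\log[(t_1/\eta_1)^{1/\sigma_1}]=\widetilde{w}_1$ with $t_1=-\log(1-w_1)$. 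Thus conditioning $W_2$ on $W_1=w_1$ is the same as conditioning $W_2$ on $Z_1=\widetilde{w}_1$, and I would work throughout with the latter.

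First, I would fix $Z_1=\widetilde{w}_1$ and regard $W_2=1-\exp[-\eta_2\exp(\sigma_2\rho\widetilde{w}_1+\sigma_2\sqrt{1-\rho^2}Z_2)]$ as a function of the single variable $Z_2$. Since $\sigma_2>0$ and $\sqrt{1-\rho^2}>0$, this map is again strictly increasing and smooth, so it admits a differentiable inverse. Solving $w_2=1-\exp[-\eta_2\exp(\cdots)]$ for $Z_2$ yields $\rho\widetilde{w}_1+\sqrt{1-\rho^2}Z_2=\widetilde{w}_2$, and therefore
\[
Z_2=\frac{\widetilde{w}_2-\rho\widetilde{w}_1}{\sqrt{1-\rho^2}},
\]
which is precisely the argument appearing in \eqref{cond-pdf}.

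Next, I would apply the univariate change-of-variables formula to the conditional law. With the substitution $z_2=(\widetilde{w}_2-\rho\widetilde{w}_1)/\sqrt{1-\rho^2}$ in force (in which $\widetilde{w}_1$ is held constant, as it depends only on $w_1$),
\[
f_{W_2}(w_2\,\vert\, W_1=w_1)=f_{Z_2}\!\left(z_2\,\vert\, Z_1=\widetilde{w}_1\right)\left|\frac{\mathrm{d}z_2}{\mathrm{d}w_2}\right|.
\]
The Jacobian is then computed by the chain rule through the intermediate variables $t_2=-\log(1-w_2)$ and $\widetilde{w}_2=\sigma_2^{-1}\log(t_2/\eta_2)$: differentiating gives $\mathrm{d}t_2/\mathrm{d}w_2=1/(1-w_2)$ and $\mathrm{d}\widetilde{w}_2/\mathrm{d}t_2=1/(\sigma_2 t_2)$, so that $\mathrm{d}z_2/\mathrm{d}w_2=(1-\rho^2)^{-1/2}\,\mathrm{d}\widetilde{w}_2/\mathrm{d}w_2=1/[(1-w_2)t_2\sigma_2\sqrt{1-\rho^2}]$. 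Substituting this factor into the change-of-variables identity reproduces \eqref{cond-pdf} exactly.

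The only genuinely delicate point is the first step: rigorously justifying that the conditional distribution of $W_2$ given $W_1=w_1$ coincides with the conditional distribution of $W_2$ given $Z_1=\widetilde{w}_1$. Since $Z_1\mapsto W_1$ is a measurable bijection with measurable inverse, the two random variables generate the same $\sigma$-algebra, which secures this identification at the level of regular conditional distributions; alternatively one can verify the formula directly by forming $f_{W_1,W_2}/f_{W_1}$ from \eqref{PDF-BULS} and recognizing the Jacobian factors, although the representation-based route has the advantage of avoiding an explicit computation of the marginal $f_{W_1}$. Once this identification is granted, every remaining step is an elementary one-dimensional differentiation.
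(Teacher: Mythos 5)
Your proposal is correct and follows essentially the same route as the paper: both exploit the stochastic representation to identify the event $\{W_1=w_1\}$ with $\{Z_1=\widetilde{w}_1\}$ and then transform the conditional law of $Z_2$ given $Z_1=\widetilde{w}_1$ through the monotone map $z_2\mapsto w_2$, the paper doing so by writing the conditional CDF $F_{W_2}(w_2\,\vert\,W_1=w_1)=\mathbb{P}\bigl(Z_2\leqslant(\widetilde{w}_2-\rho\widetilde{w}_1)/\sqrt{1-\rho^2}\,\big\vert\,Z_1=\widetilde{w}_1\bigr)$ and differentiating in $w_2$, which is exactly your change-of-variables Jacobian $1/[(1-w_2)t_2\sigma_2\sqrt{1-\rho^2}]$ made explicit. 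Your added remark justifying the conditioning identification via $\sigma(W_1)=\sigma(Z_1)$ is a welcome touch of rigor that the paper leaves implicit, but it does not change the argument.
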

	\begin{proof}
		If $W_1=w_1$, then
		$
		Z_1= \log\big[\left(-{ \log(1-w_1)/ \eta_1}\right)^{1/\sigma_1}\big]
		=
		\widetilde{w}_1
		$.
		So, the conditional distribution of $W_2$, given $W_1=w_1$, is the same as the distribution of
		\begin{align*}
		1-\exp\left[-\eta_2 
		\exp\left(\sigma_2 {\rho} \widetilde{w}_1+\sigma_2\sqrt{1-\rho^2} Z_2\right)\right]\, \bigg\vert\, W_1=w_1.
		\end{align*}
		Consequently, 
		\begin{align*}
		F_{W_2}(w_2\, \vert\, W_1=w_1)
		&=
		\mathbb{P}\left(
		1-\exp\left[-\eta_2 
		\exp\left(\sigma_2 {\rho} \widetilde{w}_1+\sigma_2\sqrt{1-\rho^2} Z_2\right)\right]\leqslant w_2
		\, \bigg\vert\, W_1=w_1
		\right)
		\\[0,2cm]
		&=
		\mathbb{P}\biggl(
		Z_2\leqslant {1\over\sqrt{1-\rho^2}}\, (\widetilde{w}_2-{\rho} \widetilde{w}_1)\, \bigg\vert\, Z_1= \widetilde{w}_1\biggr).
		\end{align*}
		Then, by differentiating $F_{W_2}(w_2\, \vert\, W_1=w_1)$ with respect to $w_2$, \eqref{cond-pdf} is readily obtained.
		%\begin{align*}
		%	f_{T_2\vert T_1}(t_2\vert t_1)
		%	=
		%	{
		%		{1\over t_2\sigma_2 \sqrt{1-\rho^2}}\,
		%	f_{Z_1,Z_2}\Big(
		%	\widetilde{t}_1, {1\over\sqrt{1-\rho^2}}\,\widetilde{t}_2
		%	-
		%	{\rho\over\sqrt{1-\rho^2}}\, \widetilde{t}_1 
		%	\Big)
		%	\over
		%	f_{Z_1}\big(\,\widetilde{t}_1\big)
		%	}.
		%\end{align*}
	\end{proof}

%Lemma \ref{conditional PDF} plays a fundamental role in the proof of the following theorem. 
The following result 
provides a simple formula for determining the conditional distribution of $W_1$, given $W_2\in B$, whenever the marginal and conditional distributions of $\boldsymbol{W}=(W_1,W_2)^\top\sim {\rm BULS}(\boldsymbol{\theta},g_c)$ are known.
	This result is essential for studying Heckman-type selection models \citep{Heckman1979} when the selection variables have unitary support.
	\begin{theorem}\label{theo-pdf-cond}
		For a Borelian subset $B$ of $(0,1)$, let us define the following Borelian set:
		\begin{align}\label{B-child}
		{B}_r
		=
		{1\over\sqrt{1-r^2}}\,
		\log\biggl[\left(-{\log(1-B)\over \eta_2}\right)^{1/\sigma_2}\biggr]
		-
		{r\over\sqrt{1-r^2}}\, \widetilde{w}_1, 
		\quad -1<r<1, 
		\end{align}
		where $\widetilde{w}_1$ is as in \eqref{PDF-BULS}.
		If $\boldsymbol{W}\sim {\rm BULS}(\boldsymbol{\theta},g_c)$, then the PDF of $W_1\,\vert\, (W_2\in B)$ is given by
		\begin{align*}
		f_{W_1}(w_1\, \vert\, W_2\in B)
		=
		{1\over (1-w_1)t_1 \sigma_1}\,f_{Z_1}(\widetilde{w}_1)\, 
		{
		\mathbb{P}(Z_2\in {B}_\rho\, \vert\, Z_1=\widetilde{w}_1 )
			\over 
			\mathbb{P}(Z_2\in B_0)
		},
		\end{align*}
		in which 
%		$\mathbb{P}(Z_2\in {B}_\rho\, \vert\, Z_1=\widetilde{w}_1 )=		\int_{{B}_\rho}
%		f_{Z_2}(
%		z \, \vert\, Z_1=\widetilde{w}_1 ) 
%		{\rm d}z$, 
		$t_1$ is as in \eqref{PDF-BULS}, 
		${B}_r$ is as in \eqref{B-child},  and ${Z_1}$ and ${Z_2}$ are as given in Proposition \ref{Stochastic Representation}.
	\end{theorem}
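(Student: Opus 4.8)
The plan is to compute the event-conditioned density straight from the definition
\[
f_{W_1}(w_1\,\vert\, W_2\in B)
=
\frac{\partial}{\partial w_1}\,
\frac{\mathbb{P}(W_1\le w_1,\, W_2\in B)}{\mathbb{P}(W_2\in B)},
\]
translating every event into a statement about $Z_1$ and $Z_2$ through the stochastic representation of Proposition \ref{Stochastic Representation}, and then invoking Lemma \ref{Prop-dual-1} to simplify the normalizing constant in the denominator.

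First I would invert the two coordinate maps. From $W_1=1-\exp[-\eta_1\exp(\sigma_1 Z_1)]$ one gets $\{W_1\le w_1\}=\{Z_1\le \widetilde{w}_1\}$, exactly as in the quantile computation of Section \ref{marg-quant}. Similarly, since $W_2=1-\exp[-\eta_2\exp(\sigma_2(\rho Z_1+\sqrt{1-\rho^2}Z_2))]$, the monotone change of variable $w_2\mapsto \log[(-\log(1-w_2)/\eta_2)^{1/\sigma_2}]$ turns $\{W_2\in B\}$ into $\{\rho Z_1+\sqrt{1-\rho^2}Z_2\in h(B)\}$, where I write $h(B):=\log[(-\log(1-B)/\eta_2)^{1/\sigma_2}]$. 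The key observation is that $h(B)$ is precisely the set $B_0$ of \eqref{B-child} (take $r=0$), which I reserve for the denominator.

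Next I would condition on $Z_1$, writing
\[
\mathbb{P}(W_1\le w_1,\, W_2\in B)
=
\int_{-\infty}^{\widetilde{w}_1}
\mathbb{P}\!\left(Z_2\in \frac{h(B)-\rho z}{\sqrt{1-\rho^2}}\ \Big\vert\ Z_1=z\right)
f_{Z_1}(z)\,\mathrm{d}z,
\]
and then differentiate in $w_1$ by the fundamental theorem of calculus. Evaluating the integrand at the upper limit $z=\widetilde{w}_1$ produces the set $\frac{h(B)-\rho\widetilde{w}_1}{\sqrt{1-\rho^2}}$, which is exactly $B_\rho$ from \eqref{B-child}; the accompanying chain-rule factor is $\partial\widetilde{w}_1/\partial w_1=1/[(1-w_1)t_1\sigma_1]$, a one-line computation using $t_1=-\log(1-w_1)$. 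This delivers the numerator $\frac{1}{(1-w_1)t_1\sigma_1}\,f_{Z_1}(\widetilde{w}_1)\,\mathbb{P}(Z_2\in B_\rho\,\vert\,Z_1=\widetilde{w}_1)$.

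Finally, for the denominator I would apply Lemma \ref{Prop-dual-1}: since $\rho Z_1+\sqrt{1-\rho^2}Z_2$ and $Z_2$ have the same distribution, $\mathbb{P}(W_2\in B)=\mathbb{P}(\rho Z_1+\sqrt{1-\rho^2}Z_2\in B_0)=\mathbb{P}(Z_2\in B_0)$; dividing the numerator by this gives the stated formula. I expect the only delicate point to be the bookkeeping in the differentiation step, namely verifying that the conditioning set collapses to $B_\rho$ exactly at $z=\widetilde{w}_1$ and that the Jacobian matches $1/[(1-w_1)t_1\sigma_1]$, whereas the collapse of the normalizing constant to $\mathbb{P}(Z_2\in B_0)$ is an immediate consequence of Lemma \ref{Prop-dual-1}.
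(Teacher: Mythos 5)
Your proof is correct, but it follows a genuinely different route from the paper's. The paper works at the density level: it factorizes $f_{W_1}(w_1\,\vert\,W_2\in B)=f_{W_1}(w_1)\int_B f_{W_2}(w_2\,\vert\,W_1=w_1)\,{\rm d}w_2/\mathbb{P}(W_2\in B)$, substitutes the conditional density from Lemma \ref{conditional PDF}, and then performs the change of variable $z=(\widetilde{w}_2-\rho\widetilde{w}_1)/\sqrt{1-\rho^2}$ in the $w_2$-integral to arrive at $\int_{B_\rho}f_{Z_2}(z\,\vert\,Z_1=\widetilde{w}_1)\,{\rm d}z$; since Lemma \ref{conditional PDF} is itself obtained by differentiating a conditional CDF in $w_2$, the paper effectively differentiates in $w_2$ and then re-integrates over $B$ --- a round trip your argument skips entirely. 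You instead stay at the probability level: you disintegrate $\mathbb{P}(W_1\leqslant w_1,\,W_2\in B)$ with respect to $Z_1$ (legitimate, since $(Z_1,Z_2)$ has the joint density \eqref{pdf-Zs}) and differentiate once in $w_1$ by the fundamental theorem of calculus, with the conditioning set collapsing to exactly $B_\rho$ at the upper limit $z=\widetilde{w}_1$ and the chain rule supplying the Jacobian $1/[(1-w_1)t_1\sigma_1]$, which checks out since $\widetilde{w}_1=(\log t_1-\log\eta_1)/\sigma_1$ with $t_1=-\log(1-w_1)$. Both arguments rest on the same two pillars --- the stochastic representation of Proposition \ref{Stochastic Representation} to translate events into statements about $(Z_1,Z_2)$, and Lemma \ref{Prop-dual-1} to reduce the denominator to $\mathbb{P}(Z_2\in B_0)$, which you invoke identically. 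What your version buys is self-containment and transparency: it needs neither Lemma \ref{conditional PDF} nor an explicit change of variables, and it makes visible why $B_\rho$ appears. Its only cost is the regularity implicitly needed for the FTC step, namely continuity in $z$ of $z\mapsto\mathbb{P}\big(Z_2\in(B_0-\rho z)/\sqrt{1-\rho^2}\,\vert\,Z_1=z\big)f_{Z_1}(z)$; but this is precisely the regularity the paper tacitly assumes when differentiating the conditional CDF in Lemma \ref{conditional PDF}, so nothing is lost. The paper's modular route, in exchange, yields Lemma \ref{conditional PDF} as an independently useful byproduct.
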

	\begin{proof}
		Let $B$ be a Borelian subset of $(0,1)$. Note that
		\begin{align*}
		f_{W_1}(w_1 \, \vert\, W_2\in B)
		=
		f_{W_1}(w_1)\, {\int_B f_{W_2}(w_2\, \vert\, W_1=w_1)\, {\rm d}w_2\over \mathbb{P}(W_2\in B)}.
		\end{align*}
		As $f_{W_1}(w_1)=f_{Z_1}(\widetilde{w}_1)/[(1-w_1)t_1\sigma_1]$ and $\mathbb{P}(W_2\in B)=\mathbb{P}\big(\rho Z_1+\sqrt{1-\rho^2}Z_2\in {B}_0\big)$, where ${B}_0$ is as given in \eqref{B-child} with $r=0$, the term on the right-hand side of the above identity is
		\begin{align*}
		=
		{1\over (1-w_1)t_1\sigma_1 }\,f_{Z_1}(\widetilde{w}_1)\, {\int_B f_{W_2}(w_2\, \vert\, W_1=w_1)\, {\rm d}w_2\over \mathbb{P}\big(\rho Z_1+\sqrt{1-\rho^2}Z_2\in {B}_0\big)}.
		\end{align*}
		By using the formula for $f_{W_2}(w_2\vert W_1=w_1)$ provided in Lemma \ref{conditional PDF}, the above expression is
		\begin{align*}
		=
		{1\over (1-w_1)t_1 \sigma_1\sigma_2 \sqrt{1-\rho^2}}\,f_{Z_1}(\widetilde{w}_1)\, 
		{\int_B 
			{1\over (1-w_2)t_2}\,	
			f_{Z_2}\Big(
			{1\over\sqrt{1-\rho^2}}\,\widetilde{w}_2
			-
			{\rho\over\sqrt{1-\rho^2}}\, \widetilde{w}_1 \, \Big\vert\, Z_1=\widetilde{w}_1 
			\Big)\, 
			{\rm d}w_2 \over \mathbb{P}\big(\rho Z_1+\sqrt{1-\rho^2}Z_2\in {B}_0\big)},
		\end{align*}
		where $\widetilde{w}_i$ and $t_i$, $i=1,2$, are as in \eqref{PDF-BULS}.
		Finally, by applying the change of variable $z=(\widetilde{w}_2
		-
		{\rho}\, \widetilde{w}_1)/\sqrt{1-\rho^2}$, the above expression is
		\begin{align*}
		=
		{1 \over (1-w_1)t_1 \sigma_1}\,f_{Z_1}(\widetilde{w}_1)\, 
		{\int_{{B}_\rho}
			f_{Z_2}(
			z \, \vert\, Z_1=\widetilde{w}_1 )\, 
			{\rm d}z
			\over \mathbb{P}\big(\rho Z_1+\sqrt{1-\rho^2}Z_2\in {B}_0\big)}.
		\end{align*}
		
		We have thus proved that
		\begin{align*}
				f_{W_1}(w_1 \, \vert\, W_2\in B)
				=
					{1 \over (1-w_1)t_1 \sigma_1}\,f_{Z_1}(\widetilde{w}_1)\, 
				{\int_{{B}_\rho}
					f_{Z_2}(
					z \, \vert\, Z_1=\widetilde{w}_1 )\, 
					{\rm d}z
					\over \mathbb{P}\big(\rho Z_1+\sqrt{1-\rho^2}Z_2\in {B}_0\big)}.
		\end{align*}
		Finally, by combining the above identity with Lemma \ref{Prop-dual-1}, the required result follows.
	\end{proof}

Using Theorem \ref{theo-pdf-cond}, for each generator ($g_c$) in Table \ref{table:1}, we present closed formulas for the conditional densities of $W_1\,\vert\, (W_2\in B)$ corresponding to
bivariate unit-log-normal (Corollary \ref{Gaussian generator}), 
bivariate unit-log-Student-$t$ (Corollary \ref{Student-t-generator}),
bivariate unit-log-hyperbolic (Corollary \ref{Hyperbolic generator}),  
bivariate unit-log-Laplace (Corollary \ref{Laplace generator}) and  
bivariate unit-log-slash (Corollary \ref{Slash generator}) distributions.
	
	\begin{corollary}[Gaussian generator] \label{Gaussian generator}
		Let $\boldsymbol{W}=(W_1,W_2)^\top\sim {\rm BULS}(\boldsymbol{\theta},g_c)$ and $g_c(x)=\exp(-x/2)$ be the generator of the bivariate unit-log-normal distribution. Then, for each Borelian subset $B$ of $(0,1)$, the PDF of $W_1\,\vert\, (W_2\in B)$ is given by (for $0<w_1<1$)
		\begin{align*}
		%\resizebox{17.3cm}{!}{
		f_{W_1}(w_1\,\vert\, W_2\in B)
		=
		{1\over (1-w_1)t_1 \sigma_1}\,
		\phi(\widetilde{w}_1)\, 
		\dfrac{
			\Phi(B_\rho)
		}{
			\Phi(B_0)
		},
		%}
		\end{align*}
		where $\Phi(C)=\int_C \phi(x){\rm d}x$ and $\phi(x)$ is the standard normal PDF. Further, $\widetilde{w}_1$ and $t_1$ are as in \eqref{PDF-BULS}, and $B_r$ is as in \eqref{B-child}.
	\end{corollary}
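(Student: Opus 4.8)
The plan is to specialize Theorem~\ref{theo-pdf-cond} to the Gaussian generator and to exploit the fact that this particular choice makes $Z_1$ and $Z_2$ independent standard normal variables, after which the general formula collapses to the stated one. First I would recall the master formula from Theorem~\ref{theo-pdf-cond},
\[
f_{W_1}(w_1 \mid W_2\in B)
=
\frac{1}{(1-w_1)t_1\sigma_1}\,f_{Z_1}(\widetilde{w}_1)\,
\frac{\mathbb{P}(Z_2\in B_\rho \mid Z_1=\widetilde{w}_1)}{\mathbb{P}(Z_2\in B_0)},
\]
and then substitute the explicit generator into the joint density \eqref{pdf-Zs} of $(Z_1,Z_2)$.

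The crux of the argument is the second step. Taking $g_c(x)=\exp(-x/2)$ and $Z_{g_c}=2\pi$ from Table~\ref{table:1} (consistent with \eqref{partition function}, since $\pi\int_0^\infty e^{-u/2}\,\mathrm{d}u=2\pi$), the joint density \eqref{pdf-Zs} factorizes as
\[
f_{Z_1,Z_2}(x,y)=\frac{1}{2\pi}\exp\!\left(-\frac{x^2+y^2}{2}\right)=\phi(x)\,\phi(y).
\]
This factorization shows that $Z_1$ and $Z_2$ are independent standard normals. Consequently $f_{Z_1}(\widetilde{w}_1)=\phi(\widetilde{w}_1)$, and the conditioning on $Z_1=\widetilde{w}_1$ in the numerator is vacuous (the set $B_\rho$ is fixed once $w_1$ is fixed, and $Z_2$ is independent of $Z_1$), so $\mathbb{P}(Z_2\in B_\rho \mid Z_1=\widetilde{w}_1)=\mathbb{P}(Z_2\in B_\rho)=\Phi(B_\rho)$; likewise $\mathbb{P}(Z_2\in B_0)=\Phi(B_0)$, with $\Phi(C)=\int_C\phi(x)\,\mathrm{d}x$ as in the statement.

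Finally I would plug these three evaluations into the master formula to obtain
\[
f_{W_1}(w_1 \mid W_2\in B)
=
\frac{1}{(1-w_1)t_1\sigma_1}\,\phi(\widetilde{w}_1)\,\frac{\Phi(B_\rho)}{\Phi(B_0)},
\]
which is exactly the claimed expression. Honestly, there is no substantive obstacle here: the entire content reduces to the single observation that the Gaussian generator renders $(Z_1,Z_2)$ jointly standard normal and hence independent, whereupon the conditional probability degenerates to a marginal one and Theorem~\ref{theo-pdf-cond} supplies the rest. The only point deserving a moment of care is verifying that the normalizing constant $Z_{g_c}=2\pi$ is the one that makes \eqref{pdf-Zs} coincide with the product $\phi\otimes\phi$, which is immediate from the partition-function identity.
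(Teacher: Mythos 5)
Your proposal is correct and takes essentially the same route as the paper: both arguments specialize Theorem~\ref{theo-pdf-cond} using that $Z_1\sim N(0,1)$ and $\mathbb{P}(Z_2\in C\mid Z_1=\widetilde{w}_1)=\Phi(C)$, so that the numerator and denominator become $\Phi(B_\rho)$ and $\Phi(B_0)$. The only (harmless) difference is that you derive the factorization $f_{Z_1,Z_2}(x,y)=\phi(x)\,\phi(y)$ directly from \eqref{pdf-Zs} with $Z_{g_c}=2\pi$, whereas the paper simply cites the known fact that $Z_2\,\vert\,(Z_1=x)\sim N(0,1)$, so your version is a self-contained verification of the same ingredients.
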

	\begin{proof}
		It is well-known that
		the bivariate log-normal distribution has a stochastic representation as in \eqref{rep-stoch-biv-gaussian}, where 
		$Z_1\sim N(0,1)$ and $Z_2\sim N(0,1)$, and $Z_2\, \vert\, (Z_1=x)\sim N(0,1)$ \citep{Abdous2005}.
Hence,
$
\mathbb{P}(Z_2\in B_0)
=
\Phi(B_0)
$
and 
$
\mathbb{P}(Z_2\in {B}_\rho\, \vert\, Z_1=\widetilde{w}_1 )
=
\Phi({B}_\rho).
$
		Then, by applying Theorem \ref{theo-pdf-cond}, the required result follows.
	\end{proof}

	\begin{corollary}[Student-$t$ generator]\label{Student-t-generator}
		Let $\boldsymbol{W}=(W_1,W_2)^\top\sim {\rm BULS}(\boldsymbol{\theta},g_c)$ and $g_c(x)=(1+(x/\nu))^{-(\nu+2)/2}$, $\nu>0$, be the generator of the bivariate unit-log-Student-$t$ distribution with $\nu$ degrees of freedom. Then, for each Borelian subset $B$ of $(0,1)$, the PDF of $W_1\,\vert\, (W_2\in B)$ is given by (for $0<w_1<1$)
		\begin{align*}
		f_{W_1}(w_1\,\vert\, W_2\in B)
		=
		{1 \over (1-w_1) t_1 \sigma_1}\,
		f_{\nu}(\widetilde{w}_1)\, 
		\dfrac{
			F_{\nu+1}\Big(\sqrt{\nu+1\over \nu+\widetilde{w}_1^2}\, 
			B_\rho 
			\Big)		
		}{
			F_\nu(B_0)
		},
		\end{align*}
		where  $F_\nu(C)=\int_C f_\nu(x){\rm d}x$ and $f_\nu(x)$ is the standard Student-$t$ PDF with $\nu$ degrees of freedom.
%		Further, $\widetilde{w}_1$ and $t_1$ are as in \eqref{PDF-BULS}, and $B_r$ is as in \eqref{B-child}.
	\end{corollary}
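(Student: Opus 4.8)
The plan is to invoke Theorem \ref{theo-pdf-cond} directly, so the entire task reduces to identifying three objects for the Student-$t$ generator $g_c(x)=(1+x/\nu)^{-(\nu+2)/2}$: the marginal density $f_{Z_1}$, the marginal probability $\mathbb{P}(Z_2\in B_0)$, and the conditional probability $\mathbb{P}(Z_2\in B_\rho\,\vert\,Z_1=\widetilde{w}_1)$. First I would substitute this generator into the joint density \eqref{pdf-Zs}, obtaining
\begin{align*}
f_{Z_1,Z_2}(x,y)={1\over Z_{g_c}}\,\Big(1+{x^2+y^2\over\nu}\Big)^{-(\nu+2)/2},
\end{align*}
which is the standard (uncorrelated) bivariate Student-$t$ density with $\nu$ degrees of freedom. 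From the known marginalization of the bivariate $t$, each marginal $Z_1$ and $Z_2$ is a standard univariate Student-$t$ with $\nu$ degrees of freedom, so that $f_{Z_1}(\widetilde{w}_1)=f_\nu(\widetilde{w}_1)$ and $\mathbb{P}(Z_2\in B_0)=F_\nu(B_0)$. These account for the factor $f_\nu(\widetilde{w}_1)$ and the denominator $F_\nu(B_0)$ in the claimed formula.

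The one nontrivial ingredient is the conditional law of $Z_2$ given $Z_1=x$. The key step is to recall (or verify by integrating out the normalizing constant and recognizing the kernel) that for the standard bivariate $t$ with $\nu$ degrees of freedom the conditional distribution of $Z_2\,\vert\,(Z_1=x)$ is a \emph{scaled} Student-$t$ with $\nu+1$ degrees of freedom and scale factor $\sqrt{(\nu+x^2)/(\nu+1)}$; equivalently, $Z_2\,\vert\,(Z_1=x)\stackrel{d}{=}\sqrt{(\nu+x^2)/(\nu+1)}\,T_{\nu+1}$ with $T_{\nu+1}$ standard. This inflation of the degrees of freedom from $\nu$ to $\nu+1$ and the data-dependent scale are precisely the features that distinguish the $t$ case from the Gaussian case in Corollary \ref{Gaussian generator}, and it is where the main care is needed.

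Granting that conditional representation, a change of variables gives
\begin{align*}
\mathbb{P}(Z_2\le z\,\vert\,Z_1=x)
=
\mathbb{P}\!\left(T_{\nu+1}\le\sqrt{\nu+1\over\nu+x^2}\,z\right)
=
F_{\nu+1}\!\left(\sqrt{\nu+1\over\nu+x^2}\,z\right),
\end{align*}
and hence, with $x=\widetilde{w}_1$ and $z$ ranging over $B_\rho$,
\begin{align*}
\mathbb{P}(Z_2\in B_\rho\,\vert\,Z_1=\widetilde{w}_1)
=
F_{\nu+1}\!\left(\sqrt{\nu+1\over\nu+\widetilde{w}_1^2}\,B_\rho\right).
\end{align*}
Finally I would insert $f_{Z_1}(\widetilde{w}_1)=f_\nu(\widetilde{w}_1)$, the numerator just displayed, and the denominator $\mathbb{P}(Z_2\in B_0)=F_\nu(B_0)$ into the general expression of Theorem \ref{theo-pdf-cond}, which yields the stated formula. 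The main obstacle is thus not the application of the theorem itself but the correct derivation of the scaled-$t$ conditional law, including tracking the $\nu\mapsto\nu+1$ shift and the factor $\sqrt{(\nu+1)/(\nu+\widetilde{w}_1^2)}$ that rescales the set $B_\rho$.
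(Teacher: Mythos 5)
Your proposal is correct and follows essentially the same route as the paper: identify $f_{Z_1}=f_\nu$, $\mathbb{P}(Z_2\in B_0)=F_\nu(B_0)$, and the scaled-$t$ conditional law $Z_2\,\vert\,(Z_1=x)\sim\sqrt{(\nu+x^2)/(\nu+1)}\;t_{\nu+1}$, then apply Theorem \ref{theo-pdf-cond}. The only difference is cosmetic: the paper simply cites this conditional law (Corollary 3.7 of \cite{Vila2022}) rather than sketching its derivation from the joint density as you do.
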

	\begin{proof}
	It is well-known that
	the bivariate log-Student-$t$ distribution has a stochastic representation as in \eqref{rep-stoch-biv-gaussian}, where 
	$Z_1\sim t_\nu$ and $Z_2\sim t_\nu$ 
	(Student-$t$ with $\nu$ degrees of freedom), and \cite[see Corollary 3.7 of][]{Vila2022}
	$$
	Z_2\, \vert\, (Z_1=x) \sim \sqrt{\nu+x^2\over \nu+1}\, t_{\nu+1}.
	$$ 
	Hence,	
	%	
%		In Corollary 3.7 of \cite{Vila2022} it was proved that $\rho Z_1+\sqrt{1-\rho^2}Z_2\sim t_\nu$	
%		and that
$\mathbb{P}(Z_2\in B_0)
=
F_{\nu}(B_0)$ and
		\begin{align*}
		\mathbb{P}(Z_2\in {B}_\rho\, \vert\, Z_1=\widetilde{w}_1 )
%		=
%		\int_{{B}_\rho}		
%		f_{Z_2}(
%		z \, \vert\, Z_1=\widetilde{w}_1 )\, 
%		{\rm d}z
		=
		F_{\nu+1}\left(\sqrt{\nu+1\over \nu+\widetilde{w}_1^2}\, {B}_\rho \right).
		\end{align*}
		By applying Theorem \ref{theo-pdf-cond}, the required result follows.
	\end{proof}

\begin{corollary}[Hyperbolic generator]\label{Hyperbolic generator}
	Let $\boldsymbol{W}=(W_1,W_2)^\top\sim {\rm BULS}(\boldsymbol{\theta},g_c)$ and $g_c(x)=\exp(-\nu\sqrt{1+x}\,)$ be the generator of the bivariate unit-log-hyperbolic distribution. Then, for each Borelian subset $B$ of $(0,1)$, the PDF of $W_1\,\vert\, (W_2\in B)$ is given by (for $0<w_1<1$)
	\begin{align*}
f_{W_1}(w_1\,\vert\, W_2\in B)
=
{1 \over (1-w_1) t_1 \sigma_1}\,
f_{\rm GH}(\widetilde{w}_1; 3/2,\nu,1)\, 
\dfrac{
	F_{\rm GH}\left({B}_\rho; 1,\nu,\sqrt{1+\widetilde{w}_1^2}\right)		
}{
	F_{\rm GH}(B_0;3/2,\nu,1)
},
\end{align*}
where  $F_{\rm GH}(C;\lambda,\alpha,\delta)=\int_C f_{\rm GH}(x;\lambda,\alpha,\delta){\rm d}x$ and $f_{\rm GH}(x;\lambda,\alpha,\delta)$ is the generalized hyperbolic (GH) PDF (see Definition \ref{def-GH} in the Appendix).
%and $F_{\rm GH}$ is the CDF corresponding to the GH PDF.
%Further, $\widetilde{w}_1$ and $t_1$ are as in \eqref{PDF-BULS}, and $B_r$ is as in \eqref{B-child}.
\end{corollary}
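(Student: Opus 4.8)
The plan is to follow exactly the template set by Corollaries \ref{Gaussian generator} and \ref{Student-t-generator}: once we know the marginal law of $Z_1$ and the conditional law of $Z_2\,\vert\,(Z_1=x)$ inside the generalized hyperbolic family, the statement follows by direct substitution into Theorem \ref{theo-pdf-cond}. Concretely, I aim to show that $Z_1$ (and $Z_2$) are ${\rm GH}(3/2,\nu,1)$ and that $Z_2\,\vert\,(Z_1=x)\sim{\rm GH}(1,\nu,\sqrt{1+x^2})$. These two facts give $\mathbb{P}(Z_2\in B_0)=F_{\rm GH}(B_0;3/2,\nu,1)$ and $\mathbb{P}(Z_2\in B_\rho\,\vert\,Z_1=\widetilde{w}_1)=F_{\rm GH}(B_\rho;1,\nu,\sqrt{1+\widetilde{w}_1^2})$, and Theorem \ref{theo-pdf-cond} then assembles the claimed conditional density of $W_1\,\vert\,(W_2\in B)$.

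First I would compute the marginal of $Z_1$. By \eqref{pdf-Zs}, the joint density of $(Z_1,Z_2)$ for this generator is $f_{Z_1,Z_2}(x,y)=Z_{g_c}^{-1}\exp(-\nu\sqrt{1+x^2+y^2})$, so integrating out $y$ reduces to evaluating $\int_{-\infty}^{\infty}\exp(-\nu\sqrt{a^2+y^2})\,{\rm d}y$ with $a=\sqrt{1+x^2}$. Writing the integrand through the half-integer Bessel identity $K_{1/2}(z)=\sqrt{\pi/(2z)}\,\exp(-z)$ and recognizing the resulting integral as the normalization of a ${\rm GH}(1,\nu,a)$ density yields $\int_{-\infty}^{\infty}\exp(-\nu\sqrt{a^2+y^2})\,{\rm d}y=2a\,K_1(\nu a)$. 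Substituting the value $Z_{g_c}=2\pi(\nu+1)\exp(-\nu)/\nu^2$ from Table \ref{table:1} and using $K_{3/2}(\nu)=\sqrt{\pi/(2\nu)}\,\exp(-\nu)\,(\nu+1)/\nu$, I would verify that the resulting marginal coincides with $f_{\rm GH}(x;3/2,\nu,1)$ as in Definition \ref{def-GH}.

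Next, the conditional density $f_{Z_2}(y\,\vert\,Z_1=x)=f_{Z_1,Z_2}(x,y)/f_{Z_1}(x)$ is, up to the $x$-dependent constant just computed, again proportional to $\exp(-\nu\sqrt{(1+x^2)+y^2})$. Rewriting this proportionality through $K_{1/2}$ with scale $\delta=\sqrt{1+x^2}$ and shape $\alpha=\nu$ identifies it as exactly $f_{\rm GH}(y;1,\nu,\sqrt{1+x^2})$, which produces the conditional probability displayed in the statement after the affine change of variable built into the definition of $B_\rho$ in \eqref{B-child}. With the marginal and conditional laws in hand, plugging $f_{Z_1}(\widetilde{w}_1)=f_{\rm GH}(\widetilde{w}_1;3/2,\nu,1)$, $\mathbb{P}(Z_2\in B_0)=F_{\rm GH}(B_0;3/2,\nu,1)$ and $\mathbb{P}(Z_2\in B_\rho\,\vert\,Z_1=\widetilde{w}_1)=F_{\rm GH}(B_\rho;1,\nu,\sqrt{1+\widetilde{w}_1^2})$ into Theorem \ref{theo-pdf-cond} delivers the formula.

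The main obstacle is purely the special-function bookkeeping in the first two steps: correctly matching the two normalizing constants so that the explicit $\exp(-\nu\sqrt{\cdot}\,)$ expressions align with the GH template of Definition \ref{def-GH}. The crucial simplification that makes everything close in elementary form is that both relevant Bessel orders are half-integers ($K_{1/2}$ for the conditional and, via the recurrence, $K_{3/2}$ for the marginal normalization), so all exponentials reappear cleanly. The one consistency check I would not skip is confirming that the tabulated $Z_{g_c}$ is compatible with the ${\rm GH}(3/2,\nu,1)$ normalization, since a mismatch there would signal a wrong parameter identification.
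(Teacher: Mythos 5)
Your proposal is correct and follows essentially the same route as the paper: the paper's proof of Corollary \ref{Hyperbolic generator} cites the marginals $Z_1, Z_2 \sim {\rm GH}(3/2,\nu,1)$ and the conditional $Z_2\,\vert\,(Z_1=x)\sim{\rm GH}(1,\nu,\sqrt{1+x^2}\,)$ (established in Proposition \ref{cond-marg-Hyp} by exactly the Bessel computations you outline, with $\int_0^\infty \exp(-a\sqrt{b^2+x^2}\,)\,{\rm d}x = bK_1(ab)$ and the half-integer identities for $K_{1/2}$ and $K_{3/2}$), and then applies Theorem \ref{theo-pdf-cond}. The only cosmetic difference is that you justify the key integral by recognizing the ${\rm GH}(1,\nu,a)$ normalization instead of quoting the Gradshteyn--Ryzhik formula, and your normalization check against the tabulated $Z_{g_c}$ indeed closes consistently.
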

\begin{proof}
	It is well-known that
the bivariate log-hyperbolic distribution has a stochastic representation as in \eqref{rep-stoch-biv-gaussian}, where 
$Z_1
%=Z_1^* \sqrt{W}
\sim {\rm GH}(3/2,\nu,1)$ 
and 
$Z_2
%= Z_2^* \sqrt{W}
\sim {\rm GH}(3/2,\nu,1)$ 
\cite[Subsection 2.1, p. 3]{dy:18}. 
%Here, $W$ is a  generalized inverse Gaussian (GIG) distributed random variable, that is, its PDF is given by
%\begin{align*}
%f_{\rm GIG}(;\lambda,\nu)(w)={\nu^{-\lambda}\over 2 K_\lambda(\nu)}\, w^{\lambda-1}\exp\left(-{1\over 2}\, \bigg({\nu \over w}+w\bigg)\right), \quad w>0, \nu>0, \lambda\in\mathbb{R},
%\end{align*}
%so that 
%$W$ is independent of $Z_1^*$ and ${\rho} Z_1^* +\sqrt{1-\rho^2} Z_2^*$; 
%while $Z_1^*$ and $Z_2^*$ are i.i.d.  standard normal  random variables.
%Consequently,
%$
%\rho Z_1+\sqrt{1-\rho^2}Z_2
%=
%({\rho} Z_1^* +\sqrt{1-\rho^2} Z_2^*)\sqrt{W}
%\sim {\rm GH}(3/2,\nu,1).
%$
%
Moreover, the distribution of $Z_{2}$, given $Z_{1}=x$, is ${\rm GH}(1/2,\sqrt{2},\vert x\vert)$ (Proposition \ref{cond-marg-Hyp}). Then
$\mathbb{P}(Z_2\in B_0)
=
F_{\rm GH}(B_0;3/2,\nu,1)$ and
$
\mathbb{P}(Z_2\in {B}_\rho\, \vert\, Z_1=\widetilde{w}_1 )
%	=
%\int_{{B}_\rho}		
%f_{Z_2}(
%z \, \vert\, Z_1=\widetilde{w}_1 )\, 
%{\rm d}z
=
	F_{\rm GH}\big({B}_\rho; 1,\nu,\sqrt{1+\widetilde{w}_1^2}\,\big).
$
By applying Theorem \ref{theo-pdf-cond}, the required result follows.
\end{proof}

	\begin{corollary}[Laplace generator]\label{Laplace generator}
	Let $\boldsymbol{W}=(W_1,W_2)^\top\sim {\rm BULS}(\boldsymbol{\theta},g_c)$ and $g_c(x)=K_0(\sqrt{2x})$ be the generator of the bivariate unit-log-Laplace distribution. Then, for each Borelian subset $B$ of $(0,1)$, the PDF of $W_1\,\vert\, (W_2\in B)$ is given by (for $0<w_1<1$)
	\begin{align*}
	f_{W_1}(w_1\,\vert\, W_2\in B)
	=
	{1 \over (1-w_1) t_1 \sigma_1}\,
	f_{\rm L}(\widetilde{w}_1)\, 
	\dfrac{
F_{\rm GH}({B}_\rho; {1\over 2},\sqrt{2},\vert \widetilde{w}_1 \vert)		
	}{
		F_{\rm L}(B_0)
	},
	\end{align*}
	where  $F_{\rm L}(C)=\int_C f_{\rm L}(x){\rm d}x$ and $f_{\rm L}(x)=\exp(-\sqrt{2}\, \vert x\vert)/\sqrt{2}$ is the Laplace PDF with scale parameter $1/\sqrt{2}$, and $F_{\rm GH}$ is as defined in Corollary \ref{Hyperbolic generator}.
%	 the CDF corresponding to the generalised hyperbolic (GH) PDF (see Definition \ref{def-GH} in the Appendix).
%	Further, $\widetilde{w}_1$ and $t_1$ are as in \eqref{PDF-BULS}, and $B_r$ is as in \eqref{B-child}.
\end{corollary}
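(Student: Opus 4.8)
The plan is to apply Theorem~\ref{theo-pdf-cond}, which reduces the problem to identifying three objects attached to the pair $(Z_1,Z_2)$ of Proposition~\ref{Stochastic Representation}: the marginal density $f_{Z_1}$, the marginal probability $\mathbb{P}(Z_2\in B_0)$, and the conditional probability $\mathbb{P}(Z_2\in B_\rho\,\vert\,Z_1=\widetilde{w}_1)$. Since here the generator is $g_c(x)=K_0(\sqrt{2x}\,)$ with partition function $Z_{g_c}=\pi$ (Table~\ref{table:1}), formula~\eqref{pdf-Zs} gives the joint density
\begin{align*}
f_{Z_1,Z_2}(x,y)={1\over\pi}\,K_0\big(\sqrt{2(x^2+y^2)}\,\big),\quad -\infty<x,y<\infty,
\end{align*}
so all three quantities can be read off from this single expression.

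First I would compute the marginal of $Z_1$. Inserting the integral representation of $K_0$ given below Table~\ref{table:1} (the $\lambda=0$ instance, $K_0(u)=\tfrac12\int_0^\infty t^{-1}\exp(-t-u^2/4t)\,{\rm d}t$) into $f_{Z_1}(x)=\int_{-\infty}^\infty f_{Z_1,Z_2}(x,y)\,{\rm d}y$ and exchanging the order of integration turns the inner $y$-integral into a Gaussian integral. Performing that Gaussian integral and recognizing the remaining $t$-integral as a multiple of $K_{1/2}$, then using the half-integer closed form $K_{1/2}(z)=\sqrt{\pi/(2z)}\,e^{-z}$, yields
\begin{align*}
f_{Z_1}(x)={1\over\sqrt{2}}\,\exp\big(-\sqrt{2}\,\vert x\vert\big)=f_{\rm L}(x).
\end{align*}
Because the joint density is symmetric in its two arguments, $Z_2$ has the same Laplace marginal, so $\mathbb{P}(Z_2\in B_0)=F_{\rm L}(B_0)$.

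Next I would obtain the conditional law by dividing the joint density by $f_{Z_1}(x)$, giving
\begin{align*}
f_{Z_2}(y\,\vert\,Z_1=x)={\sqrt{2}\over\pi}\,\exp\big(\sqrt{2}\,\vert x\vert\big)\,K_0\big(\sqrt{2}\sqrt{x^2+y^2}\,\big),
\end{align*}
and the task is to recognize this as the GH density $f_{\rm GH}(y;1/2,\sqrt{2},\vert x\vert)$ of Definition~\ref{def-GH}. With $(\lambda,\alpha,\delta)=(1/2,\sqrt{2},\vert x\vert)$ the Bessel index $\lambda-1/2$ equals $0$ and the polynomial factor $(\delta^2+y^2)^{(\lambda-1/2)/2}$ disappears, so the GH density is proportional to $K_0(\sqrt{2}\sqrt{x^2+y^2}\,)$; evaluating its normalizing constant, which contains $K_{1/2}(\sqrt{2}\,\vert x\vert)$ and is again simplified by $K_{1/2}(z)=\sqrt{\pi/(2z)}\,e^{-z}$, reproduces exactly the prefactor $(\sqrt{2}/\pi)\exp(\sqrt{2}\,\vert x\vert)$. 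Hence $\mathbb{P}(Z_2\in B_\rho\,\vert\,Z_1=\widetilde{w}_1)=F_{\rm GH}(B_\rho;1/2,\sqrt{2},\vert\widetilde{w}_1\vert)$.

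Substituting $f_{Z_1}(\widetilde{w}_1)=f_{\rm L}(\widetilde{w}_1)$, $\mathbb{P}(Z_2\in B_0)=F_{\rm L}(B_0)$ and the conditional probability just obtained into Theorem~\ref{theo-pdf-cond} delivers the stated formula. I expect the main obstacle to be the two Bessel-function manipulations, namely reducing the marginalizing integral to $K_{1/2}$ and matching the conditional density against the GH normalizing constant of Definition~\ref{def-GH}; both hinge on the identity $K_{1/2}(z)=\sqrt{\pi/(2z)}\,e^{-z}$, and the remaining steps are the same bookkeeping already used in Corollaries~\ref{Gaussian generator}--\ref{Hyperbolic generator}.
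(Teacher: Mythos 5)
Your proposal is correct and follows essentially the same route as the paper: the paper's proof of this corollary invokes Theorem \ref{theo-pdf-cond} together with Proposition \ref{cond-marg-Laplace} in the Appendix, whose proof is exactly your computation---joint density from \eqref{pdf-Zs} with $Z_{g_c}=\pi$, the integral representation of $K_0$ to marginalize $Z_1$ to ${\rm Laplace}(0,1/\sqrt{2})$, and division plus the identity $K_{1/2}(z)=\sqrt{\pi/(2z)}\,e^{-z}$ to identify the conditional as ${\rm GH}(1/2,\sqrt{2},\vert x\vert)$. You merely inline (and correctly carry out, including the Gaussian-integral step the appendix leaves implicit) what the paper delegates to that proposition.
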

\begin{proof}
	It is well-known that
the bivariate log-Laplace distribution has a stochastic representation as in \eqref{rep-stoch-biv-gaussian}, where 
$Z_1
%=
%Z_1^* \sqrt{E}
\sim {\rm Laplace}(0,1/\sqrt{2})
$ 
and 
$Z_2
%= Z_2^* \sqrt{E}
\sim {\rm Laplace}(0,1/\sqrt{2})$ \cite[Subsection 5.1.4, p. 234]{Kotz2001}. 
%Here, $E$ is a standard exponential variable independent of $Z_1^*$ and ${\rho} Z_1^* +\sqrt{1-\rho^2} Z_2^*$; 
%while $Z_1^*$ and $Z_2^*$ are i.i.d.  standard normal  random variables.
%Consequently,
%$
%\rho Z_1+\sqrt{1-\rho^2}Z_2
%=
%({\rho} Z_1^* +\sqrt{1-\rho^2} Z_2^*)\sqrt{E}
%\sim {\rm Laplace}(0,1/\sqrt{2}).
%$
%
Further, the distribution of $Z_{2}$, given $Z_{1}=x$, is ${\rm GH}(1/2,\sqrt{2},\vert x\vert)$ (Proposition \ref{cond-marg-Laplace}). Hence,
$
\mathbb{P}(Z_2\in B_0)
=
F_{\rm L}(B_0)
$
and
$
\mathbb{P}(Z_2\in {B}_\rho\, \vert\, Z_1=\widetilde{w}_1 )
=
F_{\rm GH}({B}_\rho; {1/ 2},\sqrt{2},\vert \widetilde{w}_1 \vert).
$
By applying Theorem \ref{theo-pdf-cond}, the required result follows.
\end{proof}

	\begin{corollary}[Slash generator]\label{Slash generator}
	Let $\boldsymbol{W}=(W_1,W_2)^\top\sim {\rm BULS}(\boldsymbol{\theta},g_c)$ and $g_c(x)=x^{-{(q+2)/ 2}} \gamma({(q+2)/ 2},{x/ 2})$, be the generator of the bivariate unit-log-slash distribution. Then, for each Borelian subset $B$ of $(0,1)$, the PDF of $W_1\,\vert\, (W_2\in B)$ is given by (for $0<w_1<1$)
	\begin{align*}
	f_{W_1}(w_1\,\vert\, W_2\in B)
	=
	{1 \over (1-w_1) t_1 \sigma_1}\,
	f_{\rm SL}(\widetilde{w}_1;q)\, 
	\dfrac{
	F_{\rm ESL}\left({B}_\rho; \widetilde{w}_1 ,q+1\right)		
	}{
		F_{\rm SL}(B_0;q)
	},
	\end{align*}
	where  $F_{\rm SL}(C;q)=\int_C f_{\rm SL}(x;q){\rm d}x$ and $f_{\rm SL}(x;q)=q \int_{0}^1 t^q \phi(t x) {\rm d}t$ is the classical slash PDF, and $F_{\rm ESL}(C;a,q) =\int_C f_{\rm ESL}(x;a,q){\rm d}x$, where $f_{\rm ESL}(x;a,q)$ 
	is the 
%	CDF corresponding to the 
	generalized hyperbolic (ESL) PDF (see Definition \ref{def-GH-1} in the Appendix).
%	Further, $\widetilde{w}_1$ and $t_1$ are as in \eqref{PDF-BULS}, and $B_r$ is as in \eqref{B-child}.
\end{corollary}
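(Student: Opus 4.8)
The plan is to mirror the proofs of Corollaries \ref{Gaussian generator}--\ref{Laplace generator}: everything reduces to Theorem \ref{theo-pdf-cond} once the marginal law of $Z_1$ and the conditional law of $Z_2$ given $Z_1=x$ have been identified for the slash generator $g_c(x)=x^{-(q+2)/2}\gamma((q+2)/2,x/2)$. Thus the two distributional facts I must supply are $f_{Z_1}=f_{\rm SL}(\cdot;q)$ and $Z_2\mid(Z_1=x)\sim f_{\rm ESL}(\cdot;x,q+1)$; substituting $\mathbb{P}(Z_2\in B_0)=F_{\rm SL}(B_0;q)$ and $\mathbb{P}(Z_2\in B_\rho\mid Z_1=\widetilde{w}_1)=F_{\rm ESL}(B_\rho;\widetilde{w}_1,q+1)$, together with $f_{Z_1}=f_{\rm SL}(\cdot;q)$, into Theorem \ref{theo-pdf-cond} then yields the stated formula verbatim.

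The cleanest route to both facts is the scale-mixture representation of the slash generator. Starting from the joint density \eqref{pdf-Zs}, $f_{Z_1,Z_2}(x,y)=g_c(x^2+y^2)/Z_{g_c}$, I would rewrite the incomplete-gamma factor by the substitution $t=(x^2+y^2)w^2/2$ inside $\gamma((q+2)/2,(x^2+y^2)/2)$, which exhibits $(Z_1,Z_2)$ as a uniform scale mixture of centred bivariate normals: $(Z_1,Z_2)=(V_1,V_2)/U^{1/q}$ with $(V_1,V_2)\sim N(\boldsymbol{0},I_2)$ and $U\sim\mathrm{Unif}(0,1)$ mutually independent. Integrating out one coordinate then gives the marginal $f_{Z_1}(x)=q\int_0^1 t^q\phi(tx)\,\mathrm{d}t=f_{\rm SL}(x;q)$ directly, and simultaneously reproduces the tabulated constant $Z_{g_c}=(\pi/q)\,2^{(2-q)/2}$ as a built-in consistency check.

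The conditional law is the substantive step, and I expect to record it as a separate Appendix proposition (parallel to Propositions \ref{cond-marg-Laplace} and \ref{cond-marg-Hyp}). Forming the ratio $f_{Z_2\mid Z_1=x}(y)=f_{Z_1,Z_2}(x,y)/f_{Z_1}(x)$ produces a quotient of two integrals over the mixing variable $w\in(0,1)$, and the task is to recognize this expression, as a function of $y$, as the ESL density $f_{\rm ESL}(y;x,q+1)$ of Definition \ref{def-GH-1}. The hard part will be precisely this matching: unlike the Gaussian case, where the conditional is again normal, the slash conditional is a genuinely heavier, different family, so the incomplete-gamma structure does not collapse and must instead be identified with the normalizing form of the ESL law, with care taken that the shape parameter increments from $q$ to $q+1$ and that the dependence on the conditioning value enters through $\widetilde{w}_1$ only via $\widetilde{w}_1^2$. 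Once this identification is in place, assembling the three ingredients in Theorem \ref{theo-pdf-cond} completes the proof.
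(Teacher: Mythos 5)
Your proposal is correct and takes essentially the same route as the paper: the paper likewise feeds $Z_1\sim{\rm SL}(q)$ and $Z_2\,\vert\,(Z_1=x)\sim{\rm ESL}(x,q+1)$ into Theorem \ref{theo-pdf-cond}, establishing both facts in Appendix Proposition \ref{cond-marg-slash} through exactly your scale-mixture identity $f_{Z_1,Z_2}(x,y)=q\int_0^1 t^{q+1}\phi(tx)\phi(ty)\,{\rm d}t$. The only point worth noting is that the step you flag as the ``hard part'' is immediate in the paper: once the joint is in mixture form, the conditional ratio $\int_0^1 t^{q+1}\phi(tx)\phi(ty)\,{\rm d}t\big/\int_0^1 u^{q}\phi(ux)\,{\rm d}u$ is \emph{verbatim} the defining expression of $f_{\rm ESL}(y;x,q+1)$ in Definition \ref{def-GH-1}, so no nontrivial matching of incomplete-gamma normalizations is needed.
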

\begin{proof}
	It is well-known that 
	the bivariate log-slash distribution has a stochastic representation as in \eqref{rep-stoch-biv-gaussian}, where 
	$Z_1
%	=Z_1^*/U^{1/q}
	\sim {\rm SL}(q)$ 
	and 
	$Z_2
%	= Z_2^*/U^{1/q}
	\sim {\rm SL}(q)$ \cite[Section 2, p. 211]{wg:06}. 
%	Here, $U\sim (0,1)$ is independent of $Z_1^*$ and ${\rho} Z_1^* +\sqrt{1-\rho^2} Z_2^*$; 
%	while $Z_1^*$ and $Z_2^*$ are i.i.d.  standard normal  random variables.
%	Consequently,
%	$
%	\rho Z_1+\sqrt{1-\rho^2}Z_2
%	=
%	({\rho} Z_1^* +\sqrt{1-\rho^2} Z_2^*)/U^{1/q}
%	\sim {\rm SL}(q).
%	$
%	
	Moreover, the distribution of $Z_{2}$, given $Z_{1}=x$, is ${\rm ESL}(x,q+1)$ (Proposition \ref{cond-marg-slash}). Hence,
	$
	\mathbb{P}(Z_2\in B_0)
	=
	F_{\rm SL}(B_0;q)
	$
	and
	$
	\mathbb{P}(Z_2\in {B}_\rho\, \vert\, Z_1=\widetilde{w}_1 )
	=
	F_{\rm ESL}\left({B}_\rho; \widetilde{w}_1 ,q+1\right).
	$
	By applying Theorem \ref{theo-pdf-cond}, the required result follows.
\end{proof}

	Table~\ref{table:2} below presents some examples of conditional PDFs corresponding to all the bivariate unit-log-symmetric distributions presented in Table \ref{table:1}.
	\begin{table}[H]
	\caption{Conditional densities of $W_1\vert\, (W_2\in B)$ and density generators $(g_c)$ for some BULS distributions.}
	\vspace*{0.15cm}
	\centering 
%			\resizebox{\linewidth}{!}{
	\begin{tabular}{llll} 
		\hline
		Distribution 
		& $g_c$ & $f_{W_1}(w_1\,\vert\, W_2\in B)$ 
		\\ [0.5ex] 
		\noalign{\hrule}
		Bivariate unit-log-normal%{\color{red}*}
		& $\exp(-x/2)$ & 
		$		{1\over (1-w_1)t_1 \sigma_1}\,
		\phi(\widetilde{w}_1)\, 
		\frac{
			\Phi(B_\rho)
		}{
			\Phi(B_0)
		}$
		\\ [1ex] 
		Bivariate unit-log-Student-$t$%{\color{red}*}
		& $(1+{x\over\nu})^{-(\nu+2)/ 2}$  & 
		$	{1 \over (1-w_1) t_1 \sigma_1}\,
		f_{\nu}(\widetilde{w}_1)\, 
		\frac{
			F_{\nu+1}\Big(\sqrt{\nu+1\over \nu+\widetilde{w}_1^2}\, 
			B_\rho 
			\Big)		
		}{
			F_\nu(B_0)
		}$ 
		\\ [1ex]
		Bivariate unit-log-hyperbolic%{\color{red}*}
		& $\exp(-\nu\sqrt{1+x})$ & 
		${1 \over (1-w_1) t_1 \sigma_1}\,
		f_{\rm GH}(\widetilde{w}_1; 3/2,\nu,1)\, 
		\frac{
			F_{\rm GH}\big({B}_\rho; 1,\nu,\sqrt{1+\widetilde{w}_1^2}\,\big)		
		}{
			F_{\rm GH}(B_0;3/2,\nu,1)
		}$ 
		\\ [1ex]   
		Bivariate unit-log-Laplace%{\color{red}*} 
		& $K_0(\sqrt{2x})$ & 	${1 \over (1-w_1) t_1 \sigma_1}\,
		f_{\rm L}(\widetilde{w}_1)\, 
		\frac{
			F_{\rm GH}({B}_\rho; {1\over 2},\sqrt{2},\vert \widetilde{w}_1 \vert)		
		}{
			F_{\rm L}(B_0)
		}$
		\\ [1ex]   
		Bivariate unit-log-slash%{\color{red}*}
		& $ x^{-{q+2\over 2}} \gamma({q+2\over 2},{x\over 2})$ & $	{1 \over (1-w_1) t_1 \sigma_1}\,
		f_{\rm SL}(\widetilde{w}_1;q)\, 
		\frac{
			F_{\rm ESL}({B}_\rho; \widetilde{w}_1 ,q+1)		
		}{
			F_{\rm SL}(B_0;q)
		}$
		\\ [1ex]   	
		\hline	
	\end{tabular}
%	}
	\label{table:2} % is used to refer this table in the text
\end{table}

	%
	%Considere os eventos $H=$``Chover hoje'', $A=$``Chover amanhã'' e $D=$``Chover depois de amanhã''. Note que, pelas informações do problema, $\mathbb{P}(A\vert H)=1/3$ e $\mathbb{P}(A\vert \overline{H})=1/4$. Aplicando a Lei da Probabilidade Total e rescrevendo as probabilidades condicionais, temos 
	%\begin{align*}
	%\mathbb{P}(D\vert H)
	%&=\mathbb{P}(D\cap A\vert H)+\mathbb{P}(D\cap \overline{A}\vert H)
	%\\[0,2cm]
	%&=
	%{\mathbb{P}(D\vert A\cap H)}\, {\mathbb{P}(A\cap H)\over \mathbb{P}(H)}
	%+
	%{\mathbb{P}(D\vert \overline{A}\cap H)}\, {\mathbb{P}(\overline{A}\cap H)\over \mathbb{P}(H)}
	%\\[0,2cm]
	%&=
	%{\mathbb{P}(D\vert A)}\, \mathbb{P}(A\vert H)
	%+
	%{\mathbb{P}(D\vert \overline{A})}\, \mathbb{P}(\overline{A}\vert H)
	%\\[0,2cm]
	%&=
	%(1/3)(1/3)+(1/4)[1-(1/3)],
	%\end{align*}
	%onde na terceira igualdade acima usamos o fato de que as condições do tempo (chuva) dependa unicamente do dia anterior.

	\subsection{Squared Mahalanobis Distance}\label{maha_sec}
	
	The squared Mahalanobis distance of a random vector $\boldsymbol{W}=(W_1, W_2)^\top$ and the vector $\log(\boldsymbol{\eta})=(\log(\eta_1),\log(\eta_2))^{\top}$ of a BULS distribution is defined as
	\begin{eqnarray*}
		d^2(\boldsymbol{W}, \log(\boldsymbol{\eta}))
		=
		{\widetilde{W}_1^2-2\rho\widetilde{W}_1\widetilde{W}_2+\widetilde{W}_2^2
			\over 
			1-\rho^2},
	\end{eqnarray*}
where  $\widetilde{W}_i
		=
		\log[({T_i/ \eta_i})^{1/\sigma_i}], \,
		T_i=-\log(1-W_i),
		$ and $
		\eta_i=\exp(\mu_i), 
		\ i=1,2$.
Then, analogous to Propositions 3.8 and 3.9 of \cite{Vila2022}, we have
the following  formulas for the CDF and PDF of the random variable $d^2(\boldsymbol{W}, \log(\boldsymbol{\eta}))$:
		\begin{align*}
		F_{d^2(\boldsymbol{W}, \log(\boldsymbol{\eta}))}(x)
		&=
		\displaystyle 
		{4\over Z_{g_c}}\, 
		\int_{0}^{\sqrt{x}}
		\left[\int_{0}^{\sqrt{x-z_1^2}} g_c(z_1^2+z_2^2) \, {\rm d}z_2\right] {\rm d}z_1,  \quad x> 0,
		\\[0,2cm]
		f_{d^2(\boldsymbol{W}, \log(\boldsymbol{\eta}))}(x)
		&={\pi\over Z_{g_c}}\, g_c(x), \quad x>0,
		\end{align*}
		where $Z_{g_c}$ is as in \eqref{partition function}.
	
	For example, upon taking $g_c(x)=\exp(-x/2)$ and $Z_{g_c}=2\pi$ (see Table \ref{table:1}),  we get
	$d^2(\boldsymbol{W}, \log(\boldsymbol{\eta}))\sim \chi^2_2$ (chi-square with $2$ degrees of freedom). 
		Next, upon taking $g_c(x)=(1+(x/\nu))^{-(\nu+2)/2}$ and $Z_{g_c}={{\Gamma({\nu/ 2})}\nu\pi/{\Gamma({(\nu+2)/ 2})}}$ (see Table \ref{table:1}), we have
			$d^2(\boldsymbol{W}, \log(\boldsymbol{\eta})) \sim 2 F_{2,\nu}$, where $F_{2,\nu}$ denotes the F-distribution with $2$ and $\nu$ degrees of freedom.

	\subsection{Independence}\label{ind}
	
	\begin{proposition}
		Let $\boldsymbol{W}=(W_1,W_2)^\top\sim {\rm BULS}(\boldsymbol{\theta},g_c)$. If $\rho=0$ and the density generator $g_c$ in \eqref{PDF-BULS} is such that
		\begin{align}\label{kernel-dec}
		g_c\big(x^2+y^2\big)
		=g_{c_1}\big(x^2\big) 
		g_{c_2}\big(y^2\big),
		\quad \forall (x,y)\in\mathbb{R}^2,
		\end{align}
		for some density generators $g_{c_1}$ and $g_{c_2}$, then $W_1$ and $W_2$ are independent.
	\end{proposition}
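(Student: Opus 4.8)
The plan is to verify independence directly through the factorization criterion: it suffices to exhibit the joint density in \eqref{PDF-BULS} as a product $h_1(w_1)\,h_2(w_2)$ of a function depending only on $w_1$ and a function depending only on $w_2$.

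First I would substitute $\rho=0$ into \eqref{PDF-BULS}. This sets $\sqrt{1-\rho^2}=1$ and collapses the quadratic form in the argument of $g_c$ to $\widetilde{w}_1^2+\widetilde{w}_2^2$, so that
\begin{align*}
f_{W_1,W_2}(w_1,w_2;\boldsymbol{\theta})
=
\frac{1}{(1-w_1)t_1\sigma_1\,(1-w_2)t_2\sigma_2\,Z_{g_c}}\,
g_c\big(\widetilde{w}_1^2+\widetilde{w}_2^2\big).
\end{align*}
Next I would observe that, by the very definitions in \eqref{PDF-BULS}, each of $t_i=-\log(1-w_i)$ and $\widetilde{w}_i=\log[(t_i/\eta_i)^{1/\sigma_i}]$ is a function of $w_i$ alone. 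Applying the assumed decomposition \eqref{kernel-dec} with $x=\widetilde{w}_1$ and $y=\widetilde{w}_2$ then gives $g_c(\widetilde{w}_1^2+\widetilde{w}_2^2)=g_{c_1}(\widetilde{w}_1^2)\,g_{c_2}(\widetilde{w}_2^2)$, whence the joint density becomes
\begin{align*}
f_{W_1,W_2}(w_1,w_2;\boldsymbol{\theta})
=
\underbrace{\frac{g_{c_1}(\widetilde{w}_1^2)}{(1-w_1)t_1\sigma_1}}_{\text{depends only on }w_1}
\cdot
\underbrace{\frac{g_{c_2}(\widetilde{w}_2^2)}{(1-w_2)t_2\sigma_2}}_{\text{depends only on }w_2}
\cdot
\frac{1}{Z_{g_c}}.
\end{align*}

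Distributing the constant $1/Z_{g_c}$ into the two factors, I would arrive at $f_{W_1,W_2}(w_1,w_2)=h_1(w_1)\,h_2(w_2)$ for $0<w_1,w_2<1$, and the factorization criterion for independence then yields the claim; integrating out one variable further identifies each marginal as being proportional to the corresponding factor.

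I do not anticipate a genuine obstacle here, since the Jacobian weights $1/[(1-w_i)t_i\sigma_i]$ and the standardized variables $\widetilde{w}_i$ separate cleanly across the two coordinates. The only point requiring care is to confirm that the hypothesis \eqref{kernel-dec} is exactly what is needed to break the single generator $g_c$, evaluated at the sum $\widetilde{w}_1^2+\widetilde{w}_2^2$, into a product. This is precisely the step that fails when $\rho\neq 0$: the cross term $-2\rho\widetilde{w}_1\widetilde{w}_2$ in the argument of $g_c$ obstructs any such coordinatewise separation, which is why $\rho=0$ appears as a necessary assumption alongside \eqref{kernel-dec}.
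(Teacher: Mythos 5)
Your proposal is correct and follows essentially the same route as the paper: both reduce, under $\rho=0$ and \eqref{kernel-dec}, to factorizing the joint density in \eqref{PDF-BULS} into a product of a function of $w_1$ alone and a function of $w_2$ alone, with independence then following from the standard factorization criterion on the product support $(0,1)\times(0,1)$. The only cosmetic difference is that the paper makes the normalization explicit—introducing $Z_{g_{c_1}}$ and $Z_{g_{c_2}}$ and integrating the factored density to conclude $Z_{g_c}=Z_{g_{c_1}}Z_{g_{c_2}}$, so the two factors are identified as the marginal PDFs outright—whereas you leave $1/Z_{g_c}$ distributed between the factors and recover the marginals up to proportionality, which is equally valid.
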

	\begin{proof}
		The proof follows the same steps as the proof of Proposition 3.11 of \cite{Vila2022}. For the sake of completeness, however, we present it here.

		Let $\rho=0$. From \eqref{kernel-dec}, the joint density \eqref{PDF-BULS} of $(W_1,W_2)$ is such that
		\begin{align}\label{pdf-product}
		f_{W_1,W_2}(w_1,w_2;\boldsymbol{\theta})
		=
		{Z_{g_{c_1}}Z_{g_{c_2}}\over Z_{g_c}}\,
		f_1(w_1;\mu_1,\sigma_1)
		f_2(w_2;\mu_2,\sigma_2), \quad\forall(w_1, w_2)\in(0,1)\times (0,1),
		\end{align}
		where
		$
		f_i(w_i;\mu_i,\sigma_i)=
		g_{c_i}\bigl(\widetilde{w_i}^2\big)/[(1-w_i)t_i\sigma_iZ_{g_{c_i}}], \ 0<w_i<1$, $Z_{g_{c_i}}
		=
		\int_{-\infty}^{\infty}
		g_{c_i}\big({z_i}^2\big)\, {\rm d}z_i, \ i=1,2,
		$
		and $\widetilde{w_i}$ and $t_i$ are as in \eqref{PDF-BULS}.
		Integrating \eqref{pdf-product} in terms of $w_1$ and $w_2$, we obtain
		\begin{align*}
		{Z_{g_{c_1}}Z_{g_{c_2}}\over Z_{g_c}}=1,
		\end{align*}
		and consequently, $Z_{g_c}=Z_{g_{c_1}}Z_{g_{c_2}}$. Therefore,
		\begin{align*}
		f_{W_1,W_2}(w_1,w_2;\boldsymbol{\theta})
		=
		f_1(w_1;\mu_1,\sigma_1)
		f_2(w_2;\mu_2,\sigma_2), \quad\forall(w_1, w_2)\in(0,1)\times (0,1).
		\end{align*}
		Moreover, it is easy to verify that $f_1$ and $f_2$ are PDFs corresponding to univariate symmetric random variables \citep{Vanegas2016}. Then, $W_1$ and $W_2$ are statistically independent, and even more, $f_i=f_{W_i}$, for $i=1,2$  \cite[see Proposition 2.5 of][]{James2004}.
	\end{proof}
	
	\begin{remark}
		In Table \ref{table:1}, the density generator of the bivariate unit-log-normal is the unique one that  satisfies \eqref{kernel-dec}.
	\end{remark}

	\subsection{Moments}\label{moments}

	For $\boldsymbol{W}=(W_1,W_2)^\top\sim {\rm BULS}(\boldsymbol{\theta},g_c)$, $0<W_i<1$, it is clear that  $0\leqslant\mathbb{E}(W_i^r)\leqslant 1$, for any $r>0$ and $i=1,2$.
	Therefore, the positive moments of $W_i$ always exist.
	
	In general, for any $r\in\mathbb{R}$, the moments of $W_i$, $i=1,2$, admit the following representations:
	\begin{align*}
	\begin{array}{lllll}
	\mathbb{E}(W_1^r)
	&=
	\displaystyle
	\mathbb{E}
	\left\{
	1-\exp\big[-\eta_1 \exp(\sigma_1 Z_1)\big]
	\right\}^r, 
	\\[0,5cm]
	\mathbb{E}(W_2^r)
	&=
	\displaystyle
	\mathbb{E}
	\big(
	1-\exp\big\{-\eta_2 
	\exp\big(\sigma_2 \big[{{\rho} Z_1+\sqrt{1-\rho^2} Z_2}\big]\big)\big\}
	\big)^r
	=
	\displaystyle
	\mathbb{E}
	\left\{
	1-\exp\big[-\eta_2 \exp(\sigma_2 Z_2)\big]
	\right\}^r,
	\end{array}
	\end{align*}
	where in the last equality we used that $\rho Z_1+\sqrt{1-\rho^2}Z_2$ and $Z_2$ have the same distribution (see Lemma \ref{Prop-dual-1}).
	Here, $Z_1$ and $Z_2$ are as given in Proposition \ref{Stochastic Representation}.

	\section{Maximum likelihood estimation} \label{Sec:5}
	\noindent
	Let $\{(W_{1i},W_{2i})^\top:i=1,\ldots,n\}$ be a bivariate random sample of size $n$ from the ${\rm BULS}(\boldsymbol{\theta},g_c)$ distribution with PDF as in \eqref{PDF-BULS}, and let $(w_{1i},w_{2i})^\top$ be the corresponding observations of $(W_{1i},W_{2i})^\top$. Then, the log-likelihood function for $\boldsymbol{\theta}=(\eta_1,\eta_2,\sigma_1,\sigma_2,\rho)^{\top}$,
	without the additive constant, is given by
	%
	%\begin{eqnarray}\label{PDF-BULS}
	%\!\!f_{W_1,W_2}(w_1,w_2;\boldsymbol{\theta})
	%=
	%{1\over 
	%	\bigl[\bigotimes_{i=1}^2 (1-w_i)t_i\sigma_i\bigr] \sqrt{1-\rho^2}Z_{g_c}}\,
	%{\displaystyle 
	%	g_c\Biggl(
	%	{\widetilde{w}_1^2-2\rho\widetilde{w}_1\widetilde{w}_2+\widetilde{w}_2^2
	%		\over 
	%		1-\rho^2}
	%	\Biggr)
	%},
	%\ 
	%0<w_1, w_2<1,
	%\\[0,3cm]
	%\widetilde{w}_i
	%=
	%\log\biggl[\left({t_i\over \eta_i}\right)^{1/\sigma_i}\biggr],  
	%\ t_i=-\log(1-w_i), \ \eta_i=\exp(\mu_i), \ i=1,2, \nonumber
	%\end{eqnarray}
	%
	\begin{align*}
	\ell(\boldsymbol{\theta})
	=
	%-n\log(\sigma_1)-n\log(\sigma_2)
	-n\sum_{i=1}^{2}\log(\sigma_i)
	-
	{n\over 2}\,\log\big({1-\rho^2}\big)
	%-\sum_{i=1}^{n}\log(t_{1i} t_{2i})
	+
	\sum_{i=1}^{n}
	\log	
	g_c\Biggl(
	{\widetilde{w}_{1i}^2-2\rho\widetilde{w}_{1i}\widetilde{w}_{2i}+\widetilde{w}_{2i}^2
		\over 
		1-\rho^2}
	\Biggr), \quad
	0<w_{1i},w_{2i}<1,
\end{align*}
where $
	%t_{1i}>0,t_{2i}>0,\
	\widetilde{w}_{ki}=\log[({t_{ki}/ \eta_k})^{1/\sigma_k}],  
	\ t_{ki}=-\log(1-w_{ki})>0
	$ and $
	\eta_k=\exp(\mu_k), \ k=1,2; \ i=1,\ldots,n.$

	In the case when a supremum $\widehat{\boldsymbol{\theta}}=(\widehat{\eta_1},\widehat{\eta_2},\widehat{\sigma_1},\widehat{\sigma_2},\widehat{\rho})^{\top}$ exists, it must satisfy the following likelihood equations:
	\begin{align}\label{likelihood equation}
	{\partial \ell({\boldsymbol{\theta}})\over\partial\eta_1}
	\bigg\vert_{{\boldsymbol{\theta}}=\widehat{\boldsymbol{\theta}}}
	=0,
	\quad 
	{\partial \ell(\boldsymbol{\theta})\over\partial\eta_2}=0,
	\quad 
	{\partial\ell(\boldsymbol{\theta})\over\partial\sigma_1}
	\bigg\vert_{{\boldsymbol{\theta}}=\widehat{\boldsymbol{\theta}}}=0,
	\quad 
	{\partial\ell(\boldsymbol{\theta})\over\partial\sigma_2}
	\bigg\vert_{{\boldsymbol{\theta}}=\widehat{\boldsymbol{\theta}}}=0,
	\quad 
	{\partial\ell(\boldsymbol{\theta})\over\partial\rho}
	\bigg\vert_{{\boldsymbol{\theta}}=\widehat{\boldsymbol{\theta}}}=0,
	\end{align}
	with
	\begin{align}
	%\begin{array}{lllll}
	&{\partial \ell(\boldsymbol{\theta})\over\partial\eta_1}
	=
	\frac{2}{\sigma_1\eta_1(1-\rho^2)}
	\sum_{i=1}^{n}
	\big(\rho \widetilde{w}_{2i} -\widetilde{w}_{1i}\big)
	G(\widetilde{w}_{1i},\widetilde{w}_{2i})
	, \nonumber
	\\[0,1cm]
	&{\partial \ell(\boldsymbol{\theta})\over\partial\eta_2}
	=
	\frac{2}{\sigma_2\eta_2(1-\rho^2)}
	\sum_{i=1}^{n}
	\big(\rho \widetilde{w}_{1i} -\widetilde{w}_{2i}\big)
	G(\widetilde{w}_{1i},\widetilde{w}_{2i})
	, \nonumber
	\\[0,1cm]
	&{\partial\ell(\boldsymbol{\theta})\over\partial\sigma_1}
	=
	-\frac{n}{\sigma_1}
	+
	{2\over \sigma_1(1-\rho^2)}
	\sum_{i=1}^{n}
	\widetilde{w}_{1i}
	\big(\rho\widetilde{w}_{2i}-\widetilde{w}_{1i}\big)
	G(\widetilde{w}_{1i},\widetilde{w}_{2i})
	, \nonumber
	\\[0,1cm]
	&{\partial\ell(\boldsymbol{\theta})\over\partial\sigma_2}
	=
	-\frac{n}{\sigma_2}
	+
	{2\over \sigma_2(1-\rho^2)}
	\sum_{i=1}^{n}
	\widetilde{w}_{2i}
	\big(\rho\widetilde{w}_{1i}-\widetilde{w}_{2i}\big)
	G(\widetilde{w}_{1i},\widetilde{w}_{2i})
	, \nonumber
	\\[0,1cm]
	&{\partial \ell(\boldsymbol{\theta})\over\partial\rho}
	=
	{n\rho\over 1-\rho^2}
	-
	{2\over (1-\rho^2)^2}
	\sum_{i=1}^{n}
	%H(\widetilde{t}_{1i},\widetilde{t}_{2i})
	\big(\rho \widetilde{w}_{1i}-\widetilde{w}_{2i}\big)
	\big(\rho\widetilde{w}_{2i}-\widetilde{w}_{1i}\big)
	%
	%\left[\rho\big(
	%\widetilde{t_{1i}}^2+\widetilde{t_{2i}}^2\big)
	%-
	%(1+\rho^2) \widetilde{t}_{1i}\widetilde{t}_{2i}\right]
	G(\widetilde{w}_{1i},\widetilde{w}_{2i}), \label{rho-mle}
	%\end{array}
	\end{align}
	where we have used the notation
	\begin{align}\label{def-x-rho}
	G(\widetilde{w}_{1i},\widetilde{w}_{2i})
	=
	{g_c'(x_{\rho,i})\over g_c(x_{\rho,i})},
	%\\[0,2cm]
	%&H(\widetilde{t_{1i}},\widetilde{t_{2i}})
	%=
	%\rho\big(
	%\widetilde{t_{1i}}^2+\widetilde{t_{2i}}^2\big)
	%-
	%(1+\rho^2) \widetilde{t}_{1i}\widetilde{t}_{2i},
	%\quad i=1,\ldots,n.
	\end{align}
	with $x_{\rho,i}=	{(\widetilde{w}_{1i}^2-2\rho\widetilde{w}_{1i}\widetilde{w}_{2i}+\widetilde{w}_{2i}^2)
		/
		(1-\rho^2)},
	\ i=1,\ldots,n.$
	
	Observe that the likelihood equations in \eqref{likelihood equation} can be written as
	\begin{align*}
	&\sum_{i=1}^{n}
	\widetilde{w}_{1i}\,
	G(\widetilde{w}_{1i},\widetilde{w}_{2i})
	\bigg\vert_{{\boldsymbol{\theta}}=\widehat{\boldsymbol{\theta}}}
	=0,
	\\[0,1cm]
	&
	\sum_{i=1}^{n}
	\big(\widetilde{w}_{1i}^2-\widetilde{w}_{2i}^2\big)\,
	G(\widetilde{w}_{1i},\widetilde{w}_{2i})
	\bigg\vert_{{\boldsymbol{\theta}}=\widehat{\boldsymbol{\theta}}}
	=0,
	\\[0,2cm]
	&
	\sum_{i=1}^{n}
	\widetilde{w}_{2i}
	\left[2\rho
	\widetilde{w}_{2i}
	-
	(1+\rho^2) \widetilde{w}_{1i}\right]
	G(\widetilde{w}_{1i},\widetilde{w}_{2i})
	\bigg\vert_{{\boldsymbol{\theta}}=\widehat{\boldsymbol{\theta}}}
	=-{n\widehat{\rho}(1-\widehat{\rho}^2)\over 2}\,.
	\end{align*}
	Any nontrivial root $\widehat{\boldsymbol{\theta}}$ of the above likelihood equations is an ML estimator in the loose sense. When the parameter value provides the absolute maximum of the log-likelihood function, it becomes the ML estimator in the strict sense.
	
	In the following proposition, we discuss the existence of the ML estimator $\widehat{\rho}$ when all other parameters are known.
	\begin{proposition}\label{prop-existence-MLE}
		Let $g_c$ be a density generator such that
		\begin{align}\label{condition-g}
		g'_c(x)=r(x) g_c(x),\quad -\infty<x<\infty,
		\end{align}
		for some real-valued function $r(x)$ with $\lim_{\rho\to \pm 1} r(x_{\rho,i})=c\in(-\infty,0)$, where $x_{\rho,i}$, $i=1,\ldots,n$, are as in \eqref{def-x-rho}.
		If the parameters $\eta_1,\eta_2,\sigma_1$ and $\sigma_2$ are all known, then \eqref{rho-mle} has at least one root in the interval $(-1, 1)$.
	\end{proposition}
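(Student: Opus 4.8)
The plan is to clear the denominators in \eqref{rho-mle} and reduce the existence of a root to a sign change detectable at the two endpoints $\rho\to\pm 1$, which is then captured by the intermediate value theorem. First I would use hypothesis \eqref{condition-g} to replace $G(\widetilde{w}_{1i},\widetilde{w}_{2i})=g_c'(x_{\rho,i})/g_c(x_{\rho,i})$ by $r(x_{\rho,i})$, and rewrite \eqref{rho-mle} as $\partial\ell/\partial\rho = H(\rho)/(1-\rho^2)^2$, where
\begin{align*}
H(\rho)=n\rho(1-\rho^2)-2\sum_{i=1}^{n}\big(\rho\widetilde{w}_{1i}-\widetilde{w}_{2i}\big)\big(\rho\widetilde{w}_{2i}-\widetilde{w}_{1i}\big)\,r(x_{\rho,i}).
\end{align*}
Since $(1-\rho^2)^2>0$ on $(-1,1)$, the equations $\partial\ell/\partial\rho=0$ and $H(\rho)=0$ have exactly the same roots there, so it suffices to find a zero of $H$ in $(-1,1)$. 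This rewriting is what removes the indeterminate $\infty-\infty$ behaviour that $\partial\ell/\partial\rho$ itself exhibits near $\rho=\pm 1$, where the two terms blow up like $(1-\rho^2)^{-1}$ and $(1-\rho^2)^{-2}$.

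Next I would argue that $H$ is continuous on $(-1,1)$ (because $g_c$ is a positive, continuously differentiable generator and $\rho\mapsto x_{\rho,i}$ is continuous there) and extends continuously to the closed interval $[-1,1]$ by the limiting hypothesis $\lim_{\rho\to\pm 1}r(x_{\rho,i})=c$. Evaluating the one-sided limits, using $n\rho(1-\rho^2)\to 0$ together with the collapse of the quadratic factors to perfect squares, namely $(\rho\widetilde{w}_{1i}-\widetilde{w}_{2i})(\rho\widetilde{w}_{2i}-\widetilde{w}_{1i})\to-(\widetilde{w}_{1i}-\widetilde{w}_{2i})^2$ as $\rho\to 1$ and $\to(\widetilde{w}_{1i}+\widetilde{w}_{2i})^2$ as $\rho\to-1$, I obtain
\begin{align*}
\lim_{\rho\to 1^-}H(\rho)=2c\sum_{i=1}^{n}\big(\widetilde{w}_{1i}-\widetilde{w}_{2i}\big)^2,\qquad \lim_{\rho\to-1^+}H(\rho)=-2c\sum_{i=1}^{n}\big(\widetilde{w}_{1i}+\widetilde{w}_{2i}\big)^2.
\end{align*}

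Finally, since $c\in(-\infty,0)$, the first limit is $\leqslant 0$ and the second is $\geqslant 0$; under the mild non-degeneracy that the observations are neither all equal across the two coordinates nor all opposite, both inequalities are strict, so $H$ takes a strictly positive value near $\rho=-1$ and a strictly negative value near $\rho=1$. By the intermediate value theorem $H$ then has a zero in $(-1,1)$, which is the claimed root of \eqref{rho-mle}. The main obstacle is precisely this endpoint analysis: one must first clear denominators to resolve the indeterminacy, then verify that the two quadratic factors reduce to the definite-sign perfect squares $-(\widetilde{w}_{1i}-\widetilde{w}_{2i})^2$ and $(\widetilde{w}_{1i}+\widetilde{w}_{2i})^2$, whose signs — combined with $c<0$ — are exactly what force the sign change; the fully degenerate data configurations (all $\widetilde{w}_{1i}=\pm\widetilde{w}_{2i}$) would need to be excluded or handled separately, as they push the relevant zero to the boundary.
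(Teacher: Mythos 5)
Your proof is correct and takes essentially the same route as the paper: the paper's proof is a one-line appeal to the intermediate value theorem (deferring details to Proposition 5.1 of Vila et al.\ 2022), and your argument --- clearing the $(1-\rho^2)^2$ denominator to form $H(\rho)$, computing the endpoint limits $2c\sum_{i=1}^{n}(\widetilde{w}_{1i}-\widetilde{w}_{2i})^2$ and $-2c\sum_{i=1}^{n}(\widetilde{w}_{1i}+\widetilde{w}_{2i})^2$ with $c<0$, and invoking the IVT --- is precisely that argument carried out in full. Your closing caveat about the degenerate configurations ($\widetilde{w}_{1i}=\pm\widetilde{w}_{2i}$ for all $i$), where the sign change becomes non-strict, is a legitimate point that the paper's terse proof glosses over.
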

	\begin{proof}
		The proof of this result follows by direct application of Intermediate value theorem. For more details, see  Proposition 5.1. of \cite{Vila2022}.
	\end{proof}
	%
%	\begin{remark}
%		\begin{itemize}
%			\item The density generator of the bivariate uni-log-normal (or bivariate uni-log-power-exponential with $\xi= 0$)  satisfies the hypotheses of Proposition \ref{prop-existence-MLE} with $r(x_{\rho,i})=-1/2$.
%			\item The density generator of the bivariate uni-log-Kotz type (case $\delta=1$) satisfies the hypotheses of Proposition \ref{prop-existence-MLE} with $r(x_{\rho,i})=(-\lambda x_{\rho,i}+\xi-1)/x_{\rho,i} {\longrightarrow}-\lambda$ as $\rho\to\pm 1$, $\forall i=1,\ldots,n$.
%		\end{itemize}
%		So, in these cases, Proposition \ref{prop-existence-MLE} can be applied to guarantee the existence of an ML estimator $\widehat{\rho}$ of $\rho$ in the loose sense.
%	\end{remark}
%	\begin{remark}
%		The  density generators of the Bivariate Log-Kotz type (with $\delta<1$), 
%		the Bivariate Log-Student-$t$, 
%		the Bivariate Log-Pearson Type VII and
%		the Bivariate Log-power-exponential (with $\xi\neq 0$)
%		%
%		%
%		%Table \ref{table:1}: $g_c(x)=(1+{x\over\nu})^{-(\nu+2)/ 2}$, $-\infty<x<\infty$, $\nu>0$, and
%		%$g_c(x)=(1+{x\over\theta})^{-\xi}$, $-\infty<x<\infty$, $\xi>1$, $\theta>0$, 
%		satisfy the condition \eqref{condition-g} with
%		$r(x)=(-\lambda\delta x^\delta+\xi-1)/x$,
%		$r(x)=-(\nu+2)/2(1+{x\over\nu})$, 
%		$r(x)=-\xi/(\theta+x)$
%		and
%		$r(x)=-x^{-\xi/(\xi+1)}/2(\xi+1)$,  
%		respectively, but in all these cases $ r(x_{\rho,i})\longrightarrow 0$ as $\rho\to\pm 1$, $\forall i=1,\ldots,n$.
%		Hence, in these cases, Proposition \ref{prop-existence-MLE} cannot be applied.
%	\end{remark}
	
	For the BULS model, no closed-form solution to the maximization problem is available, and an MLE can only be found by means of numerical optimization. Under mild regularity conditions \citep{Cox1974,Davison2008}, the asymptotic distribution of the ML estimator  $\widehat{\boldsymbol{\theta}}$ of $\boldsymbol{\theta}$ is as follows: $(\widehat{\boldsymbol{\theta}}-\boldsymbol{\theta})\stackrel{\mathscr D}{\longrightarrow} N(\boldsymbol{0},I^{-1}(\boldsymbol{\theta}))$,
	where  
	%``$\stackrel{\mathscr D}{\longrightarrow}$'' denotes convergence in distribution, 
	$\boldsymbol{0}$ is the zero mean vector and $I^{-1}(\boldsymbol{\theta})$ is the inverse expected 
	Fisher information matrix.
	The main use of the last convergence is to construct confidence regions and to perform
	hypothesis testing for $\boldsymbol{\theta}$ \citep{Davison2008}.

\section{Simulation study} \label{Sec:4}
\noindent

In this section, we carry out a Monte Carlo simulation study for evaluating the performance of the ML estimators of the parameters of BULS distributions.  For illustration purposes, we only present results for the bivariate unit-log-normal model. The simulation scenario considers is as follows: 1,000 Monte Carlo replications, sample size $n \in (25,100,500,700)$, vector of true parameters $(\eta_1,\eta_2,\sigma_1,\sigma_2)= (1,1,0.5,0.5)$, $\rho \in \{0,0.25,0.5,0.75,0.95\}$ (negative values of $\rho$ produce the same results and so are omitted). To study the performance of the ML estimators, we computed the bias, root mean square error (RMSE), and coverage probability (CP), defined by
\begin{eqnarray*}
 \widehat{\textrm{Bias}}(\widehat{\theta}) &=&  \frac{1}{N} \sum_{i = 1}^{N} \widehat{\theta}^{(i)} - \theta ,\\
\widehat{\mathrm{RMSE}}(\widehat{\theta}) &=& {\sqrt{\frac{1}{N} \sum_{i = 1}^{N} (\widehat{\theta}^{(i)} - \theta)^2}}, \\
\widehat{\mathrm{CP}}(\widehat{\theta}) &=& \frac{1}{N} \sum_{i = 1}^{N} \mathcal{I}(\theta \in [L^{(i)}_{\widehat{\theta}},U^{(i)}_{\widehat{\theta}}]),
\end{eqnarray*}
where $\theta$ and $\widehat{\theta}^{(i)}$ are the true parameter value and its $i$-th ML estimate, $N$ is the number of Monte Carlo replications,  $\mathcal{I}$ is an indicator function taking the value 1 if $\theta\in \left[L^{(i)}_{\widehat{\theta}},U^{(i)}_{\widehat{\theta}}\right]$, and 0 otherwise, where $L^{(i)}_{\widehat{\theta}}$ and $U^{(i)}_{\widehat{\theta}}$ are the $i$-th upper and lower limit estimates of the 95\% confidence interval. We expect that, as the sample size increases, the bias and RMSE would decrease, and the CP would approach the 95\% nominal level.

The obtained simulation results are presented in Figure \ref{fig_normal_mc}. We observe that the results obtained for the chosen bivariate unit-log-normal distribution are as expected in that as the sample size increases, the bias and RMSE both decrease and that the CP approaches the 95\% nominal level. Finally, in general, the results do not seem to depend on the parameter $\rho$.

\begin{figure}[H]
\vspace{-0.25cm}
\centering
{\includegraphics[height=3.5cm,width=3.5cm]{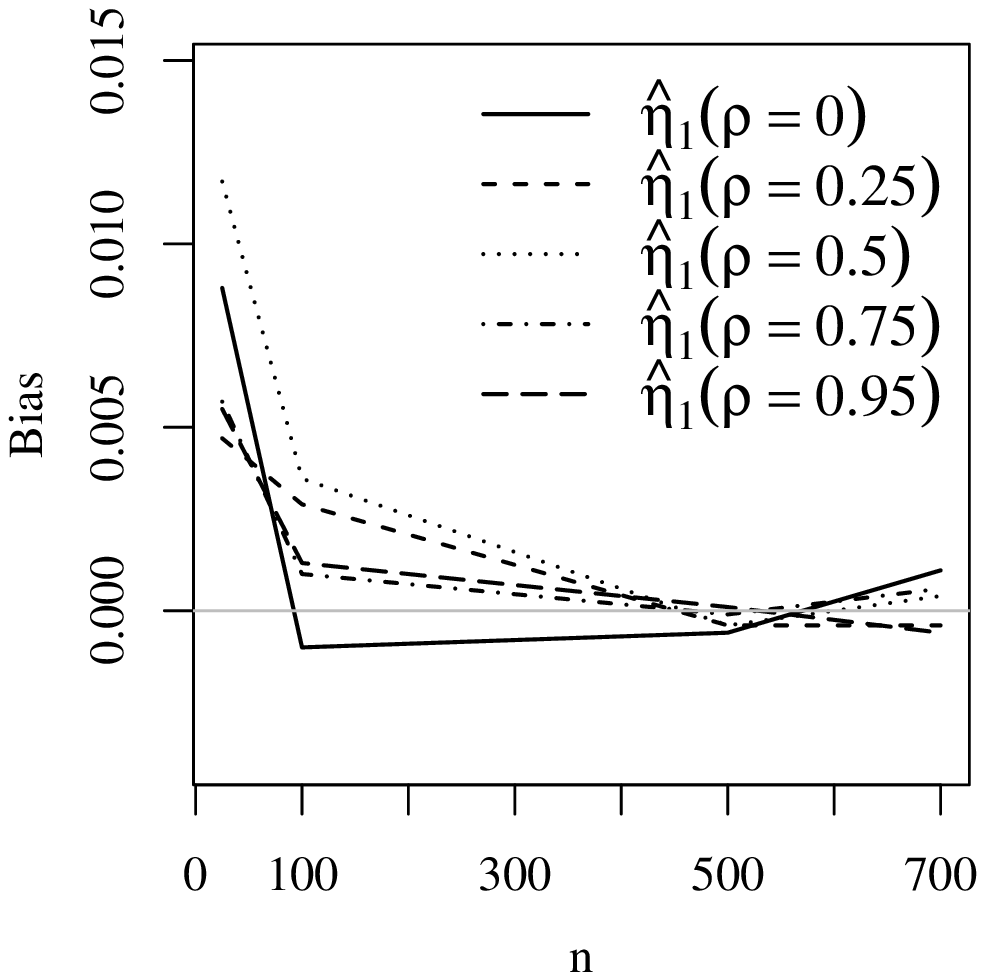}}\hspace{-0.25cm}
{\includegraphics[height=3.5cm,width=3.5cm]{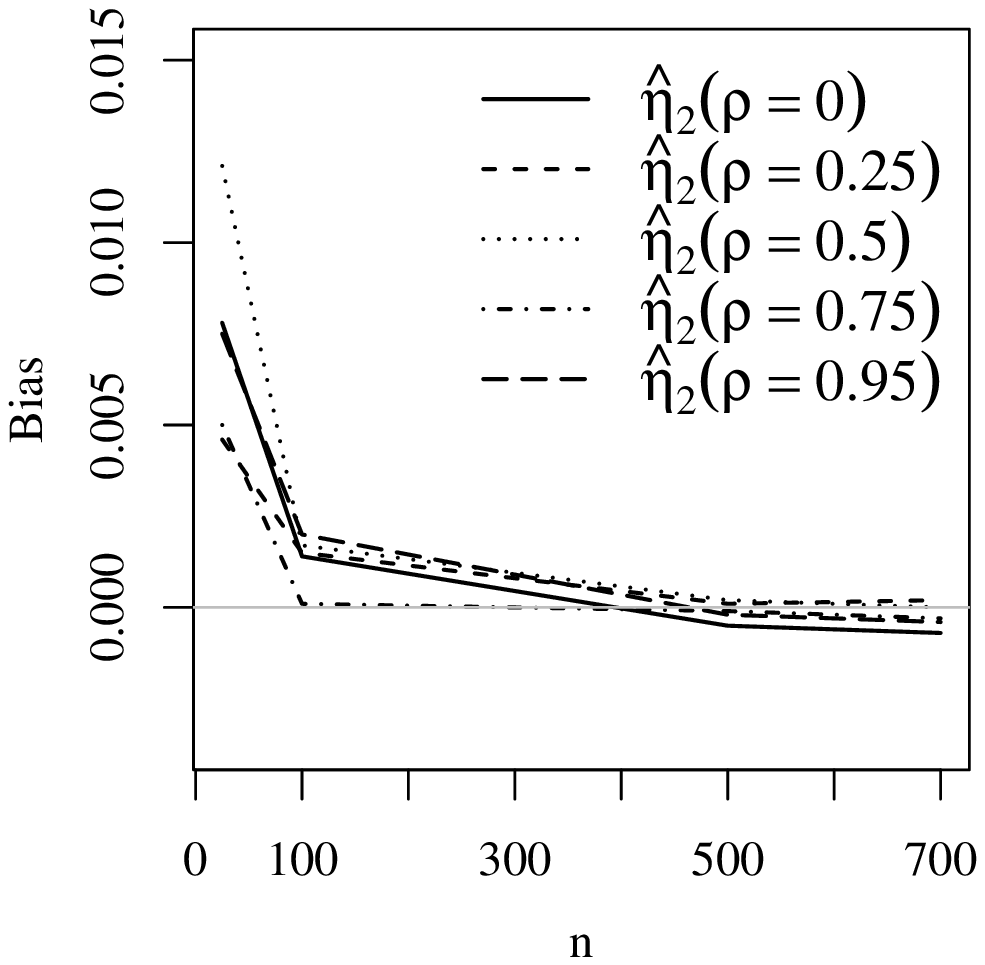}}\hspace{-0.25cm}
{\includegraphics[height=3.5cm,width=3.5cm]{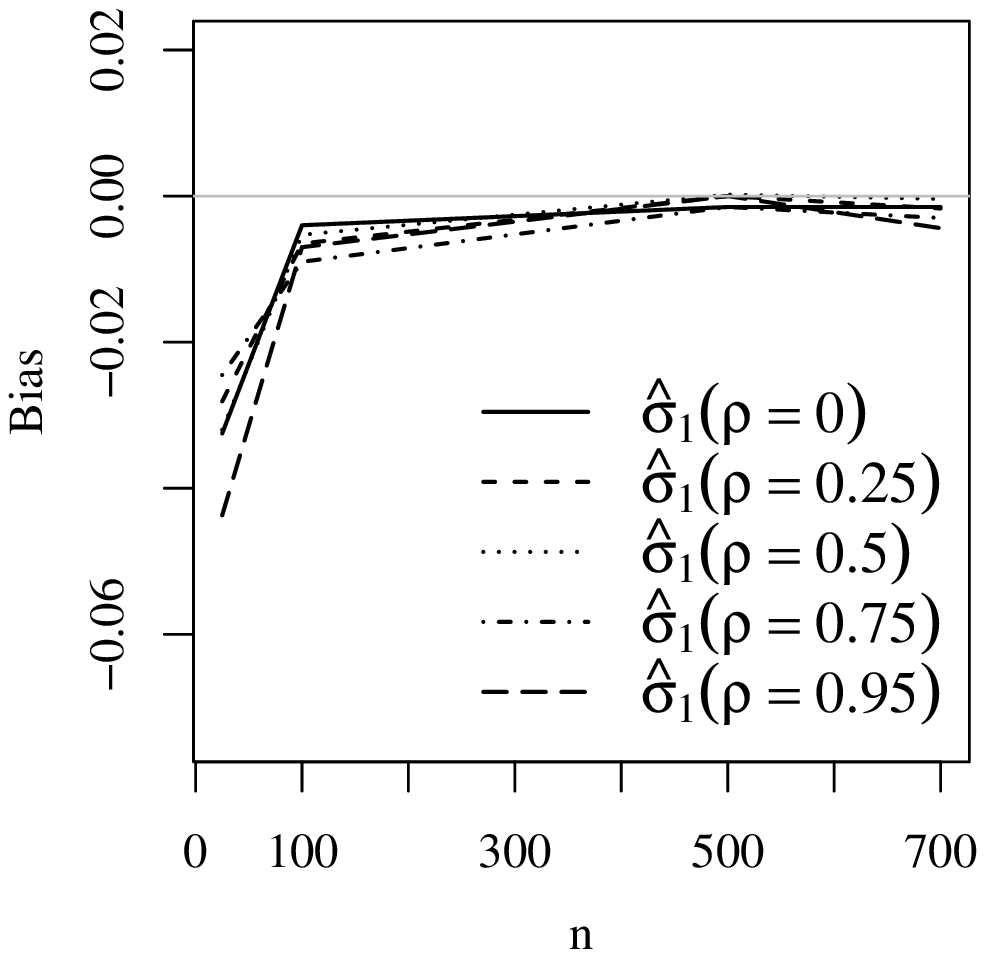}}\hspace{-0.25cm}
{\includegraphics[height=3.5cm,width=3.5cm]{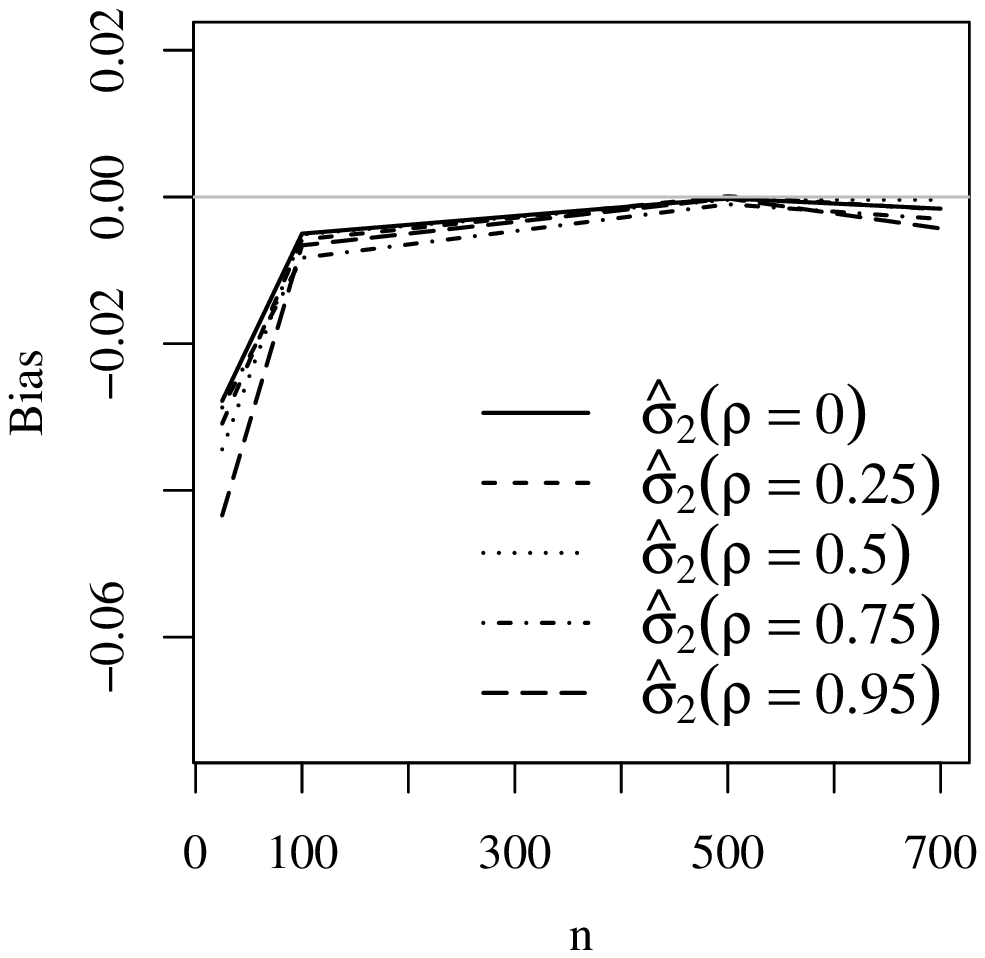}}\hspace{-0.25cm}
{\includegraphics[height=3.5cm,width=3.5cm]{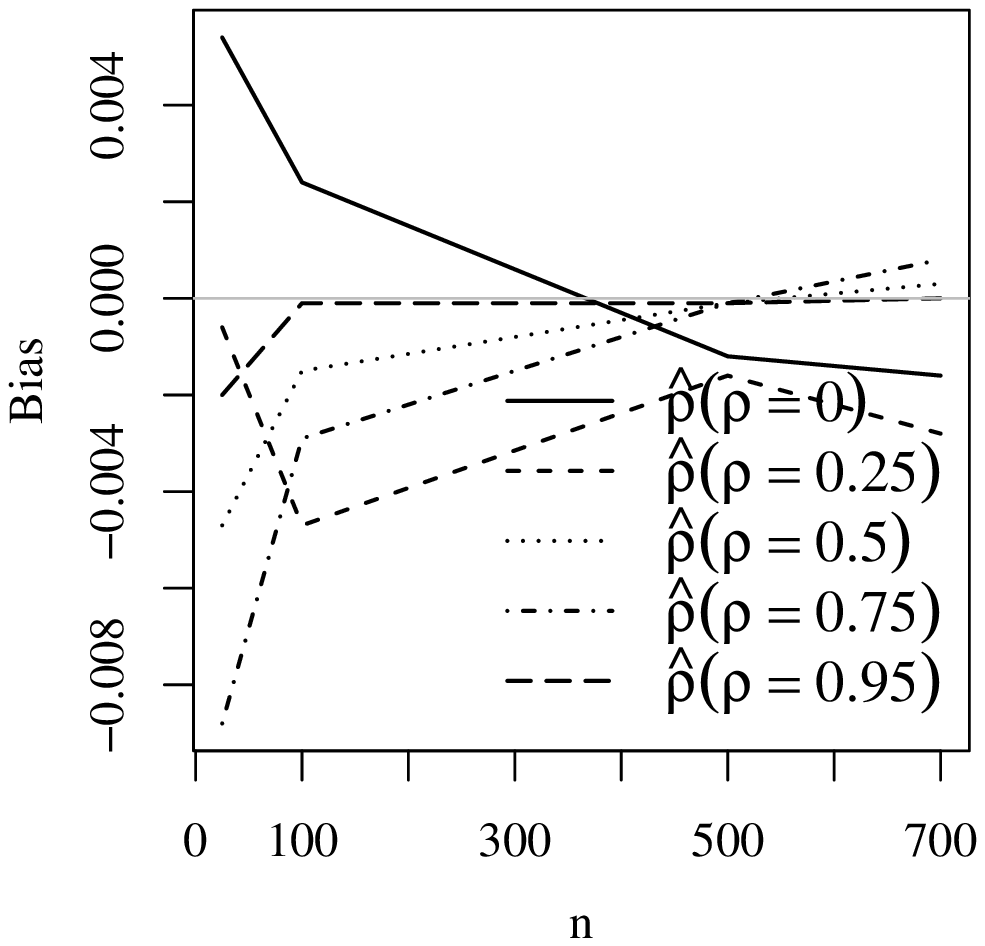}}
{\includegraphics[height=3.5cm,width=3.5cm]{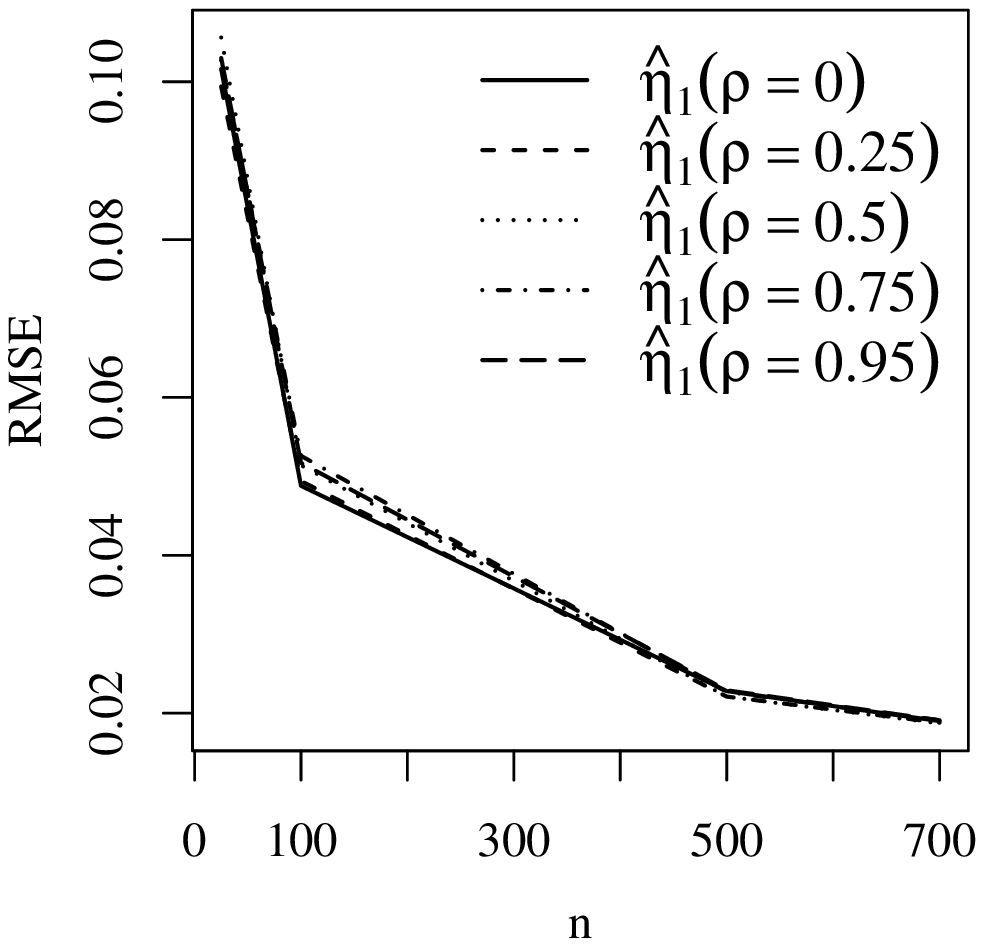}}\hspace{-0.25cm}
{\includegraphics[height=3.5cm,width=3.5cm]{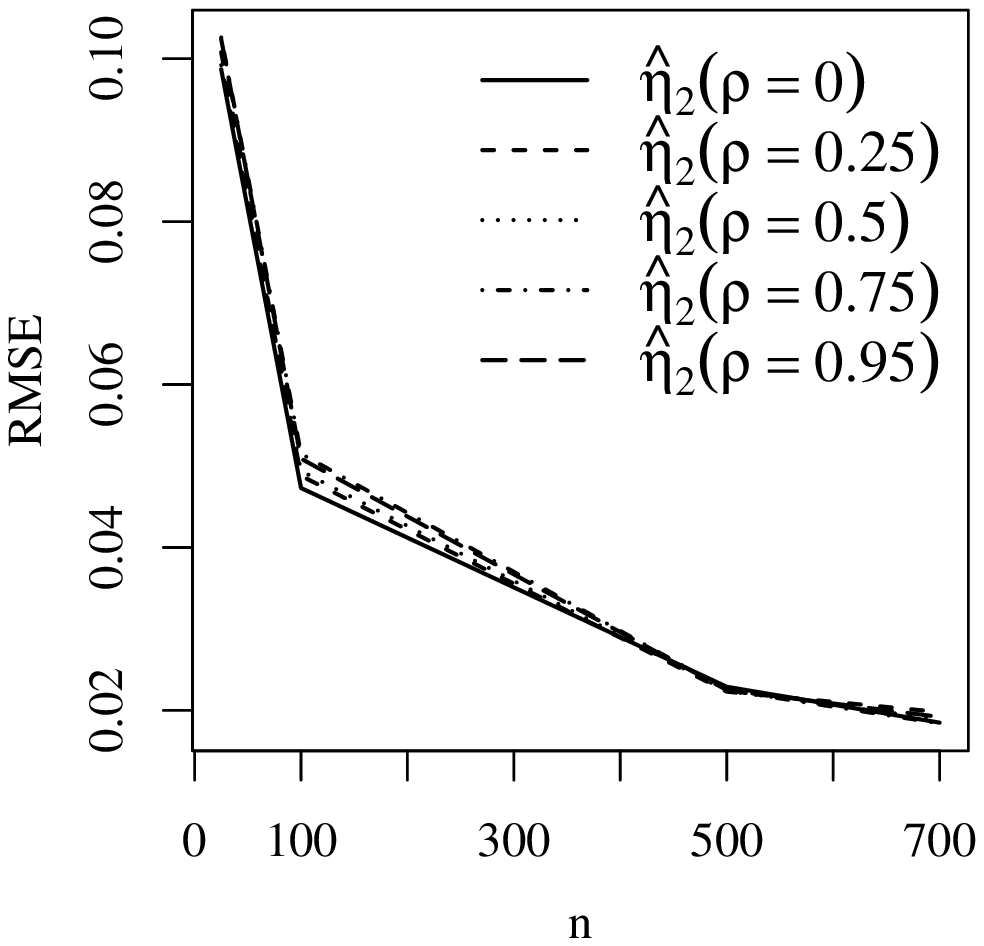}}\hspace{-0.25cm}
{\includegraphics[height=3.5cm,width=3.5cm]{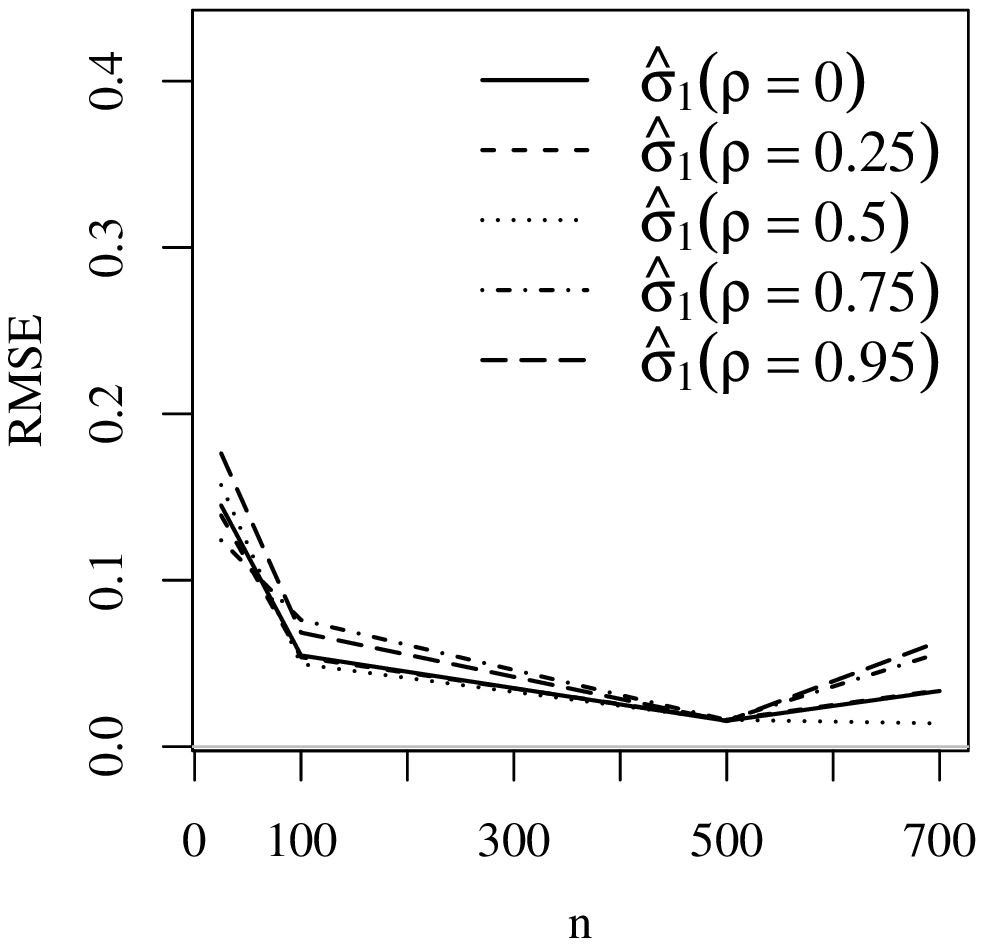}}\hspace{-0.25cm}
{\includegraphics[height=3.5cm,width=3.5cm]{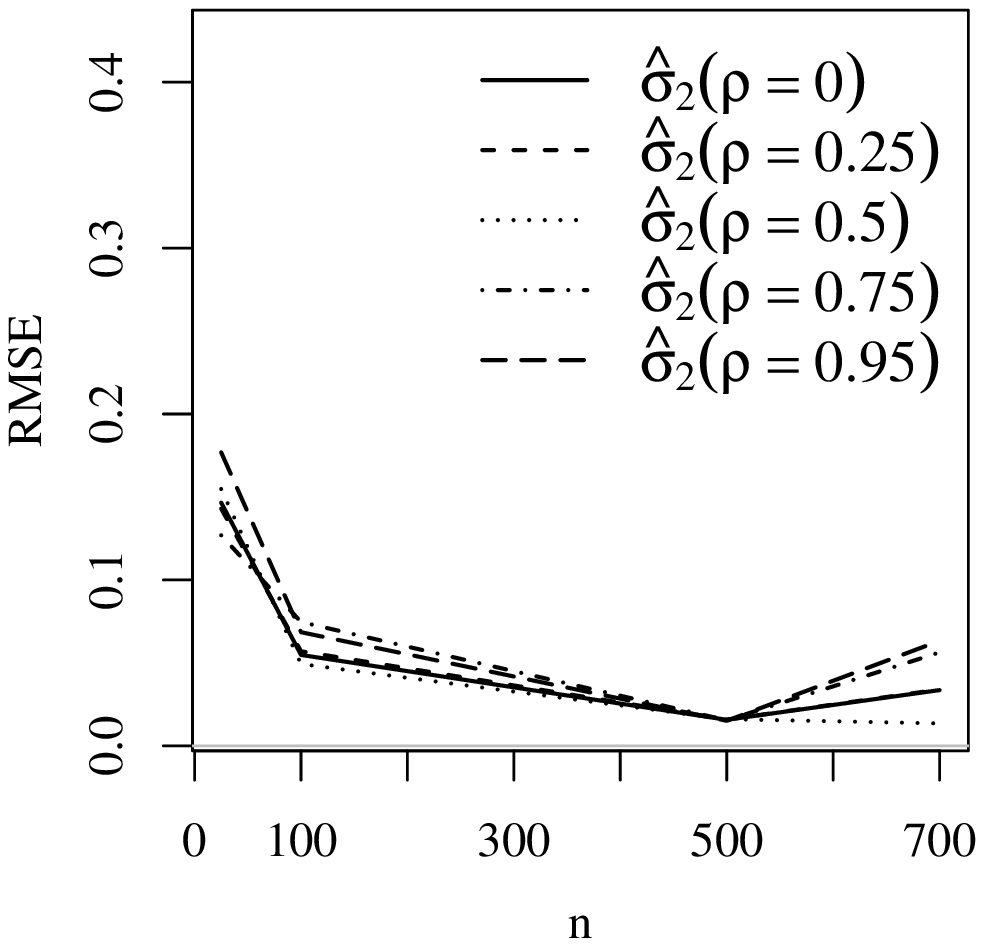}}\hspace{-0.25cm}
{\includegraphics[height=3.5cm,width=3.5cm]{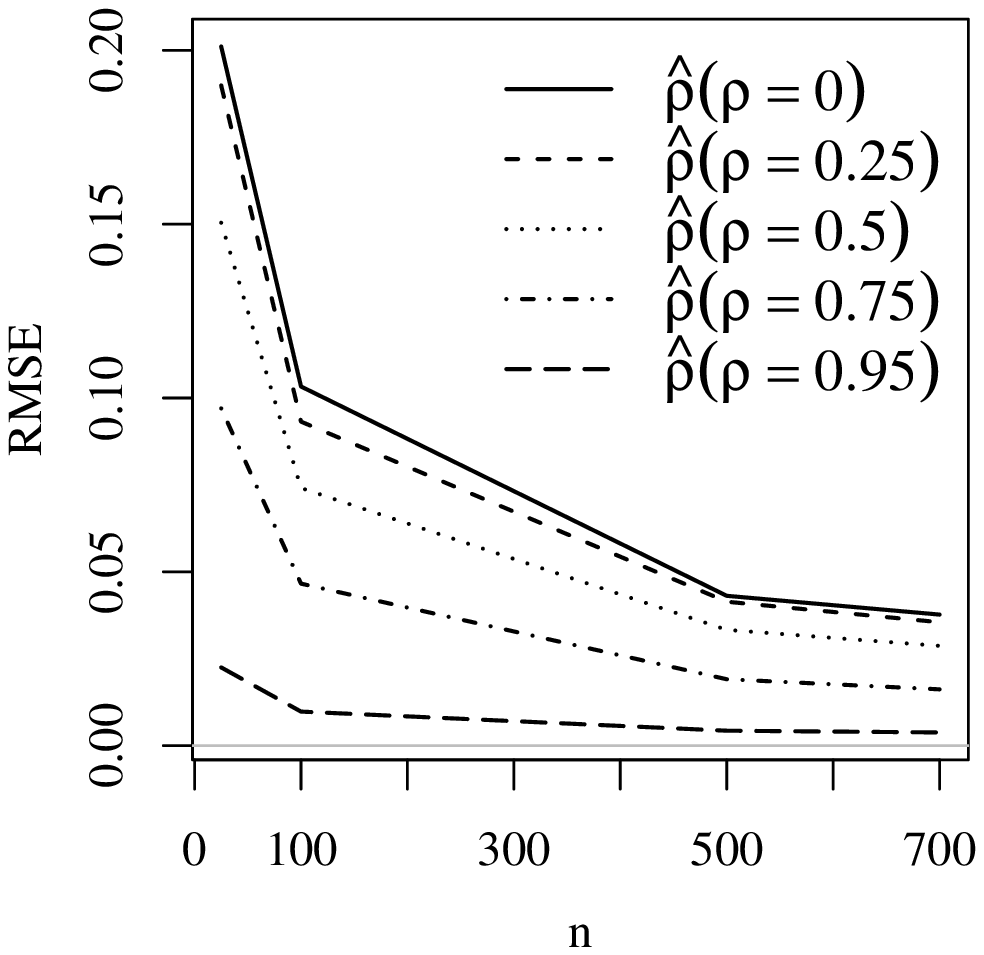}}
{\includegraphics[height=3.5cm,width=3.5cm]{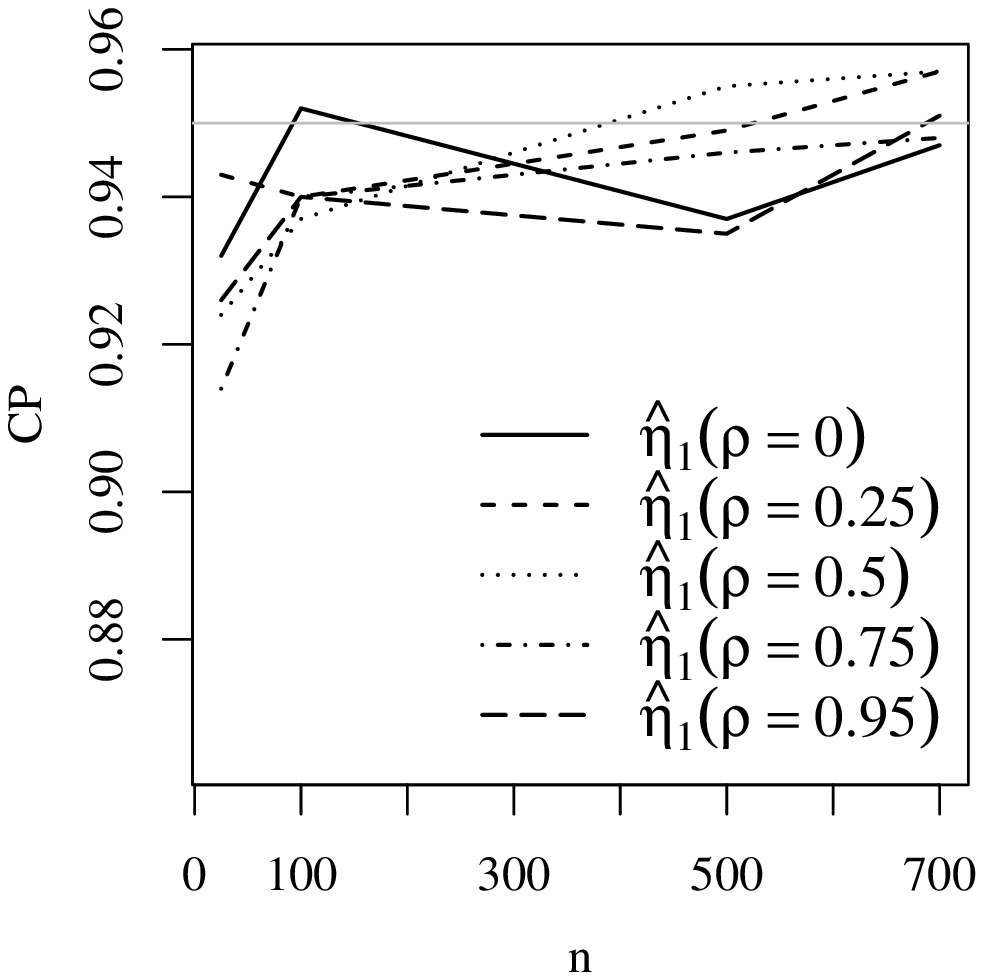}}\hspace{-0.25cm}
{\includegraphics[height=3.5cm,width=3.5cm]{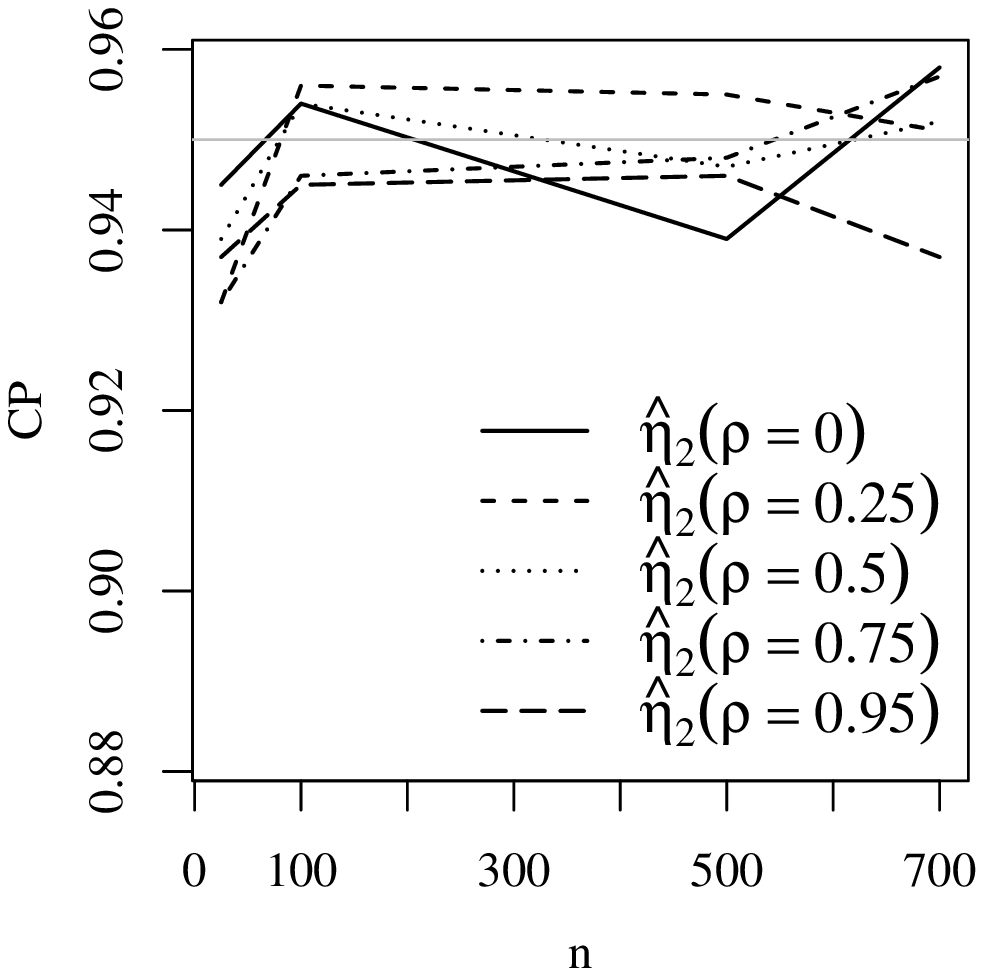}}\hspace{-0.25cm}
{\includegraphics[height=3.5cm,width=3.5cm]{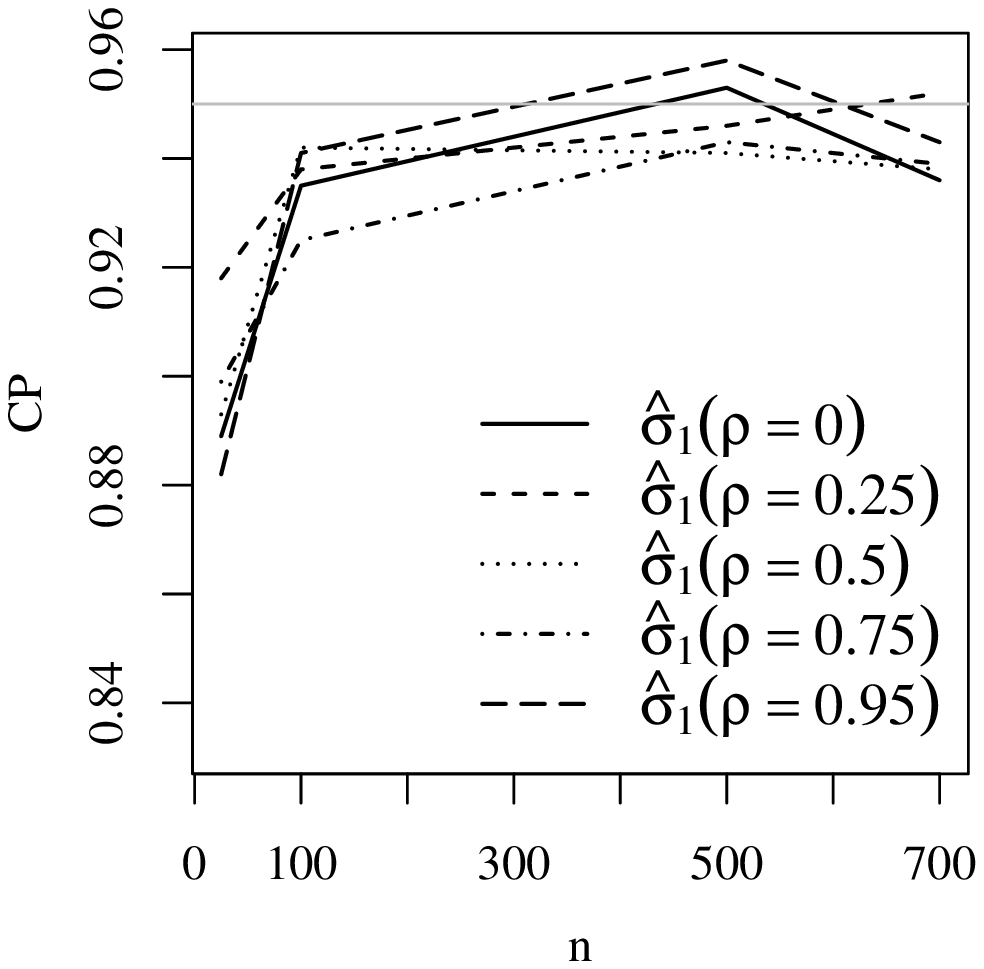}}\hspace{-0.25cm}
{\includegraphics[height=3.5cm,width=3.5cm]{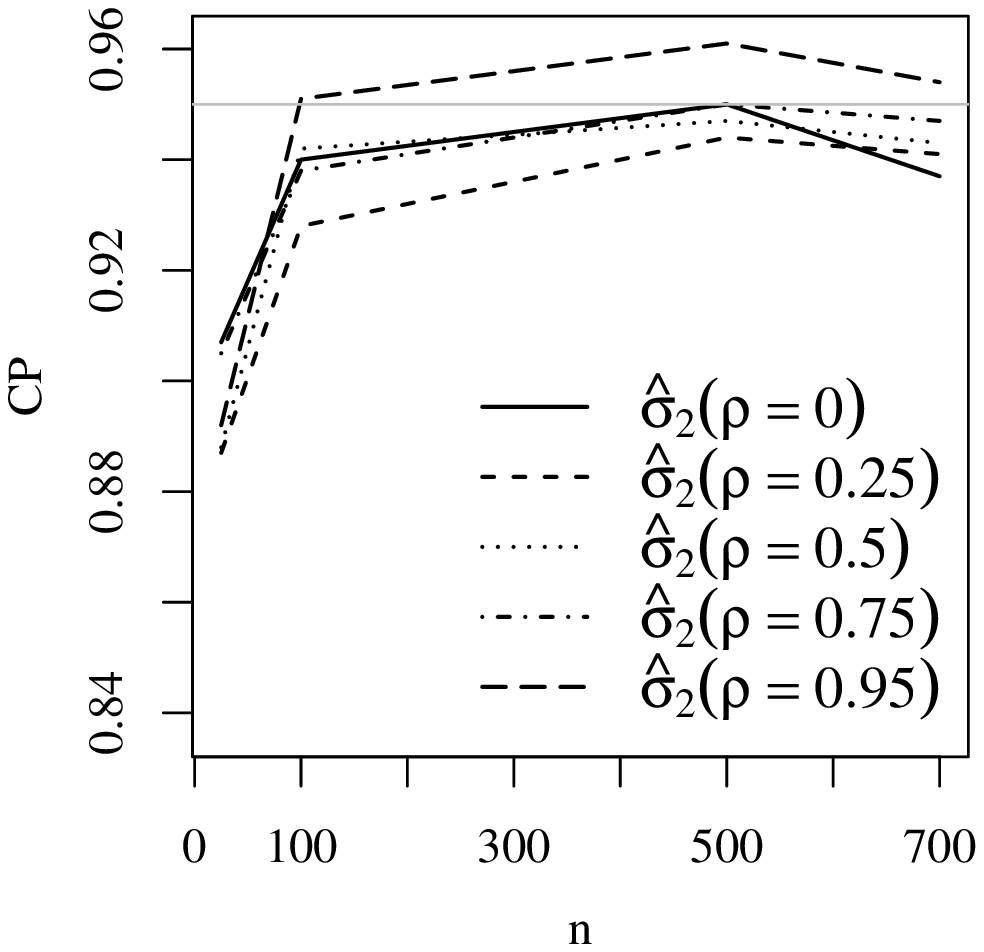}}\hspace{-0.25cm}
{\includegraphics[height=3.5cm,width=3.5cm]{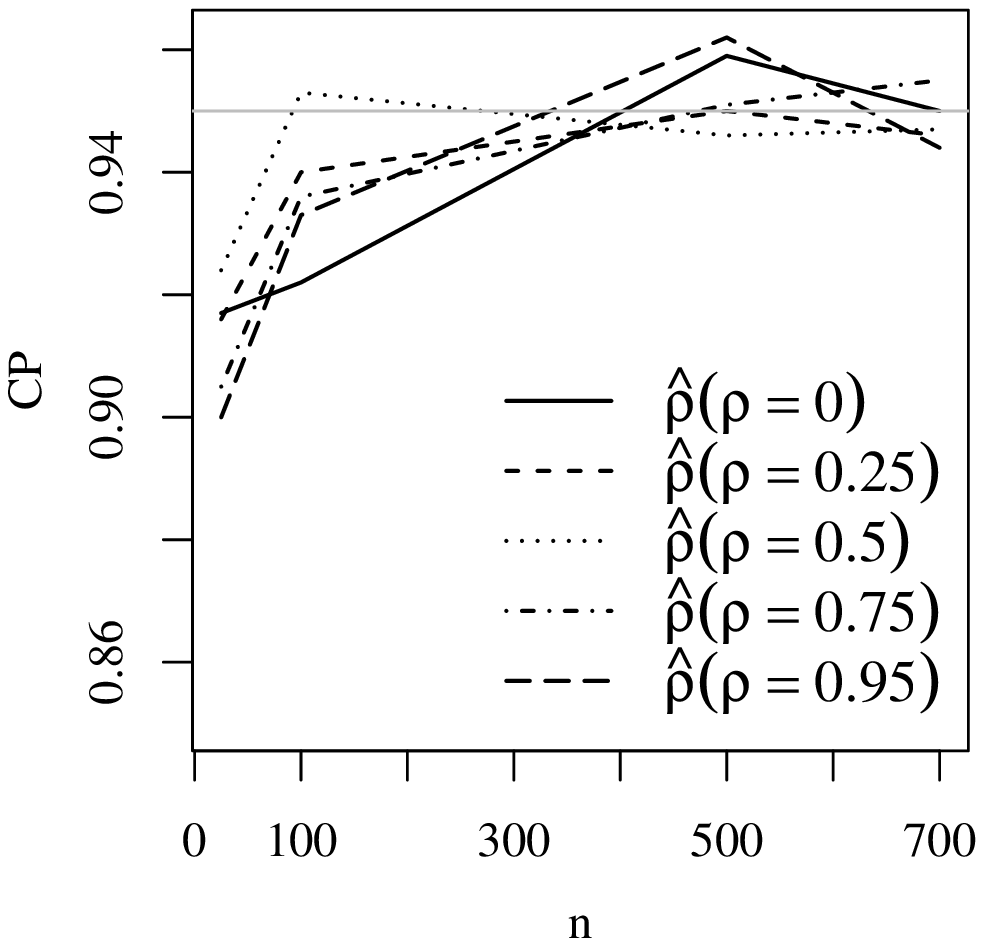}}
\vspace{-0.2cm}
\caption{Monte Carlo simulation results for the bivariate unit-log-normal model.}
\label{fig_normal_mc}
\end{figure}

%%%%%%%%%%%%%%%%%%%%%%%%%%%%%%%%%%%%%%%%%%%%%%%%%%%%%%%%%%
\section{Application to soccer data} \label{Sec:6}
	\noindent
In this section, two real soccer data sets, corresponding to times elapsed until scored goals of UEFA Champions League and pass completions of 2022 FIFA World Cup, are analyzed. The UEFA Champions League data set was extracted from \cite{Meintanis2007}, whereas the 2022 FIFA World Cup data set is new and is analyzed for the first time here.

\subsection{UEFA Champions League}\label{Sec:6.1}
We consider a bivariate data set on the group stage of the UEFA Champions League for the seasons 2004/05
and 2005/06. Only matches with at least one goal scored directly from a kick by any team, and with at least one goal scored by the home team, are considered; see \cite{Meintanis2007}. The first variable ($W_1$) is the time (in minutes) elapsed until a first kick goal is scored by any team, and the second one $W_2$ is the time (in minutes) elapsed until a first goal of any type is scored by the home team. The times are divided by 90 minutes (full game time) to obtain data on the unit square $(0, 1) \times (0, 1)$; see Table~\ref{table:datasets}.

Table \ref{table:desc} provides descriptive statistics for the variables $W_1$ and $W_2$, including minimum, median, mean, maximum, standard deviation (SD), coefficient of variation (CV), coefficient of skewness (CS), and coefficient of kurtosis (CK). We observe in the variable $W_1$, the mean and median to be, respectively, $0.454$ and $0.456$, i.e., the mean is almost equal to the median, which indicates symmetry in the data. The CV is $49.274\%$, which means a moderate level of dispersion is present around the mean. Furthermore, the CS value also confirms the symmetry nature. The variable $W_2$ has mean to be $0.365$ and median to be $0.311$, which indicates a small positively skewed feature in the distribution of the data. Moreover, the CV value is $69.475\%$, showing a moderate level of dispersion around the mean. The CS confirms the small skewed nature and the CK value indicates the small kurtosis feature in the data.

\begin{table}[H]
\caption{Summary statistics for the UEFA Champions League data set.}
\centering
%\resizebox{\linewidth}{!}{
\begin{tabular}{lccccccccc}
\hline
Variables   & $n$  & Minimum & Median   & Mean   & Maximum & SD     & CV     & CS    & CK     \\ \hline
$W_1$       & 37   & 0.022   & 0.456    & 0.454  & 0.911   & 0.224  & 49.274 & 0.164 & -0.930 \\
$W_2$       & 37   & 0.022   & 0.311    & 0.365  & 0.944   & 0.254  & 69.475 & 0.522 & -0.839  \\ \hline
\end{tabular}
%}
\label{table:desc}
\end{table}

The ML estimates and the standard errors (in parentheses) for the bivariate unit-log-symmetric model parameters are presented in
Table \ref{table:est}. The extra parameters, associated with log-Student-$t$, log-hyperbolic and log-slash models, were estimated by using the profile log-likelihood; see \cite{ssls:22}. Table \ref{table:est} also presents the log-likelihood value, and the values of the Akaike (AIC) and Bayesian (BIC) information criteria. We observe that the log-hyperbolic model provides better fit than other models based on the values of log-likelihood, AIC and BIC.
Note, however, that the values of log-likelihood, AIC and BIC of all bivariate unit-log-symmetric models are quite close to each other.

\begin{table}[H]
\caption{ML estimates (with standard errors in parentheses), and log-likelihood, AIC and BIC values for the indicated bivariate unit-log-symmetric models.}
\resizebox{\linewidth}{!}{
\begin{tabular}{lccccccccc}
\noalign{\hrule height 1.7pt}
Distribuiton & $\widehat{\eta}_1$  & $\widehat{\eta}_2$ & $\widehat{\sigma}_1$ & $\widehat{\sigma}_2$ & $\widehat{\rho}$ & $\widehat{\nu}$ & Log-likelihood & AIC &   BIC  \\ \hline
Log-normal         & 0.5288*     & 0.3414*     & 0.8865*   & 1.1355*     &  0.4956*  & --     & -36.693   & 83.386  & 91.441          \\
				   & (0.0771)    & (0.0637)    & (0.1031)  & (0.1320)    & (0.1240)  &        &           &         & \\
Log-Student-$t$    & 0.5541*     & 0.3783*     & 0.7431*   & 0.9734*     &  0.4723*  & 7      & -35.487   & 80.974  & 89.029          \\
				   & (0.0751)    & (0.0672)    & (0.1033)  & (0.1308)    & (0.1463)  &        &           &         & \\
Log-hyperbolic     & 0.5458*     & 0.3816*     & 0.8456*   & 1.0950*     &  0.4893*  & 2     & -35.470    & 80.940  & 88.996          \\
				   & (0.0752)    & (0.0677)    & (0.1162)  & (0.1462)    & (0.1428)  &        &           &         & \\
Log-Laplace        & 0.5680*     & 0.5679*     & 0.9928*   & 1.3231*     &  0.5281*  & --     & -36.009   & 82.019  & 90.073          \\
				   & (0.0020)    & (0.0021)    & (0.1692)  & (0.2164)    & (0.1639)  &        &           &         & \\
Log-slash          & 0.5629*     & 0.3715*     & 0.6203*   & 0.8302*     &  0.4472*  & 5      & -35.560   & 81.120  & 89.174          \\
				   & (0.0749)    & (0.0666)    & (0.0847)  & (0.1096)    & (0.1472)  &        &           &         & \\        \\\hline
\end{tabular}
}
\label{table:est}
\footnotesize{$^*$ significant at 5\% level.}
\end{table}

Figure~\ref{fig:qqplots} shows the QQ plots of the Mahalanobis distance for the bivariate unit-log-symmetric models considered in Table \ref{table:est}. The QQ plot is a plot of the empirical quantiles of the Mahalanobis distance against the theoretical quantiles of the respective reference distribution (see Section \ref{maha_sec}). Therefore, points falling along a straight line would indicate a good fit. From Figure~\ref{fig:qqplots}, we see clearly that, with the exception of log-Student-$t$ case, the Mahalanobis distances in the considered models conform relatively well with their reference distributions. We also see that, in all the cases, there is a point away from the reference line, which may be an outlier.

\begin{figure}[!ht]
\centering
\subfigure[Log-normal]{\includegraphics[height=5cm,width=5cm]{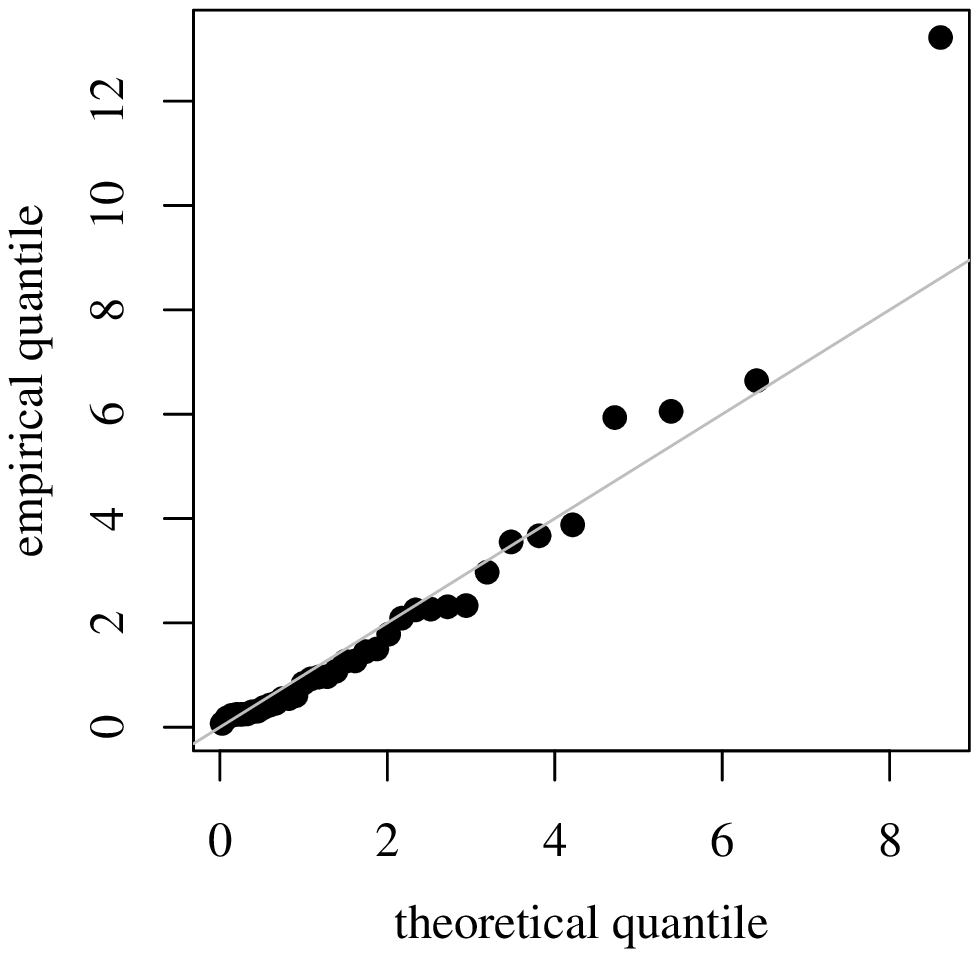}}
\subfigure[Log-Student-$t$]{\includegraphics[height=5cm,width=5cm]{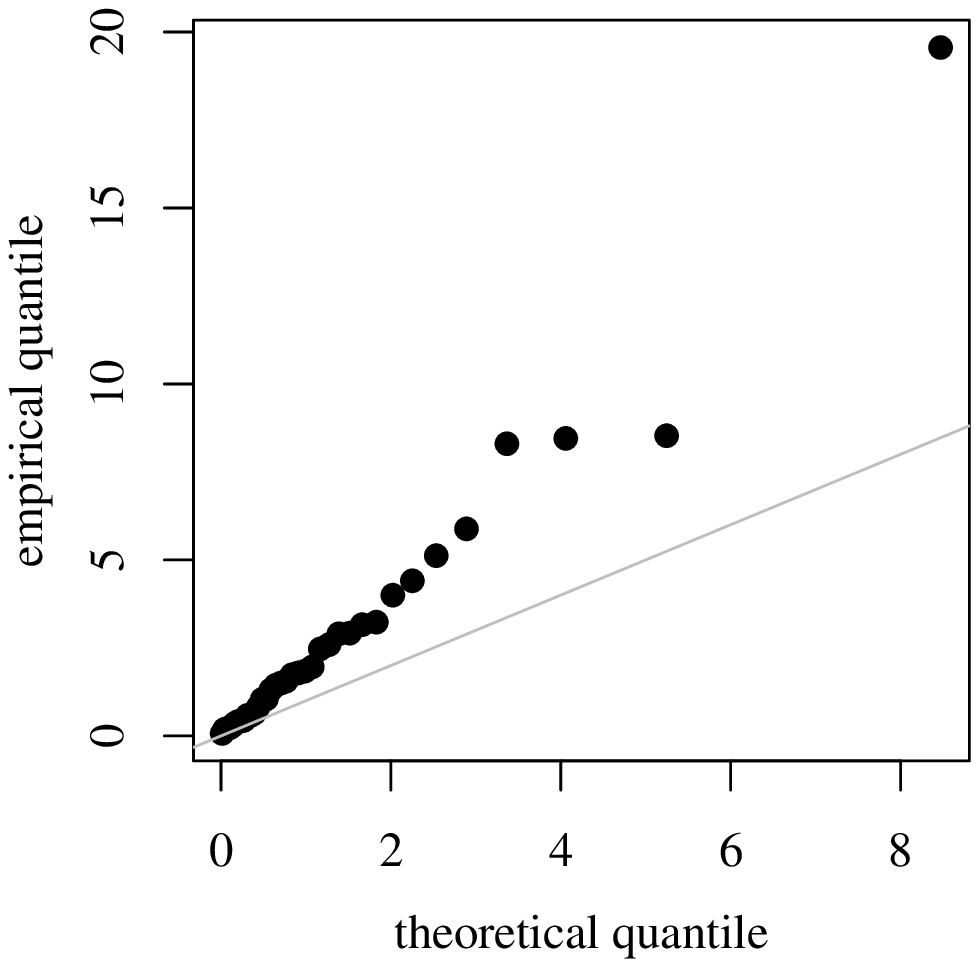}}
\subfigure[Log-hyperbolic]{\includegraphics[height=5cm,width=5cm]{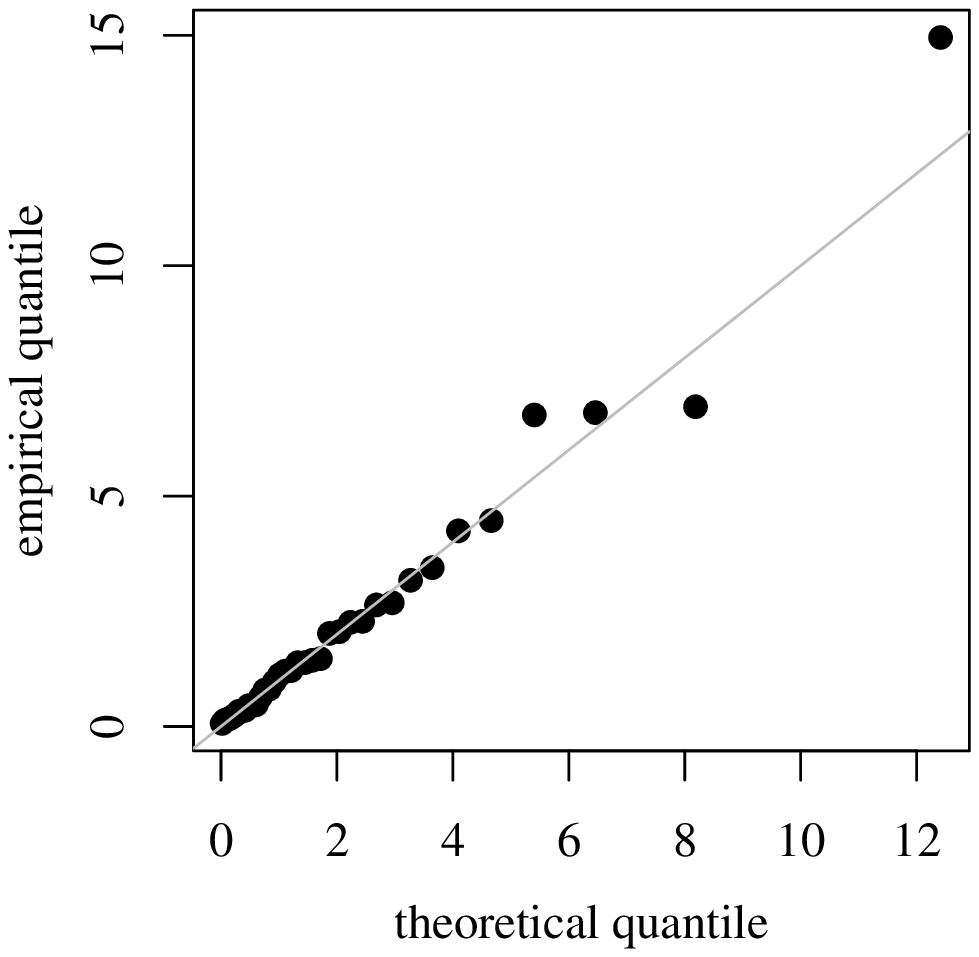}}
\subfigure[Log-Laplace]{\includegraphics[height=5cm,width=5cm]{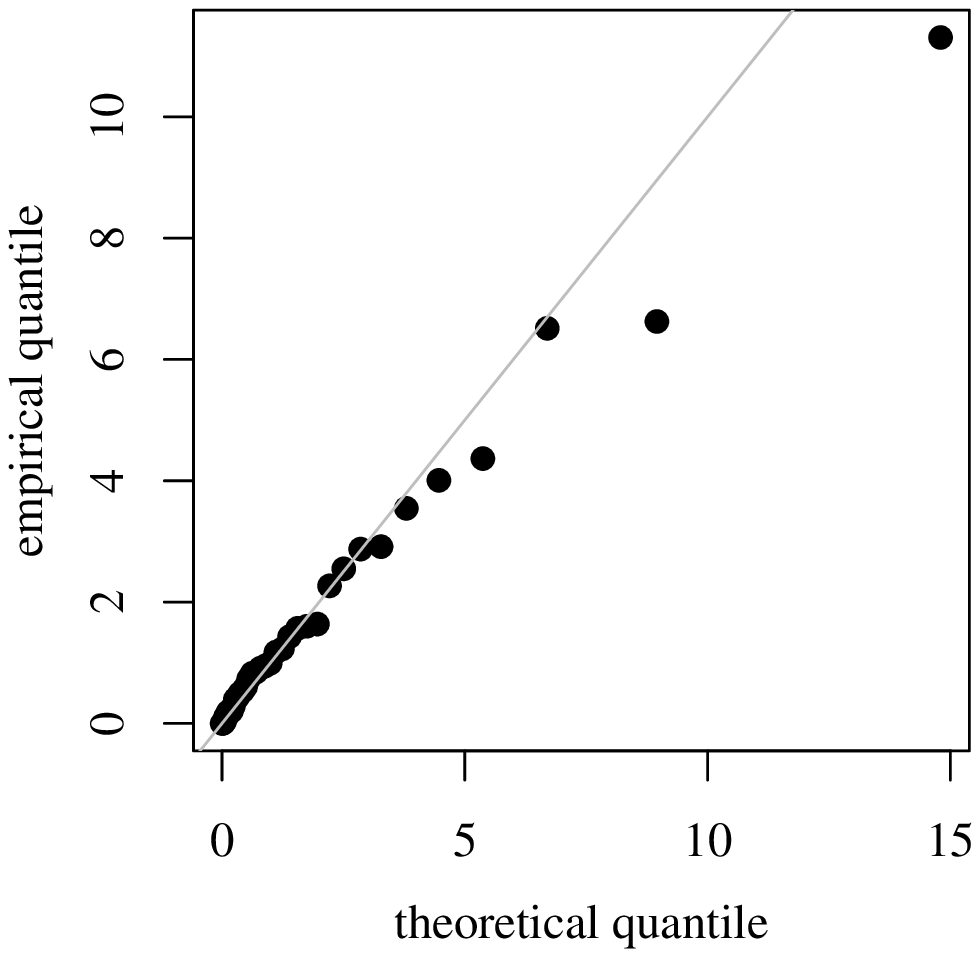}}
\subfigure[Log-slash]{\includegraphics[height=5cm,width=5cm]{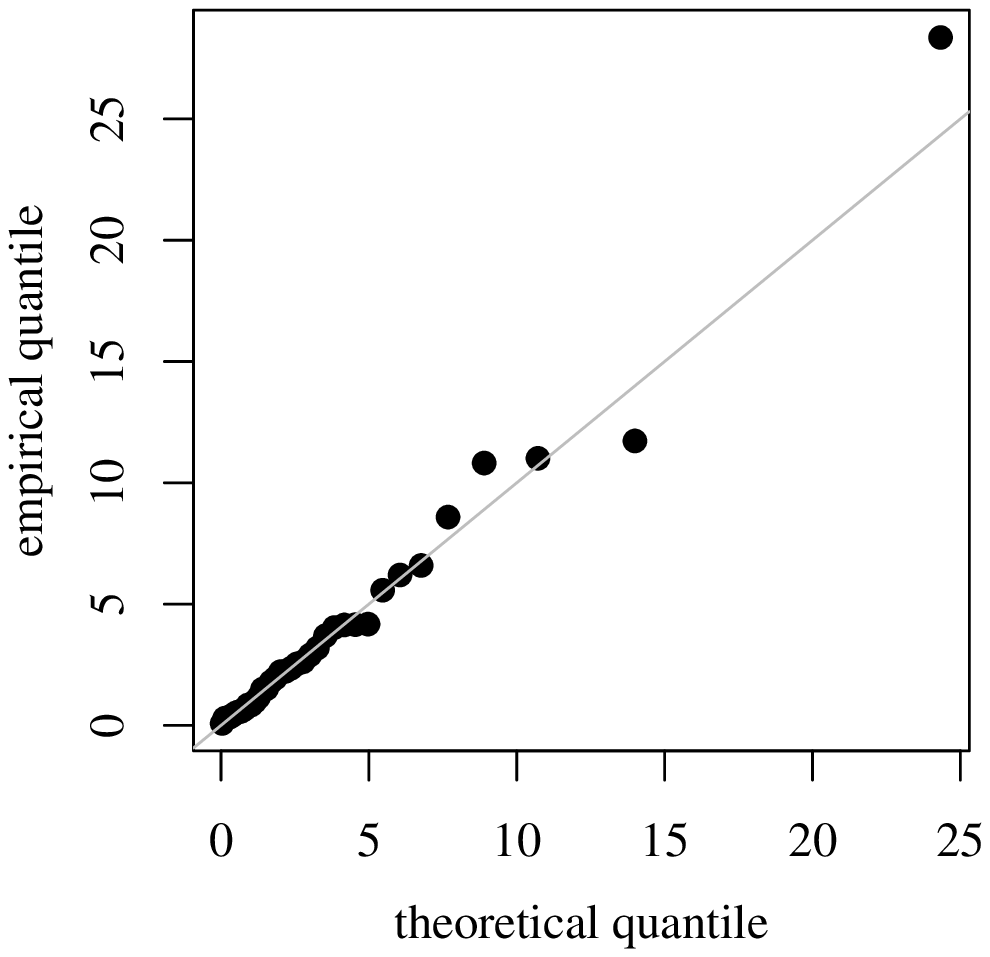}}
%\vspace{-0.25cm}
 \caption{\small {QQ plot of Mahalanobis distances for the indicated models.}}
\label{fig:qqplots}
\end{figure}

%\newpage

\subsection{2022 FIFA World Cup}\label{Sec:6.2}

We now use the data on the 2022 FIFA World Cup to illustrate the model developed in the preceding sections. The data are available at \url{https://www.kaggle.com/}.  The first variable ($W_1$) is the medium pass completion proportion, that is, successful passes between 14 and 18 meters. The second variable ($W_2$) is the long pass completion proportion, namely, passes longer than 37 meters; see Table~\ref{table:datasets}.

Table \ref{table:desc2} provides descriptive statistics for the variables $W_1$ and $W_2$. We observe in the variable $W_1$, the mean and median to be, respectively, $0.454$ and $0.456$, i.e., the mean is almost equal to the median, which indicates symmetry in the data. The CV is $49.274\%$, which means a moderate level of dispersion around the mean. Furthermore, the CS value also confirms the symmetry nature. The variable $W_2$ has mean equal to $0.365$ and median equal to $0.311$, which indicates a small positively skewed feature in the distribution of the data. Moreover, the CV value is $69.475\%$, showing a moderate level of dispersion around the mean. The CS confirms the small skewed nature and the CK value indicates the small kurtosis feature in the data.

\begin{table}[H]
\caption{Summary statistics for the 2022 FIFA World Cup data set.}
\centering
%\resizebox{\linewidth}{!}{
\begin{tabular}{lccccccccc}
\hline
Variables   & $n$  & Minimum & Median   & Mean   & Maximum & SD     & CV     & CS     & CK     \\ \hline
$W_1$       & 32   & 0.769   & 0.860    & 0.860  & 0.931   & 0.038  & 4.376  & -0.373 & -0.194 \\
$W_2$       & 32   & 0.427   & 0.556    & 0.550  & 0.751   & 0.075  & 13.713 & 0.308  & -0.425  \\ \hline
\end{tabular}
%}
\label{table:desc2}
\end{table}

Table \ref{table:est2} presents the estimation results for the bivariate unit-log-symmetric models, and these reveal that the log-normal model provides better fit than all other models based on the values of log-likelihood, AIC and BIC.

Figure~\ref{fig:qqplots2} shows the QQ plots of the Mahalanobis distances (see Section \ref{maha_sec}) for the bivariate unit-log-symmetric models considered in Table \ref{table:est2}. We see clearly that the log-normal model provides better fit than all other bivariate unit-log-symmetric models.

\begin{table}[H]
\caption{ML estimates (with standard errors in parentheses), and log-likelihood, AIC and BIC values for the indicated bivariate unit-log-symmetric models.}
\resizebox{\linewidth}{!}{
\begin{tabular}{lccccccccc}
\noalign{\hrule height 1.7pt}
Distribuiton & $\widehat{\eta}_1$  & $\widehat{\eta}_2$ & $\widehat{\sigma}_1$ & $\widehat{\sigma}_2$ & $\widehat{\rho}$ & $\widehat{\nu}$ & Log-likelihood  & AIC &   BIC  \\ \hline
Log-normal         & 1.9872*     & 0.7953*     & 0.1364*   & 0.2089*     &  0.7343*  &   --   & 20.791    &-31.581  &-24.252          \\
				   & (0.0479)    & (0.0294)    & (0.0171)  & (0.0261)    & (0.0815)  &        &           &         & \\
Log-Student-$t$    & 1.9954*     & 0.7936*     &-0.1257*   &-0.1949*     &  0.7423*  & 9      & 20.130    &-30.260  &-22.931          \\
				   & (0.0485)    & (0.0299)    & (0.0178)  & (0.0271)    & (0.0868)  &        &           &         & \\
Log-hyperbolic     & 1.9908*     & 0.7942*     & 0.3956*   & 0.6088*     &  0.7378*  & 10     & 20.618    &-31.236  &-23.907          \\
				   & (0.0482)    & (0.0296)    & (0.0523)  & (0.0800)    & (0.0841)  &        &           &         & \\
Log-Laplace        & 1.9908*     & 0.8089*     & 0.1563*   & 0.2425*     &  0.7471*  & --     & 16.915   &-23.830  &-16.501          \\
				   & (0.0023)    & (0.0021)    & (0.0278)  & (0.0415)    & (0.0938)  &        &           &         & \\
Log-slash          & 1.9897*     & 0.7935*     & 0.1173*   & 0.1802*     &  0.7392*  & 8      & 20.613   &-28.919  &-23.898          \\
				   & (0.0482)    & (0.0295)    & (0.0154)  & (0.0237)    & (0.0844)  &        &           &         & \\        \\\hline
\end{tabular}
}
\label{table:est2}
\footnotesize{$^*$ significant at 5\% level.}
\end{table}

\begin{figure}[!ht]
\centering
\subfigure[Log-normal]{\includegraphics[height=5cm,width=5cm]{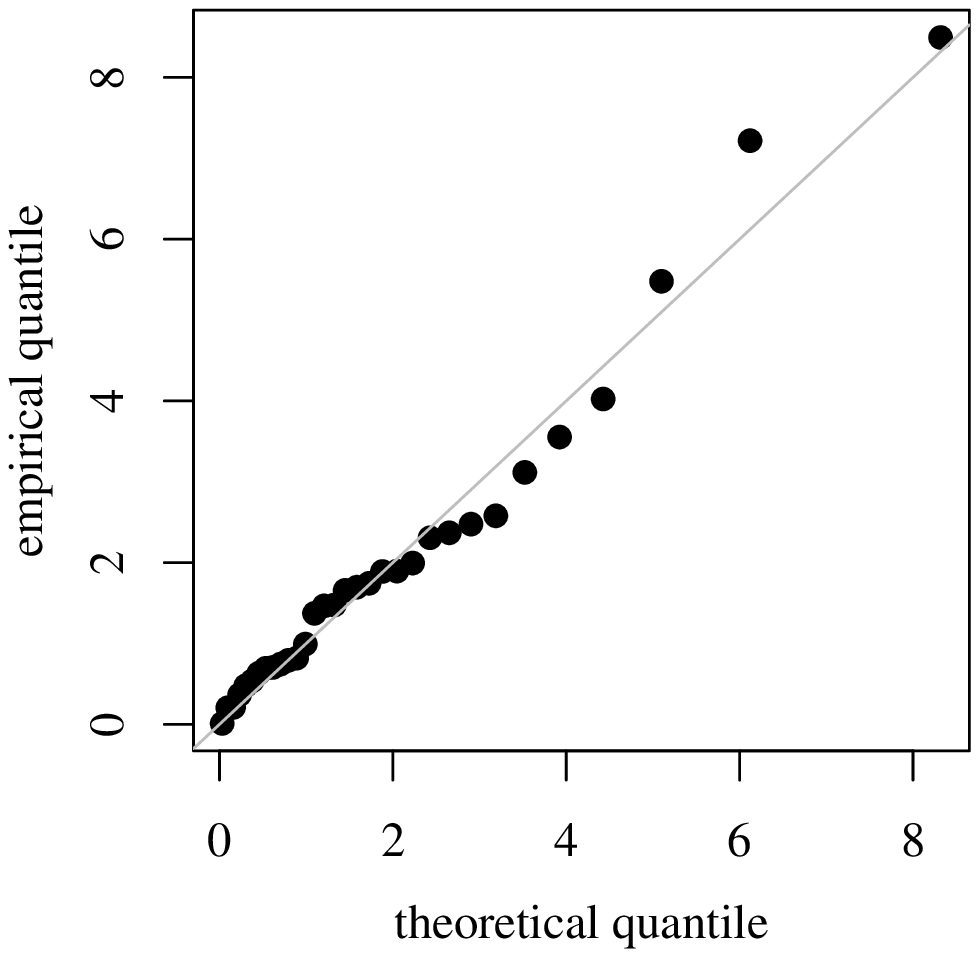}}
\subfigure[Log-Student-$t$]{\includegraphics[height=5cm,width=5cm]{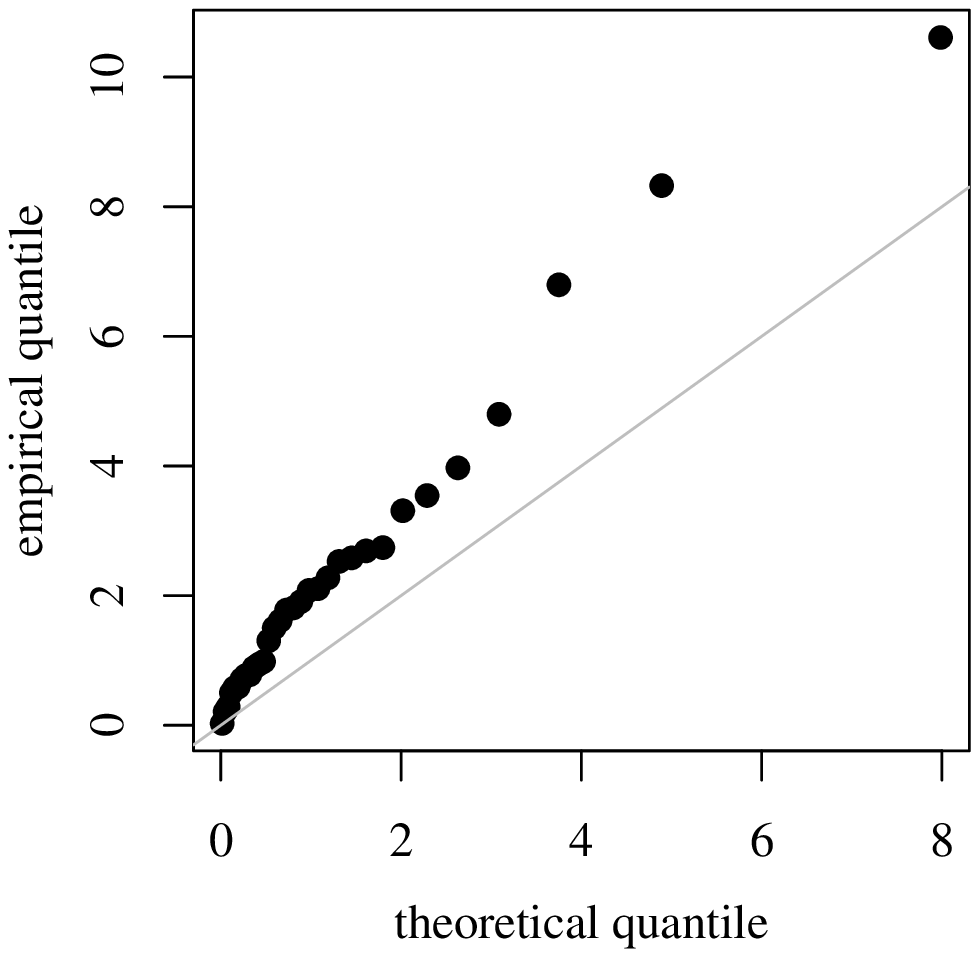}}
\subfigure[Log-hyperbolic]{\includegraphics[height=5cm,width=5cm]{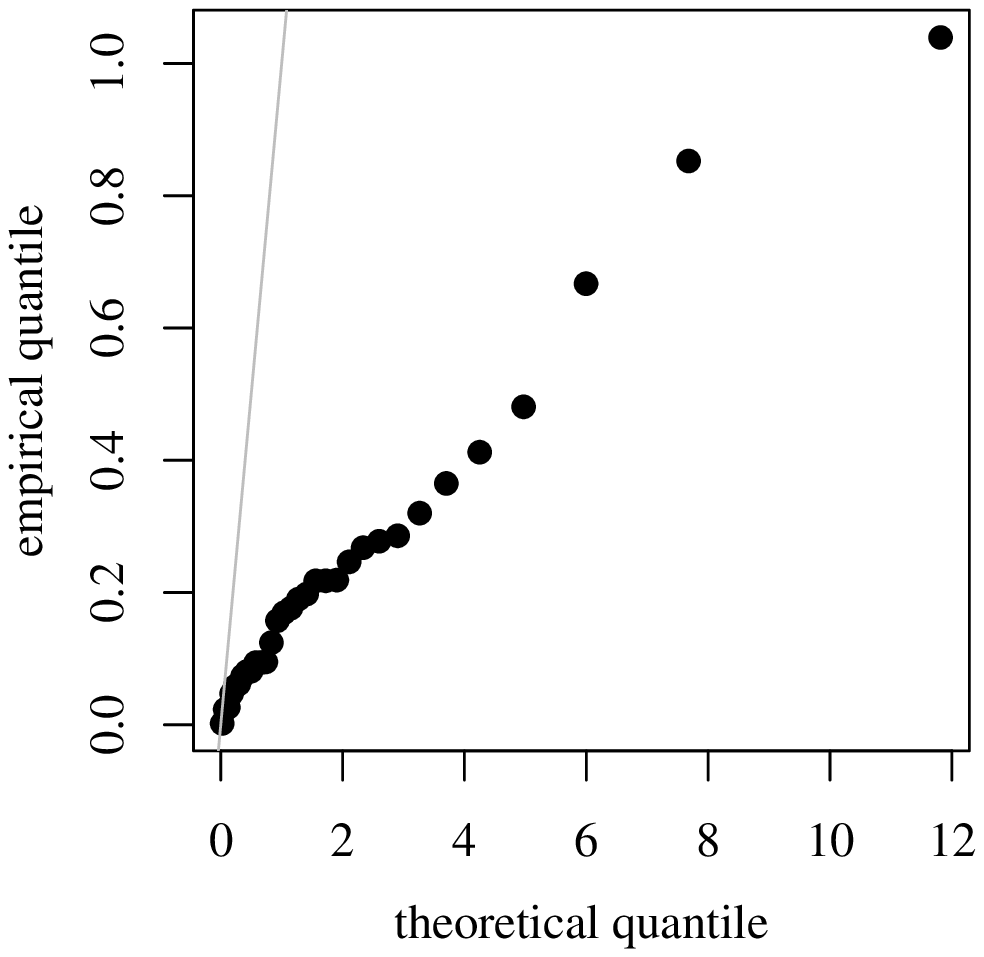}}
\subfigure[Log-Laplace]{\includegraphics[height=5cm,width=5cm]{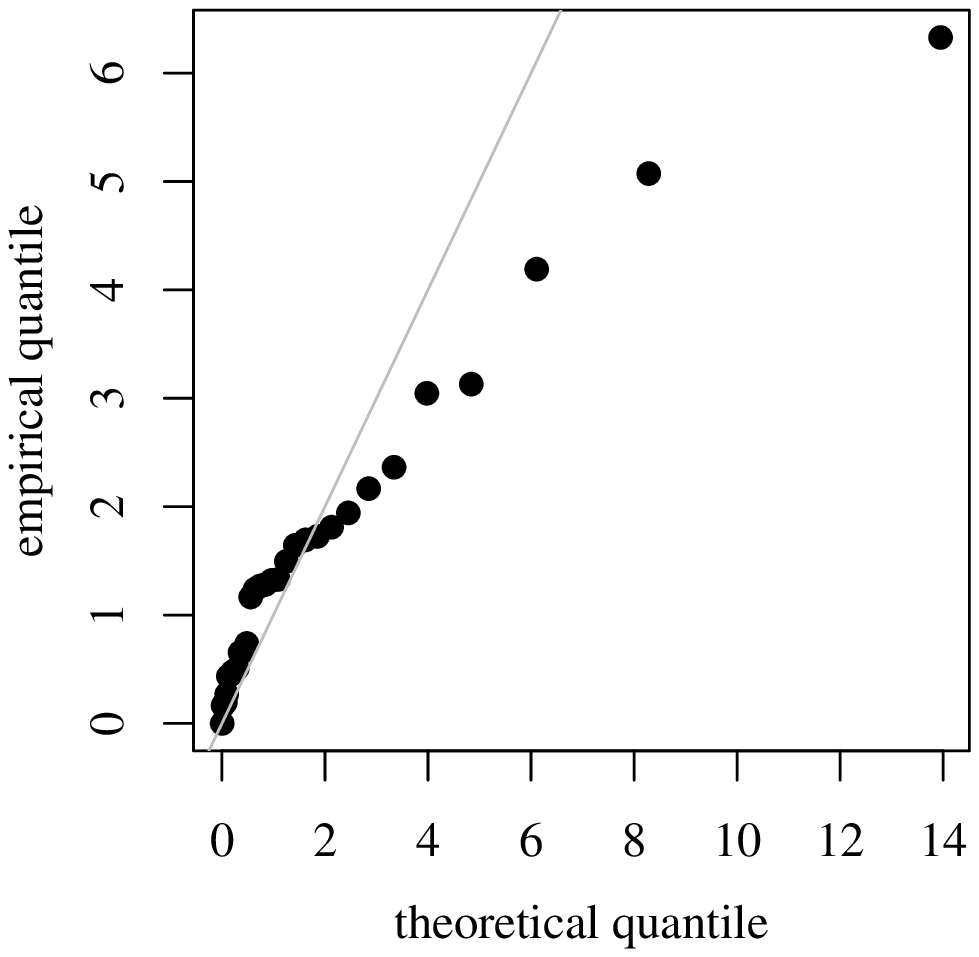}}
\subfigure[Log-slash]{\includegraphics[height=5cm,width=5cm]{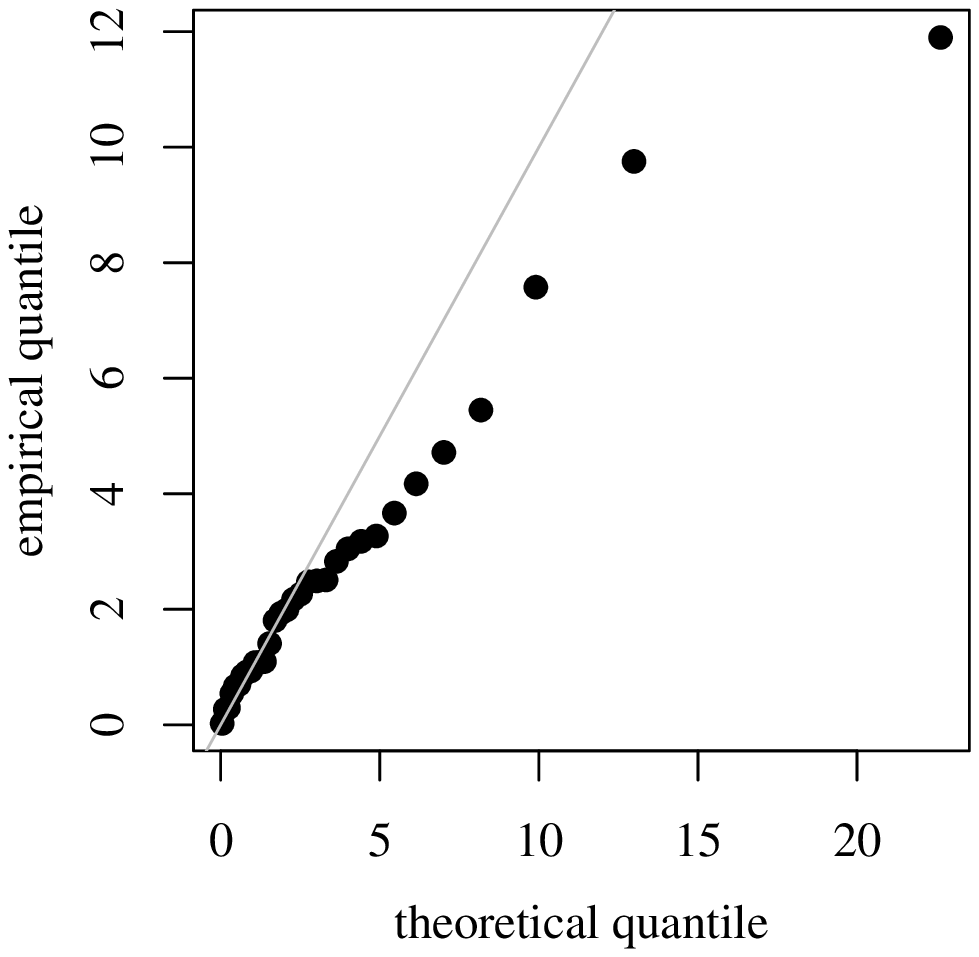}}
%\vspace{-0.25cm}
 \caption{\small {QQ plot of Mahalanobis distances for the indicated models.}}
\label{fig:qqplots2}
\end{figure}

%%\newpage

\section{Conclusions} \label{Sec:7}
\noindent
In this paper, we have proposed a family of bivariate distributions over the unit square. By suitably defining the density generator, we can transform any distribution over the real line into a bivariate distribution over the region
$(0,1)\times(0,1)$. Such a model has several potential applications, since the simultaneous modeling of quantities like proportions, rates or indices frequently arises in applied sciences like economics, medicine, engineering and social sciences. We have discussed several theoretical properties like stochastic representation, quantiles, conditional distributions, independence and moments. We have also carried out a Monte Carlo simulation study and finally demonstrated some applications to soccer data. The present research can be extended in several possible directions. By changing the density generator, numerous special forms of the BULS distribution can be constructed. Furthermore, generalizations to higher dimensions can be studied. We are currently working in these directions and hope to report the findings in a future paper.

	\paragraph{Acknowledgements}
	Roberto Vila and Helton Saulo gratefully acknowledge financial support from CNPq, CAPES and FAP-DF, Brazil.
	
	\paragraph{Disclosure statement}
	There are no conflicts of interest to disclose.

	%%%%%%%%%%%%%%%%%%%%%%%%%%%%%%%%%%%%%%%%%%%%%%%%%%%%%%%%%%%%%

%\bibliographystyle{plainat} 
%\typeout{}
%\bibliography{ref}

\newpage
\begin{appendices}
\section{Some additional results}
%%	
%%	\section{Some More Notation}
For the convenience of readers, we present here some complementary results relating to Section \ref{cond-dist}.

\begin{definition}\label{def-GH}
We say that a random variable $X$ follows a univariate generalized hyperbolic  (GH) distribution, denoted by $X\sim {\rm GH}(\lambda,\alpha,\delta)$, if its PDF is given by
\begin{align*}
f_{\rm GH}(x;\lambda,\alpha,\delta)
=
\dfrac{(\sqrt{\alpha^2}/\delta)^\lambda}{\sqrt{2\pi} K_\lambda(\delta \sqrt{\alpha^2}\,)}\, 
\,\dfrac{K_{\lambda-1/2}(\alpha\sqrt{\delta^2+x^2})}{(\sqrt{\delta^2+x^2}/\alpha)^{1/2-\lambda}},
\quad -\infty<x<\infty.
\end{align*}
Here, $K_r$ is the modified Bessel function
of the third kind with index $r$, $\lambda\in\mathbb{R}, \alpha\in\mathbb{R}$ and 
$\delta>0$ is a  scale parameter.
\end{definition}

The following result has appeared in a multivariated version in  \citet[][Proposition 3, p. 5]{dy:18}.
\begin{proposition}[Hyperbolic generator]\label{cond-marg-Hyp}
	Let  $\boldsymbol{Z}=(Z_1,Z_2)^\top$ be a random vector as in  Proposition \ref{Stochastic Representation}.
	If $g_c(x)=\exp(-\nu\sqrt{1+x}\,)$, then
	the conditional distribution of $Z_{2}$, given $Z_{1}=x$, is ${\rm GH}(1,\nu,\sqrt{1+x^2}\,)$ and both of its unconditional distributions are ${\rm GH}(3/2,\nu,1)$.
\end{proposition}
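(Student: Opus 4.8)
The plan is to work directly from the joint density of $(Z_1,Z_2)$ given in \eqref{pdf-Zs}, which for the generator $g_c(x)=\exp(-\nu\sqrt{1+x}\,)$ reads $f_{Z_1,Z_2}(x,y)=Z_{g_c}^{-1}\exp(-\nu\sqrt{1+x^2+y^2}\,)$, with $Z_{g_c}=2\pi(\nu+1)\exp(-\nu)/\nu^2$ taken from Table~\ref{table:1}. I would establish the conditional law and the two marginal laws separately, and the single nontrivial computation common to both is the one-dimensional integral $\int_{-\infty}^\infty \exp(-\nu\sqrt{\delta^2+y^2}\,)\,{\rm d}y$ for $\delta>0$.

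First I would treat the \emph{conditional} distribution, which is the easier half. Fix $x$ and put $\delta=\sqrt{1+x^2}$. As a function of $y$, the conditional density $f_{Z_2}(y\mid Z_1=x)$ is proportional to the joint density, hence proportional to $\exp(-\nu\sqrt{\delta^2+y^2}\,)$. On the other hand, specializing Definition~\ref{def-GH} to $\lambda=1$ and using the elementary closed form $K_{1/2}(z)=\sqrt{\pi/(2z)}\,e^{-z}$, the density $f_{\rm GH}(y;1,\nu,\delta)$ collapses to $(2\delta K_1(\nu\delta))^{-1}\exp(-\nu\sqrt{\delta^2+y^2}\,)$, which exhibits exactly the same $y$-dependence and is itself a bona fide probability density in $y$. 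Since a probability density is determined by its functional form up to the normalizing constant, and both integrate to one over $y$, the conditional density must coincide with $f_{\rm GH}(y;1,\nu,\delta)$; substituting $\delta=\sqrt{1+x^2}$ yields the claimed ${\rm GH}(1,\nu,\sqrt{1+x^2}\,)$ law.

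For the \emph{marginal} of $Z_1$ (the marginal of $Z_2$ then follows at once by the symmetry of the joint density in its two arguments), I would integrate the joint density over $y$:
\begin{align*}
f_{Z_1}(x)=\frac{1}{Z_{g_c}}\int_{-\infty}^\infty \exp\big(-\nu\sqrt{(1+x^2)+y^2}\,\big)\,{\rm d}y.
\end{align*}
The substitution $y=\sqrt{1+x^2}\,\sinh t$ turns this into $2\sqrt{1+x^2}\int_0^\infty \cosh t\,\exp(-\nu\sqrt{1+x^2}\,\cosh t)\,{\rm d}t$, and the integral representation $K_1(z)=\int_0^\infty e^{-z\cosh t}\cosh t\,{\rm d}t$ identifies it as $2\sqrt{1+x^2}\,K_1(\nu\sqrt{1+x^2}\,)$. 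Inserting $Z_{g_c}$ and simplifying with the half-integer identity $K_{3/2}(\nu)=\sqrt{\pi/(2\nu)}\,e^{-\nu}(1+1/\nu)$ shows that $f_{Z_1}(x)$ equals $f_{\rm GH}(x;3/2,\nu,1)$ of Definition~\ref{def-GH}, which establishes the ${\rm GH}(3/2,\nu,1)$ marginal.

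The main obstacle is the evaluation and the bookkeeping around the integral $\int_{-\infty}^\infty \exp(-\nu\sqrt{\delta^2+y^2}\,)\,{\rm d}y=2\delta K_1(\nu\delta)$ together with the two half-integer Bessel identities for $K_{1/2}$ and $K_{3/2}$; once these are in hand, matching the normalizing constants against the GH density of Definition~\ref{def-GH} is routine algebra. Alternatively, both assertions follow by specializing the multivariate statement of \citet[][Proposition 3, p.~5]{dy:18}, but the self-contained computation above is short enough to present in full.
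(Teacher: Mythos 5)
Your proof is correct and follows essentially the same route as the paper: both work directly from the joint density $f_{Z_1,Z_2}(x,y)\propto\exp(-\nu\sqrt{1+x^2+y^2}\,)$, reduce everything to the single integral $\int_{-\infty}^{\infty}\exp(-\nu\sqrt{\delta^2+y^2}\,)\,{\rm d}y=2\delta K_1(\nu\delta)$, and match normalizing constants against Definition \ref{def-GH} via the half-integer identities for $K_{1/2}$ and $K_{3/2}$. The only cosmetic differences are that you derive the key integral by the $\sinh$ substitution where the paper cites Gradshteyn--Ryzhik, and you identify the conditional law by a proportionality-plus-normalization argument (which neatly decouples it from the marginal computation) where the paper divides the joint density by the explicitly computed marginal.
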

\begin{proof}
%	By Item (13) of \cite{Saulo2022}, the joint PDF of $Z_1$ and $Z_2$ is given by
%	\begin{align}\label{pdf-Zs}
%	f_{Z_1,Z_2}(x,y)={1\over Z_{g_c}}\, g_c(x^2+y^2), \quad -\infty<x,y<\infty.
%	\end{align}
	By using \eqref{pdf-Zs}, the joint PDF of $Z_1$ and $Z_2$ is (see Table \ref{table:1})  
	\begin{align}\label{eq-app0}
	f_{Z_1,Z_2}(x,y)={\nu^2\exp(\nu)\over 2\pi (\nu+1)}\,
	\exp\left(-\nu\sqrt{1+x^2+y^2}\,\right).
	\end{align}
	
	So, the marginal PDF of $Z_1$ is given by
\begin{align*}
f_{Z_1}(x)
=
\int_{-\infty}^{\infty}
f_{Z_1,Z_2}(x,y) \,{\rm d}y
&=
{\nu^2\exp(\nu)\over 2\pi (\nu+1)}\,
\int_{-\infty}^{\infty}
\exp\left(-\nu\sqrt{1+x^2+y^2}\,\right) \,{\rm d}y
\nonumber
\\[0,2cm]
&=
{2\nu^2\exp(\nu)\over 2\pi (\nu+1)}\,
\int_{0}^{\infty}
\exp\left(-\nu\sqrt{1+x^2+y^2}\,\right) \,{\rm d}y.
\end{align*}
By using Formula 6 of Section 3.46-3.48 of \citet[][p. 364]{Gradshteyn2000} that $\int_{0}^{\infty}\exp(-a\sqrt{b^2+x^2}){\rm d}x=b K_1(ab)$, the above integral is
\begin{align}\label{form-interm}
=
{2\nu^2\exp(\nu)\over 2\pi (\nu+1)}\, \sqrt{1+x^2} K_1(\nu\sqrt{1+x^2}\,).
\end{align}
Now, as $K_{3/2}(\nu)=2\sqrt{\pi}\exp(-\nu) (\nu+1)(\nu/2)^{3/2}/\nu^3$, the above espression becomes
\begin{align*}
	=
	{\nu^{3/2}\over\sqrt{2\pi}K_{3/2}(\nu)}\, {K_1(\nu\sqrt{1+x^2})\over (\sqrt{1+x^2}/\nu\,)^{-1}}
	=
	f_{\rm GH}(x;3/2,\nu,1)
\end{align*}
which proves that  $Z_1\sim{\rm GH}(3/2,\nu,1)$. Similarly, we can show that $Z_2\sim{\rm GH}(3/2,\nu,1)$, as well.

On the other hand, from \eqref{eq-app0} and \eqref{form-interm}, and by using the well-known identity $K_{1/2}(z)=\sqrt{\pi/(2z)} \exp(-z)$, the conditional PDF of $Z_2$, given $Z_1=x$, is obtained as
\begin{align*}
	f_{Z_2\,\vert\, Z_1}(y\,\vert\,x)
	=
	\dfrac{
		\exp(-\nu\sqrt{1+x^2+y^2}\,)}{2\sqrt{1+x^2} K_1(\nu\sqrt{1+x^2}\,)}
	&=
	\dfrac{\nu/\sqrt{1+x^2} }{\sqrt{2\pi} K_1(\nu\sqrt{1+x^2}\,)}\, 
		\dfrac{
		K_{1/2}(\nu\sqrt{1+x^2+y^2}\,)}{(\sqrt{1+x^2+y^2}/\nu)^{-1/2}}
	\\[0,2cm]
	&=
	f_{\rm GH}(x;1,\nu,\sqrt{1+x^2}),
\end{align*}
which complete the proof.
\end{proof}

The following result has also appeared in a multivariated version in \citet[][Theorem 6.7.1, p. 253]{Kotz2001}.
\begin{proposition}[Laplace generator]\label{cond-marg-Laplace}
Let  $\boldsymbol{Z}=(Z_1,Z_2)^\top$ be a random vector as in  Proposition \ref{Stochastic Representation}.
If $g_c(x)=K_0(\sqrt{2x}\,)$, then
the conditional distribution of $Z_{2}$, given $Z_{1}=x$, is ${\rm GH}(1/2,\sqrt{2},\vert x\vert)$ and both of its unconditional distributions are ${\rm Laplace}(0,1/\sqrt{2})$.
\end{proposition}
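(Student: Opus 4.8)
The plan is to mirror the proof of Proposition \ref{cond-marg-Hyp}, working directly from the joint density of $(Z_1,Z_2)$ given in \eqref{pdf-Zs} and factoring it as marginal times conditional. First I would specialize \eqref{pdf-Zs} to the Laplace generator: since $g_c(x)=K_0(\sqrt{2x}\,)$ and $Z_{g_c}=\pi$ (see Table \ref{table:1}), the joint PDF is
\begin{align*}
f_{Z_1,Z_2}(x,y)=\frac{1}{\pi}\,K_0\big(\sqrt{2(x^2+y^2)}\,\big),\qquad -\infty<x,y<\infty.
\end{align*}

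Next I would obtain the marginal of $Z_1$ by integrating out $y$; by symmetry this equals $\frac{2}{\pi}\int_0^\infty K_0(\sqrt{2}\,\sqrt{x^2+y^2}\,)\,{\rm d}y$, and the key computational step is the Bessel integral
\begin{align*}
\int_0^\infty K_0\big(a\sqrt{b^2+y^2}\,\big)\,{\rm d}y=\frac{\pi}{2a}\,e^{-ab},\qquad a,b>0,
\end{align*}
which I would either cite from \cite{Gradshteyn2000} or derive by inserting the integral representation $K_0(u)=\frac12\int_0^\infty s^{-1}\exp(-s-u^2/(4s))\,{\rm d}s$, performing the resulting Gaussian integral in $y$, and recognizing the remaining $s$-integral through the half-integer identity $K_{1/2}(z)=\sqrt{\pi/(2z)}\,e^{-z}$ (equivalently, reading the density as a normal variance mixture with exponential mixing weight). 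Taking $a=\sqrt{2}$, $b=\vert x\vert$ then gives $f_{Z_1}(x)=\frac{1}{\sqrt{2}}\,e^{-\sqrt{2}\,\vert x\vert}=f_{\rm L}(x)$, so $Z_1\sim{\rm Laplace}(0,1/\sqrt{2})$, and the identical argument yields $Z_2\sim{\rm Laplace}(0,1/\sqrt{2})$.

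Finally I would form the conditional density as the ratio
\begin{align*}
f_{Z_2\,\vert\,Z_1}(y\,\vert\,x)=\frac{f_{Z_1,Z_2}(x,y)}{f_{Z_1}(x)}=\frac{\sqrt{2}}{\pi}\,e^{\sqrt{2}\,\vert x\vert}\,K_0\big(\sqrt{2(x^2+y^2)}\,\big),
\end{align*}
and match it to $f_{\rm GH}(\cdot;\lambda,\alpha,\delta)$ from Definition \ref{def-GH} with $\lambda=1/2$, $\alpha=\sqrt{2}$, $\delta=\vert x\vert$. Under this choice $K_{\lambda-1/2}=K_0$, the factor $(\sqrt{\delta^2+y^2}/\alpha)^{1/2-\lambda}$ has exponent $0$ and collapses to $1$, and using $K_{1/2}(\sqrt{2}\,\vert x\vert)=\sqrt{\pi/(2\sqrt{2}\,\vert x\vert)}\,e^{-\sqrt{2}\,\vert x\vert}$ the normalizing constant of the GH density simplifies exactly to $\frac{\sqrt{2}}{\pi}e^{\sqrt{2}\,\vert x\vert}$, establishing that $Z_2\mid(Z_1=x)\sim{\rm GH}(1/2,\sqrt{2},\vert x\vert)$.

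The main obstacle is the evaluation of the marginalizing integral $\int_0^\infty K_0(a\sqrt{b^2+y^2})\,{\rm d}y$; every other step is bookkeeping with the half-integer Bessel identity and the normalizing constant of Definition \ref{def-GH}. Compared with the hyperbolic case of Proposition \ref{cond-marg-Hyp}, where the analogous integral $\int_0^\infty e^{-a\sqrt{b^2+x^2}}\,{\rm d}x=bK_1(ab)$ was available directly, here the $K_0$ integrand forces the extra reduction to a $K_{1/2}$ integral (either via the table lookup or via the variance-mixture representation above).
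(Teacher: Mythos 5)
Your proposal is correct and follows essentially the same route as the paper: specialize \eqref{pdf-Zs} to $f_{Z_1,Z_2}(x,y)=\pi^{-1}K_0\big(\sqrt{2(x^2+y^2)}\,\big)$, marginalize in $y$ to get the ${\rm Laplace}(0,1/\sqrt{2})$ density, and then identify the ratio $f_{Z_1,Z_2}/f_{Z_1}$ with $f_{\rm GH}(\,\cdot\,;1/2,\sqrt{2},\vert x\vert)$ via the half-integer identity $K_{1/2}(z)=\sqrt{\pi/(2z)}\,e^{-z}$. The paper evaluates the marginalizing integral exactly by your ``alternative'' route --- inserting the representation $K_0(u)=\tfrac12\int_0^\infty t^{-1}\exp(-t-u^2/(4t))\,{\rm d}t$ and doing the Gaussian integral in $y$ --- rather than citing the closed-form table integral $\int_0^\infty K_0\big(a\sqrt{b^2+y^2}\,\big)\,{\rm d}y=\tfrac{\pi}{2a}e^{-ab}$, but these are the same computation and your parameter matching to Definition \ref{def-GH} agrees with the paper's.
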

\begin{proof}
	By \eqref{pdf-Zs} and using the definitions of $K_0$ and $Z_{g_c}$ in Table \ref{table:1}, we have
\begin{align}\label{eq-app1}
	f_{Z_1,Z_2}(x,y)
	=
	{1\over \pi}\, K_0\left(\sqrt{2(x^2+y^2)}\,\right)
	=
	{1\over 2\pi}\,
	\int_0^\infty {1\over t} \exp\left(-t-{x^2+y^2\over 2t}\right) {\rm d}t.
\end{align}

We then find the marginal density of $Z_1$ to be
\begin{align}\label{eq-app2}
	f_{Z_1}(x)
	=
\int_{-\infty}^{\infty}
f_{Z_1,Z_2}(x,y) \,{\rm d}y
		=
		{1\over\sqrt{2}}\,\exp(-\sqrt{2}\, \vert x\vert)
		=
		f_{\rm L}(x),
\end{align}
where $f_{\rm L}(x)=\exp(-\sqrt{2}\, \vert x\vert)/\sqrt{2}$ is the Laplace PDF with scale parameter $1/\sqrt{2}$;
that is, $Z_1\sim{\rm Laplace}(0,1/\sqrt{2})$. Similarly, we can show that $Z_2\sim{\rm Laplace}(0,1/\sqrt{2})$, as well.

On the other hand, by using \eqref{eq-app1}, \eqref{eq-app2} and the well-known identity $K_{1/2}(z)=\sqrt{\pi/(2z)} \exp(-z)$, the conditional PDF of $Z_2$, given $Z_1=x$, is obtained as
\begin{align*}
	f_{Z_2\,\vert\, Z_1}(y\,\vert\,x)
	=
	\dfrac{\displaystyle
			{1\over \pi}\,
	 K_0\left(\sqrt{2(x^2+y^2)}\,\right)
	}
{\displaystyle
	 {1\over \sqrt{2}}\exp(-\sqrt{2}\, \vert x\vert)
 }
&=
\dfrac
{
(\sqrt{2}/\vert x\vert)^{1/2}
}
{
\sqrt{2\pi} K_{1/2}(\sqrt{2} \vert x\vert)
}
\, K_0\left(\sqrt{2} \sqrt{x^2+y^2}\,\right)
\\[0,2cm]
&=
f_{\rm GH}(x;1/2,\sqrt{2},\vert x\vert).
\end{align*}
Then, from Definition \ref{def-GH}, we simply have $Z_2\,\vert\, (Z_1=x)\sim {\rm GH}(1/2,\sqrt{2},\vert x\vert)$.
\end{proof}

\begin{definition}\label{def-GH-1}
	We say that a random variable $X$ follows an univariate extended slash (ESL) distribution, denoted by $X\sim {\rm ESL}(a,q)$, if its PDF is given by
	\begin{align*}
	f_{\rm ESL}(x;a,q)=
	\dfrac{\displaystyle\int_{0}^1 t^q \phi(t a)\phi(t x)\, {\rm d}t}
	{\displaystyle \int_{0}^1 u^{q-1} \phi(u a)\, {\rm d}u},
	\quad -\infty<x<\infty,
	\end{align*}
	where $\phi$ denotes the PDF of the standard normal distribution.
	
	If we now choose $a = 0$, the classical slash (SL) PDF is obtained, given by
	\begin{align*}
	f_{\rm SL}(x;q)
	&=
	q \int_{0}^1 t^q \phi(t x)\, {\rm d}t,
	\quad -\infty<x<\infty
	\\[0,2cm]
	&=
	{q\, 2^{{q\over 2}-1}\over \sqrt{\pi}}\,
	\vert x\vert^{-(q+1)}\,
	\gamma\left({q+1\over 2},{x^2\over 2}\right).
	\end{align*}
	In this case, we denote it by $X\sim {\rm SL}(q)$.
\end{definition}

\begin{proposition}[Slash generator]\label{cond-marg-slash}
	Let  $\boldsymbol{Z}=(Z_1,Z_2)^\top$ be a random vector as in  Proposition \ref{Stochastic Representation}.
	If 
$g_c(x)= x^{-{(q+2)/ 2}} \gamma({(q+2)/ 2},{x/ 2})$,
then
the conditional distribution of $Z_{2}$, given $Z_{1}=x$, is ${\rm ESL}(x,q+1)$ and both of its unconditional distributions are ${\rm SL}(q)$.
\end{proposition}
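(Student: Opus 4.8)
The plan is to exploit the fact that the slash generator is a scale mixture of Gaussians, which links the joint density \eqref{pdf-Zs} directly to the integral representations defining the ${\rm SL}$ and ${\rm ESL}$ families in the Appendix. First I would insert the Table~\ref{table:1} values $Z_{g_c}={\pi\over q}\,2^{(2-q)/2}$ and $g_c(x)=x^{-(q+2)/2}\gamma((q+2)/2,x/2)$ into \eqref{pdf-Zs} to obtain
\[
f_{Z_1,Z_2}(x,y)={q\over\pi}\,2^{(q-2)/2}\,(x^2+y^2)^{-(q+2)/2}\,\gamma\!\left({q+2\over2},{x^2+y^2\over2}\right).
\]

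The key identity I would then establish is the scale-mixture representation
\[
f_{Z_1,Z_2}(x,y)=q\int_0^1 t^{q+1}\,\phi(tx)\,\phi(ty)\,{\rm d}t,
\]
where $\phi$ is the standard normal PDF. To verify it I would start from the right-hand side, write $\phi(tx)\phi(ty)=(2\pi)^{-1}\exp(-t^2(x^2+y^2)/2)$, set $s=x^2+y^2$, and apply the substitutions $v=t^2$ and then $w=vs/2$; the $t$-integral collapses to the lower incomplete gamma function $\gamma((q+2)/2,s/2)$, and collecting the remaining constants reproduces the factor $1/Z_{g_c}$ exactly. This step is the main obstacle, since it is where the incomplete-gamma bookkeeping and the matching of the normalizing constant take place; once it is in hand, the remainder of the argument is essentially reading off definitions.

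Next I would recover the marginals from the scale-mixture form. Using $\int_{-\infty}^\infty\phi(ty)\,{\rm d}y=1/t$ for $t>0$ together with Fubini,
\[
f_{Z_1}(x)=\int_{-\infty}^\infty f_{Z_1,Z_2}(x,y)\,{\rm d}y=q\int_0^1 t^{q+1}\phi(tx)\cdot{1\over t}\,{\rm d}t=q\int_0^1 t^{q}\phi(tx)\,{\rm d}t=f_{\rm SL}(x;q),
\]
which is precisely the classical slash PDF of Definition~\ref{def-GH-1}; hence $Z_1\sim{\rm SL}(q)$, and the same computation with the roles of $x$ and $y$ interchanged yields $Z_2\sim{\rm SL}(q)$.

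Finally I would form the conditional density as the ratio
\[
f_{Z_2\,\vert\,Z_1}(y\,\vert\,x)={f_{Z_1,Z_2}(x,y)\over f_{Z_1}(x)}={\int_0^1 t^{q+1}\phi(tx)\phi(ty)\,{\rm d}t\over \int_0^1 u^{q}\phi(ux)\,{\rm d}u}.
\]
Comparing this with the ${\rm ESL}$ density in Definition~\ref{def-GH-1} evaluated at $a=x$ and with second parameter $q+1$ — so that the numerator carries $t^{q+1}$ and the denominator carries $u^{q}=u^{(q+1)-1}$ — identifies the right-hand side as $f_{\rm ESL}(y;x,q+1)$, giving $Z_2\,\vert\,(Z_1=x)\sim{\rm ESL}(x,q+1)$ and completing the proof.
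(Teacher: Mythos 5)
Your proposal is correct and follows essentially the same route as the paper: both establish the scale-mixture representation $f_{Z_1,Z_2}(x,y)=q\int_0^1 t^{q+1}\phi(tx)\phi(ty)\,{\rm d}t$, integrate out $y$ using $\int_{-\infty}^{\infty}\phi(ty)\,{\rm d}y=1/t$ to obtain the ${\rm SL}(q)$ marginals, and then read off the conditional as the ratio matching $f_{\rm ESL}(y;x,q+1)$ in Definition~\ref{def-GH-1}. The only difference is that you verify the mixture identity explicitly via the substitutions $v=t^2$ and $w=vs/2$ (with the constants checking out against $Z_{g_c}$), whereas the paper simply asserts it in \eqref{eq-app3}.
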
	
\begin{proof}
By \eqref{pdf-Zs} and using the definition of  $Z_{g_c}$ in Table \ref{table:1}, the joint PDF of $Z_1$ and $Z_2$ is given by
\begin{align}\label{eq-app3}
f_{Z_1,Z_2}(x,y)
&=
{q \, 2^{{q\over 2}-1}\over \pi}\, 
(x^2+y^2)^{-{q+2\over 2}}\, \gamma\left({q+2\over 2},{x^2+y^2\over 2}\right)
\nonumber
\\[0,2cm]
&=
q \int_{0}^1 t^{q+1} \phi(t x)\phi(t y)\, {\rm d}t.
\end{align}
So, the marginal PDF of $Z_1$ is
\begin{align}\label{eq-app4}
f_{Z_1}(x)
=
\int_{-\infty}^{\infty}
f_{Z_1,Z_2}(x,y) \,{\rm d}y
&=
q \int_{0}^1 t^{q+1} \phi(t x)\left[ \int_{-\infty}^{\infty}\phi(t y)\,{\rm d}y\right] {\rm d}t
\nonumber
\\[0,2cm]
&=
q \int_{0}^1 t^q \phi(t x)\, {\rm d}t
=
	f_{\rm SL}(x;q),
\end{align}
which proves that (see Definition \ref{def-GH-1}), $Z_1\sim  {\rm SL}(q)$. Similarly, we can show that $Z_2\sim  {\rm SL}(q)$.

On the other hand, by \eqref{eq-app3} and \eqref{eq-app4}, the conditional PDF of $Z_2$, given $Z_1=x$, is obtained as
\begin{align*}
	f_{Z_2\,\vert\,Z_1}(y\,\vert\, x)
	=
	\dfrac{\displaystyle\int_{0}^1 t^{q+1} \phi(t x)\phi(t y)\, {\rm d}t}{\displaystyle\int_{0}^1 u^q \phi(u x)\, {\rm d}u}
	=
	f_{\rm ESL}(y;x,q+1).
\end{align*}
From Definition \ref{def-GH-1}, we then find that $Z_2\,\vert\,(Z_1=x)\sim{\rm ESL}(x,q+1)$.
\end{proof}	

\section{Data sets}

\begin{table}[ht]
\small
\centering
\caption{UEFA Champions League and 2022 FIFA World Cup data sets.}
\begin{tabular}{rrrrrrrrrrr}
  \hline
  &\multicolumn{2}{c}{UEFA} & & &\multicolumn{2}{c}{FIFA}\\
  & W1 & W2 & & &  W1 & W2 \\
  \hline
  1 & 0.289 & 0.222 & & & 0.888 & 0.541 \\
  2 & 0.700 & 0.200 & & & 0.815 & 0.474 \\
  3 & 0.211 & 0.211 & & & 0.907 & 0.624 \\
  4 & 0.733 & 0.944 & & & 0.891 & 0.606 \\
  5 & 0.444 & 0.444 & & & 0.827 & 0.517 \\
  6 & 0.544 & 0.544 & & & 0.898 & 0.557 \\
  7 & 0.089 & 0.089 & & & 0.856 & 0.462 \\
  8 & 0.767 & 0.789 & & & 0.861 & 0.618 \\
  9 & 0.433 & 0.433 & & & 0.890 & 0.603 \\
  10 & 0.911 & 0.533 & & & 0.860 & 0.477 \\
  11 & 0.800 & 0.800 & & & 0.920 & 0.646 \\
  12 & 0.733 & 0.689 & & & 0.894 & 0.587 \\
  13 & 0.278 & 0.100 & & & 0.913 & 0.648 \\
  14 & 0.456 & 0.033 & & & 0.849 & 0.471 \\
  15 & 0.178 & 0.833 & & & 0.781 & 0.427 \\
  16 & 0.200 & 0.200 & & & 0.828 & 0.442 \\
  17 & 0.244 & 0.156 & & & 0.864 & 0.581 \\
  18 & 0.467 & 0.467 & & & 0.820 & 0.527 \\
  19 & 0.022 & 0.022 & & & 0.846 & 0.526 \\
  20 & 0.400 & 0.578 & & & 0.879 & 0.601 \\
  21 & 0.378 & 0.378 & & & 0.860 & 0.481 \\
  22 & 0.589 & 0.433 & & & 0.885 & 0.616 \\
  23 & 0.600 & 0.078 & & & 0.862 & 0.592 \\
  24 & 0.567 & 0.311 & & & 0.769 & 0.463 \\
  25 & 0.844 & 0.711 & & & 0.845 & 0.495 \\
  26 & 0.711 & 0.167 & & & 0.846 & 0.489 \\
  27 & 0.289 & 0.533 & & & 0.931 & 0.751 \\
  28 & 0.178 & 0.178 & & & 0.863 & 0.555 \\
  29 & 0.489 & 0.144 & & & 0.856 & 0.447 \\
  30 & 0.278 & 0.156 & & & 0.879 & 0.569 \\
  31 & 0.611 & 0.122 & & & 0.812 & 0.613 \\
  32 & 0.544 & 0.544 & & & 0.841 & 0.594 \\
  33 & 0.267 & 0.267 & & & --    & -- \\
  34 & 0.489 & 0.333 & & & --    & -- \\
  35 & 0.467 & 0.033 & & & --    & -- \\
  36 & 0.300 & 0.522 & & & --    & -- \\
  37 & 0.311 & 0.311 & & & --    & -- \\
   \hline
\end{tabular}
\label{table:datasets}
\end{table}

\end{appendices}
\end{document}